\documentclass[onecolumn]{IEEEtran} 
\usepackage{amsthm,amsmath,amssymb,amsfonts,bm}
\usepackage{tikz}
\usepackage{booktabs,multirow,nicematrix}
\usepackage[table,xcdraw]{xcolor}
\usepackage{adjustbox}
\usepackage[shortlabels]{enumitem}
\usepackage[square, comma, sort&compress, numbers]{natbib}
\usepackage{float,graphicx}
\usepackage[colorlinks,linkcolor=blue]{hyperref} 
\newcommand{\rank}{{\mathrm{rank}}}
\newcommand{\wt}{{\mathrm{wt}}}
\newcommand{\hull}{{\mathrm{Hull}}}
\newcommand{\ehull}{{\mathrm{Hull_{E}}}}
\newcommand{\hhull}{{\mathrm{Hull_{H}}}}
\newcommand{\diag}{{\mathrm{diag}}}

\renewcommand{\eqref}[1]{\textcolor{blue}{(\ref{#1})}}

\newtheorem{theorem}{Theorem}[section]
\newtheorem{lemma}[theorem]{Lemma}
\newtheorem{proposition}[theorem]{Proposition}

\newtheorem{definition}[theorem]{Definition}
\newtheorem{example}[theorem]{Example}
\newtheorem{remark}[theorem]{Remark}

\makeatletter
\@addtoreset{equation}{section}
\makeatother

\begin{document}
	\title{The MDS or NMDS for Modified GRS codes with flexible hull dimensions and lengths}
	\author{Zhonghao Liang, Qunying Liao, Jun Zhang, Xiaoping Li
		\thanks{Corresponding author: Qunying Liao. Emails: liangzhongh0807@163.com; qunyingliao@sicnu.edu.cn; junz@cnu.edu.cn; lixiaoping.math@uestc.edu.cn.}
		\thanks{Zhonghao Liang and Qunying Liao are with College of Mathematical Sciences, Sichuan Normal University, Chengdu 610066, China.}
		\thanks{Jun Zhang is with School of Mathematical Sciences, Capital Normal University, Beijing 100048, China.}
		\thanks{Xiaoping Li is with School of Mathematical Science, University of Electronic Science and Technology of China, Chengdu, China.}
		\thanks{This paper is supported by National Natural Science Foundation of China (12471494) and Natural
			Science Foundation of Sichuan Province (2024NSFSC2051).}
		
	}
	\maketitle
	
	\begin{abstract}
Non-generalized Reed–Solomon (in short, non-GRS) type maximum distance separable (in short, MDS), near MDS (in short, NMDS), and linear complementary dual (in short, LCD) codes, as well as the hull of linear codes have interesting practical applications in cryptography and coding theory. In this paper, we focus on a class of non-GRS codes and its extended codes, i.e., modified generalized Reed–Solomon (MGRS) codes and extended MGRS (EMGRS) codes introduced by Wang et al. in 2026. Firstly, we prove that two classes of MGRS codes and EMGRS codes are either MDS or NMDS, derive the necessary and sufficient conditions for these codes to be NMDS, and then completely determine the weight distributions for one class of these NMDS MGRS or NMDS EMGRS codes. Secondly, we construct four classes of MGRS codes which are either Euclidean LCD codes or one-dimensional Euclidean hull codes. Thirdly, we constructively prove that there exist MGRS codes with flexible Hermitian hull dimensions and lengths. In additional, we illustrate the linearly inequivalence of NMDS MGRS codes and elliptic curve NMDS codes by Schur product. Finally, some corresponding examples are given.
	\end{abstract}
	
	\begin{IEEEkeywords}
Non-GRS codes; Modified GRS codes; LCD codes; The hull of linear codes
	\end{IEEEkeywords}  
	\section{Introduction}
	
	Let $\mathbb{F}_{q}$ be a finite field with $q$ elements, $\mathcal{C}$ and $\mathcal{C}^{\perp}$ be an $[n, k, d]$ linear code and its dual code over $\mathbb{F}_{q}$, respectively. Let $A_i$ denote the number of codewords of weight $i$ in $\mathcal{C}$. The polynomial $\sum\limits_{i=0}^{n}A_{i}z^{i}$ is called the weight enumerator of $\mathcal{C}$, and the sequence $(A_0, A_1, ...,A_n)$
	is referred to as the weight distribution of $\mathcal{C}$. Since the weight distribution contains some significant information including its error-correcting capability and error-detection probability, the weight distribution
	of linear codes has always been one of hot topics in theory and practice in recent years \cite{KaiextendNMDS,DingCSNMDS2020,ZhuSX2024NMDS,FanCLNMDS2024,DingYnonRS,HanDCNMDS2023,LiangZHNMDS12026,LiangZHNMDS22026,ZhouHYNMDS2025}. Since NMDS codes
	have important applications in finite projective geometries \cite{AgugliaNMDS,BartoliNMDS},
	designs \cite{DingCSNMDS2020,DingNMDSdesign} and secret sharing schemes \cite{ZhouNMDSscheme}.  And unlike MDS codes \cite{WHFECC}, NMDS codes with the same parameters may have different weight distributions. Therefore, constructing NMDS codes and deciding the weight distributions of NMDS codes are very interesting \cite{ZhuSX2024NMDS,DingCSNMDS2020,FanCLNMDS2024,ZhouYNMDS2025,HanDCNMDS2023,ZhouHYNMDS2025,LiangZHNMDS12026,LiangZHNMDS22026}. 
	
	For a linear code $\mathcal{C}$, the hull is defined by $\hull(\mathcal{C}) = \mathcal{C} \cap \mathcal{C}^{\perp}$.
	It is well-known that the value of $\dim(\hull(\mathcal{C}))$ plays a critical role in determining the computational complexity of algorithms for checking permutation equivalence between linear codes \cite{Sendrierpermutation}.
	Moreover, in quantum coding theory, the number of pre-shared entanglements between the encoder and decoder is closely related to the hull dimension of classical linear codes \cite{EAQECCHull}.
	Of particular interest is the case where the hull is trivial, i.e.,  $\hull(\mathcal{C}) = \{\mathbf{0}\}$, the corresponding code $\mathcal{C}$ is called a linear complementary dual (in short, LCD) code.
	LCD codes have important applications in lattices \cite{HouLCDlattices}, network coding \cite{Braun}, and multi-secret-sharing schemes \cite{AlahmadiLCDmultisecret}.
	These significant applications of the hull have motivated extensive research on computing $\dim(\hull(\mathcal{C}))$ and constructing LCD codes \cite{TangEHLCD,ChenHhull,LuoGJhull,LiCJhull,LuoJQhull,YueQhull,LiangLCDGRL}.
	
	In recent years, the construction of non-GRS type linear codes has attracted considerable attention due to that they can effectively resist the Sidelnikov-Shestakov attack and the Wieschebrink attack \cite{ChennonRSAG,LiangGRL,HuLPTGRS,LuoJQCTGRS,BAKRCTRS,ZhuSXFDGRS,DingYnonRS,ZhuSXTETGRS,ZhangJHRS,LiuMGRS,LiuHWnonRS,LiFnonRScyclic,JinLFnonRS2026,ZhuSXDRL,ZhuSXnonRSMDSNMDS,ZhuSXTCETGRS,LuoJQnonRScurves}. So far, numerous constructions of non-GRS
	MDS codes have been proposed by performing row or column transformations of the generator matrix of GRS codes, such as,
	\begin{itemize}
		\item {\bf Modify the entire row vector:} twisted GRS (in short, TGRS) codes \cite{BeelenTGRS}, A-TGRS codes \cite{ZhaoATGRS}, $(\mathcal{L},\mathcal{P})$-TGRS codes \cite{HuLPTGRS}. 
		\item {\bf Delete the entire row vector:} deleted GRS (in short, DGRS) codes \cite{HanDCDGRS}, generalized DGRS(in short, GDGRS) codes \cite{Ding2025nonGRS}, flexible GDGRS codes \cite{ZhuSXFDGRS}.
		
		\item {\bf Add the column vector:} Roth-Lempel(in short, RL) codes \cite{RL1989}, extended GRS codes \cite{MacWilliamsEGRS}, column TGRS codes \cite{LuoJQCTGRS}, extended RL codes \cite{WuRL3}, generalized RL codes \cite{LiangGRL}.
		
		\item {\bf Modify the entire row vector and delete the entire row vector simultaneously:} Roth-Lempel type DGRS(in short, RL-TGRS) codes \cite{ZhuSXDRL},
		
		\item {\bf Modify the entire row vector and add the column vector simultaneously:} row-column TGRS codes \cite{BAKRCTRS},   the twice-extended TGRS codes \cite{DingYnonRS,ZhuSXTETGRS}, extended RL-TGRS codes \cite{YanTJERLTGRS},   extended TGRS codes \cite{YangSDNONGRS}.

		\item {\bf Add the column vector and delete the entire row vector simultaneously:} extended DGRS codes \cite{ZhuSXnonRSMDSNMDS}, extended GDGRS codes \cite{Ding2025nonGRS}.
	\end{itemize}  
	\iffalse generalized Roth-Lempel codes \cite{LiangGRL}, $(\mathcal{L},\mathcal{P})$-twisted GRS codes \cite{HuLPTGRS}, column TGRS codes \cite{LuoJQCTGRS}, row-column TGRS codes \cite{BAKRCTRS}, deleted GRS codes \cite{ZhuSXFDGRS}, the twice-extended TGRS codes \cite{DingYnonRS,ZhuSXTETGRS}, and so on \cite{ZhangJHRS,LiuMGRS,LiuHWnonRS,ChenHnonRScyclic,JinLFnonRS2026,ZhuSXDRL,ZhuSXnonRSMDSNMDS,ZhuSXTCETGRS,LuoJQnonRScurves}.\fi 
	Most recently, by locally modifying column vectors, Wang, Liu and Luo \cite{LiuMGRS} introduced a class of non-GRS type linear codes and its extended codes called modified GRS (in short, MGRS) codes and extended MGRS (in short, EMGRS ) codes. Their generator matrices differ from those of the GRS codes in only one entry, and both classes of these codes are either MDS or AMDS. They also presented the necessary and sufficient conditions for these codes to be non-GRS MDS codes, and analyzed the interrelationships among several families of non-GRS MDS codes and MGRS codes.
	
	Motivated by the above, in this paper, for MGRS codes and EMGRS codes, we investigate the NMDS property, the weight distributions, the LCD property, and the hulls. Firstly, for two classes of MGRS codes and EMGRS codes, by proving that they are either MDS or NMDS, we present a necessary and sufficient condition for these codes to be NMDS. Meanwhile, we completely determine their weight distributions for a class of NMDS MGRS codes and NMDS EMGRS codes. Secondly, we present four classes of LCD MGRS codes and MGRS codes with one-dimensional Euclidean hull. Thirdly, we constructively prove that there exist MGRS codes with flexible Hermitian hull dimensions. Finally, we show that the MGRS codes are not linearly equivalent to any
	one-point algebraic geometry code $\mathcal{C}(P_1, . . . , P_n, mP_0, E)$ from an elliptic
	curve $E$ by Schur products.
	
	This paper is organized as follows. In Section \ref{sec2}, we recall some  necessary definitions and lemmas. In Section \ref{sec3} (resp. Section \ref{sec4}), we present the definition of the MGRS (resp. EMGRS) code, prove that two classes of MGRS(resp. EMGRS) codes are either MDS or NMDS, and completely determine the weight distribution for a class of NMDS MGRS(resp. EMGRS) codes. In Section \ref{sec5}, we construct four classes of MGRS codes, which are either LCD or one-dimensional Euclidean hull codes. In Section \ref{sec6}, we constructively prove that there exist MGRS codes with flexible Hermitian hull dimensions. In Section \ref{sec7}, we illustrate the linearly
	inequivalence of NMDS MGRS codes and a class of elliptic curve
	NMDS codes. 
	In Section \ref{sec8}, some examples are given. 
	In Section \ref{sec9}, we conclude the whole paper.
	\section{Preliminaries}\label{sec2}
	Throughout this paper, we fix the following notations.
	\begin{itemize} 
		\item $q$ is the odd prime power.
		\item $\mathbb{F}_{q}$(resp. $\mathbb{F}_{q^2}$) is the finite field with $q$(resp. $q^2$) elements and $\mathbb{F}_{q}^{*}=\mathbb{F}_{q}\backslash\left\{0\right\}$. 
		\item $\mathbb{F}_{q}[x]$ is the polynomial ring over $\mathbb{F}_{q}$, and $\mathbb{F}_{q}^{k}[x]=\left\{f(x)\in\mathbb{F}_{q}[x]|\deg(f(x))\leq k-1\right\}$.
		\item For any set $A$, $\# A$ denotes the number of elements in $A$. 
		\item $\mu(\cdot)$ denotes the Möbius function.
		\iffalse \item For any vector $\boldsymbol{\alpha}=\left(\alpha_{1},\ldots,\alpha_{n}\right)$ and the element $x\in\mathbb{F}_{q}$,  $x\boldsymbol{\alpha}$ denotes $\left(x\alpha_{1},\ldots,x\alpha_{n}\right)$.\fi
		\item $\mathbb{Z}_n$ denotes the ring of integers modulo $n$.
		\item $\gcd(a,b)$ denotes the greatest common divisor of two integers $a$ and $b$.
		\item For a matrix $\boldsymbol{A}$, $\boldsymbol{A}^{T_{e}}$ and $\boldsymbol{A}^{T_{h}}$ denotes the transposed
		matrix and the conjugate matrix of $\boldsymbol{A}$, respectively.
	\end{itemize} 
	
	In this section, we review some basic notations and known results about MDS codes, NMDS codes, LCD codes, the hull dimension, the subset problem, and symmetric polynomials, respectively.
	
	\subsection{Some notations and related results for linear codes}
	
	The following Lemma \ref{LCDMDSorNMDSequivalent} provides a necessary and sufficient
	conditions for a linear code to be LCD, or MDS, or NMDS, respectively.
	\begin{lemma}\label{LCDMDSorNMDSequivalent}{\rm( \cite{TangEHLCD}, Proposition 2; \cite{WHFECC}, Theorem 2.4.3; \cite{NMDSsmall}, Definition 2.3)} 	
		Let $\mathcal{C}$ be an $[n, k]$ linear code over $\mathbb{F}_{q}$ with $k\geq 1$. Suppose that  $\boldsymbol{G}$ is the 
		generator matrix of $\mathcal{C}$. Then, the following three statements are true,
		
		$(1)$ $\mathcal{C}$ is LCD if and only if $\rank\left(\boldsymbol{G}\\
		\boldsymbol{G}^{T_{e}}\right)=k$;
		
		$(2)$ $\mathcal{C}$ is $\mathrm{MDS}$ if and only if any $k$ columns of $\boldsymbol{G}$ are $\mathbb{F}_{q}$-linearly independent;	
		
		$(3)$ $\mathcal{C}$ is NMDS if and only if the following three conditions hold simultaneously,
		
		\ \ $(i)$ any $k-1$ columns of $\boldsymbol{G}$ are $\mathbb{F}_{q}$-linearly independent;
		
		\ \ $(ii)$ there exists $k$ columns of $\boldsymbol{G}$ which are $\mathbb{F}_{q}$-linearly dependent; 
		
		\ \ $(iii)$ Any $k+1$ columns of $\boldsymbol{G}$ are full rank.
	\end{lemma}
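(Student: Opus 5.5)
The three statements are standard, so the plan is to prove each directly from the defining properties. Statement $(1)$ uses the description of $\mathcal{C}^{\perp}$ as a kernel. Statements $(2)$ and $(3)$ use the usual correspondence between linear dependencies among columns of a generator (or parity-check) matrix and low-weight codewords.

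For $(1)$, we note that the statement should be read as $\rank(\boldsymbol{G}\boldsymbol{G}^{T_{e}})=k$, where $\boldsymbol{G}\boldsymbol{G}^{T_e}$ is the $k\times k$ Gram matrix. Every codeword has the form $\boldsymbol{x}\boldsymbol{G}$ with $\boldsymbol{x}\in\mathbb{F}_q^k$. Such a codeword lies in $\mathcal{C}^{\perp}$ if and only if $\boldsymbol{x}\boldsymbol{G}\boldsymbol{G}^{T_e}=\boldsymbol{0}$. Since $\boldsymbol{G}$ has full row rank, $\boldsymbol{x}\mapsto\boldsymbol{x}\boldsymbol{G}$ is injective. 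Hence $\dim\hull(\mathcal{C})=k-\rank(\boldsymbol{G}\boldsymbol{G}^{T_e})$, and $\hull(\mathcal{C})=\{\boldsymbol 0\}$ exactly when this rank equals $k$.

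For $(2)$, a nonzero codeword $\boldsymbol{x}\boldsymbol{G}$ vanishes on a coordinate set $S$ if and only if $\boldsymbol{x}$ is orthogonal to the columns indexed by $S$. We take $\#S=k$. The columns in $S$ are independent, i.e. form an invertible $k\times k$ submatrix, if and only if no nonzero codeword vanishes on $S$. Therefore every $k$ columns are independent if and only if every nonzero codeword has at most $k-1$ zeros, i.e. $d\ge n-k+1$. Combined with the Singleton bound, this is exactly the MDS property.

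For $(3)$, we use the definition of NMDS: $d(\mathcal{C})=n-k$ and $d(\mathcal{C}^{\perp})=k$. The plan is to translate each of the three conditions.

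\emph{Condition (ii).} Arguing as in $(2)$, this holds if and only if some nonzero codeword has at least $k$ zeros, i.e. $d\le n-k$.

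\emph{Condition (iii).} Every $k+1$ columns have rank $k$ if and only if no nonzero codeword vanishes on $k+1$ coordinates, i.e. $d\ge n-k$. (We read ``full rank'' here as rank $k$.)

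\emph{Condition (i).} Since $\boldsymbol{G}$ is a parity-check matrix of $\mathcal{C}^{\perp}$, a dependency among $j$ columns of $\boldsymbol{G}$ is exactly a codeword of $\mathcal{C}^{\perp}$ of weight at most $j$. So (i) holds if and only if $d(\mathcal{C}^{\perp})\ge k$. By the Singleton bound for $\mathcal{C}^{\perp}$ we have $d^{\perp}\le k+1$. On the other hand, (ii) gives a dependency among $k$ columns, so $d^{\perp}\le k$, and together with (i) this forces $d^{\perp}=k$.

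Combining these: (ii) and (iii) together give $d=n-k$, and (i) together with (ii) gives $d^{\perp}=k$. This proves that the three conditions imply NMDS. For the converse, NMDS gives $d^{\perp}=k$, hence (i), and $d=n-k$ gives (ii) and (iii).

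The only real subtlety is this bookkeeping between the dual-distance condition (i) and the primal conditions (ii), (iii). No step is a genuine obstacle, since the statement is a compilation of cited results.
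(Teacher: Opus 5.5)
Your proof is correct, and it supplies something the paper does not: the paper gives no argument for this lemma, only citations, and it even attributes (3) to a \emph{definition} in its source. You instead derive all three statements from one correspondence. A nonzero codeword $\boldsymbol{x}\boldsymbol{G}$ vanishes on a coordinate set $S$ exactly when the nonzero vector $\boldsymbol{x}$ annihilates the columns of $\boldsymbol{G}$ indexed by $S$. Dually, linear dependencies among columns of $\boldsymbol{G}$ are exactly the nonzero codewords of $\mathcal{C}^{\perp}$.

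This buys two things the citations leave implicit.

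First, your argument for (1) actually proves $\dim\hull(\mathcal{C})=k-\rank(\boldsymbol{G}\boldsymbol{G}^{T_e})$. Since $\hull(\mathcal{C})=\hull(\mathcal{C}^{\perp})$, this is the Euclidean part of Lemma \ref{EHhull}.

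Second, in (3) you show that conditions (i)--(iii) are equivalent to $d=n-k$ together with $d^{\perp}=k$. The paper needs exactly this bridge. It verifies NMDS through (i)--(iii) in Propositions \ref{MDSNMDSMGRS1} and \ref{MDSNMDSEMGRS1}, but then speaks of $[n,k,n-k]$ codes and counts codewords of minimum weight $n-k$.

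One minor point: the Singleton bound $d^{\perp}\le k+1$ you invoke is not needed, since (i) and (ii) already force $d^{\perp}=k$. You are right to read the stray line break in (1) as the Gram matrix $\boldsymbol{G}\boldsymbol{G}^{T_e}$, and ``full rank'' in (iii) as rank $k$.
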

	
	For an $\left[n,k,d\right]$ linear code, the following Lemma \ref{EHhull} provides a method for calculating the dimension of the Euclidean hull or the Hermitian hull.
	
	\begin{lemma}\label{EHhull}\rm ( \cite{EAQECCHull}, Propositions 3.1-3.2)
		Let $\mathcal{C}_{1}$ and $\mathcal{C}_{2}$ be an $\left[n,k,d\right]_{q}$ linear code with the generator matrix $\boldsymbol{G}_{1}$ and an $\left[n,k,d\right]_{q^2}$ linear code with the generator matrix $\boldsymbol{G}_{2}$ respectively, then, we have
		$$\rank\left(\boldsymbol{G}\boldsymbol{G}^{T_{e}}\right)=k-\dim\left(\mathrm{Hull}_{E}\left(\mathcal{C}^{\perp_{E}}\right)\right)$$and 
		$$\rank\left(\boldsymbol{G}\boldsymbol{G}^{T_{h}}\right)=k-\dim\left(\mathrm{Hull}_{H}\left(\mathcal{C}^{\perp_{H}}\right)\right).$$
	\end{lemma}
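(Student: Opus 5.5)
The plan is the standard linear-algebra argument of \cite{EAQECCHull}, written for the Euclidean case; the Hermitian case is the same after replacing $T_e$ by $T_h$. I read the statement with $\boldsymbol{G}=\boldsymbol{G}_1$, $\mathcal{C}=\mathcal{C}_1$ for the first identity and $\boldsymbol{G}=\boldsymbol{G}_2$, $\mathcal{C}=\mathcal{C}_2$ for the second. Since $\boldsymbol{G}\boldsymbol{G}^{T_e}$ is a $k\times k$ matrix, rank--nullity gives
$$\rank\left(\boldsymbol{G}\boldsymbol{G}^{T_e}\right)=k-\dim\ker\left(\boldsymbol{G}\boldsymbol{G}^{T_e}\right),$$
where the kernel is taken as left kernel: row vectors $\mathbf{x}\in\mathbb{F}_q^k$ with $\mathbf{x}\boldsymbol{G}\boldsymbol{G}^{T_e}=\mathbf{0}$. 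It therefore suffices to identify this kernel, via a linear isomorphism, with the hull of the code.

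\textbf{Step 1 (set up the map).} Consider $\phi:\mathbf{x}\mapsto \mathbf{x}\boldsymbol{G}$. Because $\boldsymbol{G}$ has full row rank $k$, $\phi$ is an isomorphism from $\mathbb{F}_q^k$ onto $\mathcal{C}$.

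\textbf{Step 2 (kernel equals hull).} A vector $\mathbf{c}=\mathbf{x}\boldsymbol{G}$ lies in $\mathcal{C}^{\perp_E}$ exactly when it is orthogonal to every row of $\boldsymbol{G}$, that is, when $\mathbf{x}\boldsymbol{G}\boldsymbol{G}^{T_e}=\mathbf{0}$. So $\phi$ maps the left kernel of $\boldsymbol{G}\boldsymbol{G}^{T_e}$ isomorphically onto $\mathcal{C}\cap\mathcal{C}^{\perp_E}=\mathrm{Hull}_E(\mathcal{C})$, and hence
$$\rank\left(\boldsymbol{G}\boldsymbol{G}^{T_e}\right)=k-\dim\mathrm{Hull}_E(\mathcal{C}).$$

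\textbf{Step 3 (match the stated form).} Since $(\mathcal{C}^{\perp_E})^{\perp_E}=\mathcal{C}$, we have $\mathrm{Hull}_E(\mathcal{C}^{\perp_E})=\mathcal{C}^{\perp_E}\cap\mathcal{C}=\mathrm{Hull}_E(\mathcal{C})$, which is exactly the expression in the statement.

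\textbf{Hermitian case.} Over $\mathbb{F}_{q^2}$, with $T_h$ the conjugate transpose (entries raised to the $q$-th power), repeat Steps 1--2. A vector $\mathbf{x}\boldsymbol{G}$ is Hermitian-orthogonal to all rows of $\boldsymbol{G}$ if and only if $\mathbf{x}\boldsymbol{G}\boldsymbol{G}^{T_h}=\mathbf{0}$. The map $\phi$ is still $\mathbb{F}_{q^2}$-linear, because the conjugation falls only on $\boldsymbol{G}$ and not on $\mathbf{x}$, so rank--nullity applies as before. Finally, $(\mathcal{C}^{\perp_H})^{\perp_H}=\mathcal{C}$ because the Hermitian form is nondegenerate, so Step 3 goes through unchanged.

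\textbf{Main obstacle.} There is no substantive difficulty; the argument is bookkeeping. The points to get right are the side on which the kernel is taken and the placement of the conjugation in the Hermitian case. With the form $\langle \mathbf{u},\mathbf{v}\rangle_H=\sum u_iv_i^q$, putting the conjugate on $\boldsymbol{G}^{T_h}$ is what keeps the kernel $\mathbb{F}_{q^2}$-linear.
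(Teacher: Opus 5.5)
Your proof is correct: the map $\mathbf{x}\mapsto\mathbf{x}\boldsymbol{G}$ identifies the left kernel of $\boldsymbol{G}\boldsymbol{G}^{T_e}$ (resp.\ $\boldsymbol{G}\boldsymbol{G}^{T_h}$) with $\mathcal{C}\cap\mathcal{C}^{\perp_E}$ (resp.\ $\mathcal{C}\cap\mathcal{C}^{\perp_H}$), and biduality gives $\mathrm{Hull}(\mathcal{C}^{\perp})=\mathrm{Hull}(\mathcal{C})$, which matches the form stated in the lemma. The paper gives no proof of its own here and simply cites Propositions 3.1--3.2 of \cite{EAQECCHull}; your rank--nullity argument is the standard proof of that cited result.
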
 
	
	The following Lemmas \ref{Fq*sum}-\ref{kq-1mutaking} are crucial for constructing Euclidean LCD or one-dimensional  Euclidean hull MGRS codes in Section \ref{sec5}. For the convenience,  we fix $\mathbb{F}_{q}^{*}=\langle\gamma\rangle$, $k\mid q-1$ and $\alpha_{i}=\gamma^{\frac{q-1}{k}i}$.
	\begin{lemma}\label{Fq*sum}\rm (\cite{TangEHLCD})
		For any integer $t$ and $\beta\in\mathbb{F}_{q}^{*}$, we have
		$$\sum\limits_{i=1}^{s}\left(\beta\alpha_{i}\right)^{t}=\begin{cases}
			\beta^{t}s,&\text{if}\ s\mid t;\\
			0,&\text{otherwise}.
		\end{cases}$$ 
	\end{lemma}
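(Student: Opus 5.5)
The plan is to factor out $\beta^{t}$ and reduce the claim to a geometric series over a cyclic group of roots of unity. I read the $s$ in the statement as the parameter $k$ fixed just before it (so $s\mid q-1$ and $\alpha_{i}=\gamma^{\frac{q-1}{s}i}$), and write $\omega=\gamma^{\frac{q-1}{s}}$.

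First I note that $\omega$ has multiplicative order exactly $s$. Indeed, $\gamma$ generates $\mathbb{F}_{q}^{*}$ and so has order $q-1$. Hence $\omega^{m}=1$ if and only if $(q-1)\mid \frac{q-1}{s}m$, which holds if and only if $s\mid m$. Since $\beta\neq 0$ and every $\alpha_{i}\neq 0$, negative exponents $t$ make sense in $\mathbb{F}_{q}^{*}$, and
$$\sum_{i=1}^{s}(\beta\alpha_{i})^{t}=\beta^{t}\sum_{i=1}^{s}\omega^{ti}=\beta^{t}\sum_{i=1}^{s}\left(\omega^{t}\right)^{i}.$$

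Next I split into two cases according to whether $\omega^{t}=1$; by the order computation this is exactly the condition $s\mid t$.

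If $s\mid t$, every summand equals $1$. The sum is then $s$ times the identity of $\mathbb{F}_{q}$, so the total is $\beta^{t}s$.

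If $s\nmid t$, then $\rho:=\omega^{t}\neq 1$ while $\rho^{s}=(\omega^{s})^{t}=1$. The geometric series formula gives
$$\sum_{i=1}^{s}\rho^{i}=\rho\,\frac{\rho^{s}-1}{\rho-1}=0,$$
so the whole sum vanishes.

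There is no real obstacle here. The only points needing care are that the argument works for all integers $t$, including negative ones (the exponent calculus holds in the group $\mathbb{F}_{q}^{*}$), and that "$s$" on the right-hand side means $s\cdot 1\in\mathbb{F}_{q}$. That element is nonzero because $s\mid q-1$ forces $\gcd(s,p)=1$, although nonvanishing is not needed for the identity itself.
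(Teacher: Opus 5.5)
Your proof is correct. The paper does not prove this lemma itself (it is quoted from \cite{TangEHLCD}), and the standard argument is the same as yours: reduce to $\beta^{t}\sum_{i=1}^{s}(\omega^{t})^{i}$ with $\omega=\gamma^{\frac{q-1}{s}}$ of order $s$, then split on whether $\omega^{t}=1$. Reading $s$ as the fixed divisor $k$ of $q-1$ is also right, since that is how the lemma is applied in the proof of Theorem~\ref{EHullMGRS1}.
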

	
	\begin{lemma}\label{kq-1sttaking}\rm (\cite{LiangLCDGRL}, Lemma 3.1)
		If $q-1$, $k$, $s$ and $t$ satisfy $\frac{q-1}{k}\nmid s-t$ and
		$$v_{2}\left(s-t\right)\neq v_{2}(q-1)-v_{2}(k)-1,$$  then any two components of the vector $\boldsymbol{\alpha}=\left(\gamma^{s}\alpha_{1},...,\gamma^{s}\alpha_{k},\gamma^{t}\alpha_{1},...,\gamma^{t}\alpha_{k}\right)$ are distinct over $\mathbb{F}_{q}$ and $\gamma^{sk}+\gamma^{t k}\in\mathbb{F}_{q}^{*}$. 
	\end{lemma}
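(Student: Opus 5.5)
The plan is to write every component of $\boldsymbol{\alpha}$ as a power of the fixed generator $\gamma$ and reduce both claims to congruences modulo $q-1$, the order of $\gamma$. Since $\alpha_i=\gamma^{\frac{q-1}{k}i}$, the components are $\gamma^{s+\frac{q-1}{k}i}$ and $\gamma^{t+\frac{q-1}{k}j}$ for $1\le i,j\le k$. Note first that $\frac{q-1}{k}\mid 0$, so the hypothesis $\frac{q-1}{k}\nmid s-t$ already forces $s\neq t$.

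\emph{Distinctness.} I will check three kinds of pairs.

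First, take two components in the same block, say $\gamma^{s}\alpha_i$ and $\gamma^{s}\alpha_{i'}$ with $i\neq i'$. They coincide only if $\frac{q-1}{k}(i-i')\equiv 0 \pmod{q-1}$. This is equivalent to $k\mid i-i'$, which is impossible since $1\le i,i'\le k$.

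Second, the same argument applies verbatim to the block indexed by $t$.

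Third, take components from different blocks. If $\gamma^{s}\alpha_i=\gamma^{t}\alpha_j$, then $s-t\equiv \frac{q-1}{k}(j-i)\pmod{q-1}$. Because $\frac{q-1}{k}$ divides $q-1$, this gives $\frac{q-1}{k}\mid s-t$, contradicting the first hypothesis.

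\emph{Nonvanishing of $\gamma^{sk}+\gamma^{tk}$.} Since $\gamma^{tk}\neq 0$, the sum vanishes exactly when $\gamma^{k(s-t)}=-1$. Because $q$ is odd, $-1=\gamma^{\frac{q-1}{2}}$. So the sum vanishes if and only if $k(s-t)\equiv \frac{q-1}{2}\pmod{q-1}$.

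This congruence means $k(s-t)=\frac{q-1}{2}m$ for some odd integer $m$, possibly negative. Taking $2$-adic valuations gives $v_2(k)+v_2(s-t)=v_2(q-1)-1$, which is legitimate because $s-t\neq 0$. Hence $v_2(s-t)=v_2(q-1)-v_2(k)-1$, contradicting the second hypothesis. Therefore $\gamma^{sk}+\gamma^{tk}\in\mathbb{F}_q^{*}$.

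The only point that needs care is this last valuation step. The converse direction is not needed, so it suffices to show that vanishing forces the odd multiple $m$, and that $s-t\neq 0$ so the valuation is defined. Everything else is routine arithmetic in the cyclic group $\langle\gamma\rangle$.
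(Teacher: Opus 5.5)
Your proof is correct and complete. The hypothesis $\frac{q-1}{k}\nmid s-t$ excludes every cross-block coincidence (including $i=j$), within-block distinctness holds because $k\nmid i-i'$, and a vanishing sum would force $k(s-t)$ to be an odd multiple of $\frac{q-1}{2}$, i.e.\ $v_2(s-t)=v_2(q-1)-v_2(k)-1$.

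The paper gives no proof of its own here, since it quotes Lemma 3.1 of \cite{LiangLCDGRL}. Your reduction to congruences modulo $\mathrm{ord}(\gamma)=q-1$ is the same method the paper uses for the analogous Lemma \ref{kq-1mutaking}. There, the role of your identification $-1=\gamma^{(q-1)/2}$ is played by the factorization $(\gamma^{k}-1)\sum_{i=0}^{2}\gamma^{(\mu+i)k}=\gamma^{\mu k}(\gamma^{3k}-1)$.
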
	
	
	\begin{lemma}\label{kq-1mutaking}
		If $\gcd(3k,q)=1$, $q-1\notin\left\{k,2k,3k\right\}$ and the vector
		$$\boldsymbol{\alpha}=\left(\gamma^{\mu}\alpha_{1},\ldots,\gamma^{\mu}\alpha_{k},\gamma^{\mu+1}\alpha_{1},\ldots,\gamma^{\mu+1}\alpha_{k},\gamma^{\mu+2}\alpha_{1},\ldots,\gamma^{\mu+2}\alpha_{k}\right),1\leq\mu\leq q-1,$$
		then any two components of $\boldsymbol{\alpha}$ are distinct over $\mathbb{F}_{q}$ and $k\sum\limits_{i=0}^{2}\gamma^{(\mu+i)k}\in\mathbb{F}_{q}^{*}$. 
	\end{lemma}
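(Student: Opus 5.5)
We argue directly from the explicit form of $\boldsymbol{\alpha}$. Set $m=\frac{q-1}{k}$, an integer because $k\mid q-1$, so that $\alpha_i=\gamma^{mi}$ and every component of $\boldsymbol{\alpha}$ has the form $\gamma^{\mu+j+mi}$ with $j\in\{0,1,2\}$ and $1\le i\le k$. Since $\gamma$ generates $\mathbb{F}_{q}^{*}$, two elements $\gamma^{a},\gamma^{b}$ coincide if and only if $a\equiv b \pmod{q-1}$. The hypothesis $q-1\notin\{k,2k,3k\}$ says exactly that $m\notin\{1,2,3\}$. We use $m\neq 1,2$ for distinctness and $m\neq 3$ for the sum.

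\emph{Distinctness.} Suppose $\gamma^{\mu+j+mi}=\gamma^{\mu+j'+mi'}$. Then $j-j'\equiv m(i'-i)\pmod{q-1}$. Because $m\mid q-1$, reducing modulo $m$ gives $m\mid j-j'$.

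If $j=j'$, we get $m(i'-i)\equiv 0 \pmod{mk}$, hence $k\mid i'-i$, so $i=i'$ since $1\le i,i'\le k$.

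If $j\neq j'$, then $j-j'\in\{\pm1,\pm2\}$, so $m\mid j-j'$ forces $m\in\{1,2\}$. This is excluded by $q-1\neq k,2k$. Hence all $3k$ components are pairwise distinct.

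\emph{Nonvanishing of the sum.} Write
$$k\sum_{i=0}^{2}\gamma^{(\mu+i)k}=k\,\gamma^{\mu k}\,(1+x+x^{2}),\qquad x=\gamma^{k}.$$
Since $\gcd(3k,q)=1$, the characteristic of $\mathbb{F}_{q}$ does not divide $k$, so $k\neq 0$ in $\mathbb{F}_{q}$. Also $\gamma^{\mu k}\neq 0$. It therefore suffices to show $1+x+x^{2}\neq 0$.

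Suppose instead $1+x+x^{2}=0$. Then $x^{3}-1=(x-1)(1+x+x^{2})=0$. Moreover $x\neq 1$, since $x=1$ would give $1+x+x^{2}=3$, and $3\neq 0$ in $\mathbb{F}_{q}$ because $\gcd(3,q)=1$. So $x$ has multiplicative order exactly $3$.

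On the other hand, $x=\gamma^{k}$ with $\gamma$ of order $q-1$ and $k\mid q-1$, so the order of $x$ is $\frac{q-1}{k}=m$. This gives $m=3$, i.e. $q-1=3k$, contradicting the hypothesis. Hence the sum lies in $\mathbb{F}_{q}^{*}$.

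Neither step is hard. The only point needing care is in the distinctness argument: the congruence lives modulo $q-1$ rather than over the integers, so one must reduce modulo $m$ to obtain $m\mid j-j'$. Note also that $\gcd(3,q)=1$ is what excludes the degenerate case $x=1$.
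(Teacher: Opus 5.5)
Your proof is correct and follows essentially the same route as the paper. For distinctness, the paper also reduces the exponent congruence modulo $\frac{q-1}{k}$ and excludes $q-1\in\{k,2k\}$. For the sum, it also rests on $x^{3}-1=(x-1)(1+x+x^{2})$ with $x=\gamma^{k}$. The only difference is cosmetic: the paper excludes $k=q-1$ and divides by $\gamma^{k}-1$, while you argue via the multiplicative order of $\gamma^{k}$ and dispose of $x=1$ using $3\neq 0$.
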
	
	{\bf Proof.} We only need to prove that the following two statements are true,
	
	$(1)$ for any $1\leq i\neq j\leq k\mid q-1$, $\gamma^{\mu}\alpha_{i}\ne \gamma^{\mu+1}\alpha_{j},$ $\gamma^{\mu}\alpha_{i}\ne \gamma^{\mu+2}\alpha_{j},$ and  $\gamma^{\mu+1}\alpha_{i}\ne \gamma^{\mu+2}\alpha_{j}$;
	
	$(2)$ $k\sum\limits_{i=0}^{2}\gamma^{(\mu+i)k}\in\mathbb{F}_{q}^{*}.$
	
	{\bf For (1).} In the similar proof as that for Lemma $\ref{kq-1sttaking}$, we know that for any $1\leq i\neq j\leq k\mid q-1$, $$\gamma^{\mu}\alpha_{i}\ne \gamma^{\mu+1}\alpha_{j}\Longleftrightarrow\frac{q-1}{k}\nmid 1\Longleftrightarrow q-1\neq k,$$ $$\gamma^{\mu}\alpha_{i}\ne \gamma^{\mu+2}\alpha_{j}\Longleftrightarrow\frac{q-1}{k}\nmid 2\Longleftrightarrow q-1\neq k,2k,$$   $$\gamma^{\mu+1}\alpha_{i}\ne \gamma^{\mu+2}\alpha_{j}\Longleftrightarrow\frac{q-1}{k}\nmid 1\Longleftrightarrow q-1\neq k,$$
	i.e., the statement {\bf (1)} holds if and only if $q-1\notin \left\{k,2k\right\}$.
	
	{\bf For (2).} Note that $2\leq k\mid\mathrm{ord}\left(\gamma\right)=q-1$, and so $\gamma^{k}-1\neq 0$ if and only if $k\neq q-1.$ 
	By $\gcd(3k,q)=1$ and $$\left(\gamma^{k}-1\right)\sum\limits_{i=0}^{2}\gamma^{(\mu+i)k}=\gamma^{\mu k}\left(1+\gamma^{k}+\gamma^{2k}\right)\left(\gamma^{k}-1\right)=\gamma^{\mu k}\left(\gamma^{3k}-1\right),$$ we know that $k\sum\limits_{i=0}^{2}\gamma^{(\mu+i)k}\in \mathbb{F}_{q}^{*}$ if and only if $\gamma^{3k}-1\in \mathbb{F}_{q}^{*}$ and $k\neq q-1$,  i.e., $\mathrm{ord}\left(\gamma\right)=q-1\nmid 3k$ and $k\neq q-1$, namely, $\frac{q-1}{k}\nmid 3$ and $k\neq q-1$, it means that the statement {\bf (2)} holds if and only if $ q-1\notin\left\{k,3k\right\}.$
	
	From the above discussions, Lemma $\ref{kq-1mutaking}$ is immediately.
	
	$\hfill\Box$
	
	\subsection{The subset sum problem and Symmetric Polynomials}
	
	Fistly, we recall the definition of the subset sum problem over finite abelian groups.
	
	For an additively finite abelian group $G$ of size $n$, given $k\in\mathbb{Z}^{+}$, $x\in G$ and a subset $D\subseteq G$ with $|D|\geq k$, to determine the value of
	$$N\left(k,x,D\right)=\#\left\{T\subseteq D: \# T=k, \sum\limits_{t\in T}t=x\right\}$$
	is the subset sum problem over the finite abelian group $G$. 
	
	Especially, for $D=\mathbb{Z}_{n}$, Li and Wan \cite{abeliangroups} gave a formula of $N\left(k,x,D\right)$ as the following  
	\begin{lemma}\label{Znsubsetsum}{\rm( \cite{abeliangroups},Corollary 1.2)}
		Let $x\in \mathbb{Z}_{n}$, $k \in \mathbb{Z}$ with $1 \leqslant k \leqslant n$. Then we have
		$$N(k,x,\mathbb{Z}_{n}) = \frac{1}{n} \sum_{r\mid \gcd(n,k)}(-1)^{k+\frac{k}{r}}\binom{\frac{n}{s}}{\frac{k}{s}}\sum_{d\mid \gcd(r,x)} \mu\left(\frac{r}{d}\right)d.$$
	\end{lemma}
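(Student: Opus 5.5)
The plan is to count subsets with additive characters of $\mathbb{Z}_n$ and a generating function in an auxiliary variable $y$. I read the binomial coefficient in the statement as $\binom{n/r}{k/r}$; the symbol $s$ there appears to be a typo for the summation index $r$, and the argument below produces exactly $\binom{n/r}{k/r}$.

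\textbf{Step 1: character formula.} Write $\zeta_n=e^{2\pi i/n}$ and $\psi_j(a)=\zeta_n^{ja}$ for $j\in\mathbb{Z}_n$. Orthogonality gives $\frac1n\sum_{j\in\mathbb{Z}_n}\psi_j(s-x)=[s=x]$. Summing this over all $k$-subsets $T\subseteq\mathbb{Z}_n$ with $s=\sum_{t\in T}t$, and using that the $k$-th elementary symmetric function of the values $\psi_j(a)$ is a coefficient of a product, I get
$$N(k,x,\mathbb{Z}_n)=\frac1n\sum_{j\in\mathbb{Z}_n}\psi_j(-x)\,[y^k]\prod_{a\in\mathbb{Z}_n}\bigl(1+y\,\psi_j(a)\bigr).$$

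\textbf{Step 2: evaluate the inner product.} Let $r=n/\gcd(j,n)$ be the order of $\psi_j$. As $a$ runs over $\mathbb{Z}_n$, the value $\psi_j(a)$ runs over the $r$-th roots of unity, each exactly $n/r$ times. The identity $\prod_{\omega^r=1}(1+y\omega)=1-(-y)^r$ then gives
$$\prod_{a\in\mathbb{Z}_n}\bigl(1+y\,\psi_j(a)\bigr)=\bigl(1-(-y)^r\bigr)^{n/r}=\sum_{i\ge0}\binom{n/r}{i}(-1)^{i+ri}y^{ri}.$$
Hence the coefficient of $y^k$ is $0$ unless $r\mid k$. When $r\mid k$ it equals $\binom{n/r}{k/r}(-1)^{k+k/r}$. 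Since $r\mid n$ automatically, only divisors $r\mid\gcd(n,k)$ contribute, and the contribution depends on $j$ only through $r$.

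\textbf{Step 3: group by order.} Collecting the $j$ with a fixed order $r$, the remaining factor is $\sum_{\mathrm{ord}(\psi_j)=r}\psi_j(-x)$. This is the Ramanujan sum $c_r(x)$, the sum of $\xi^{x}$ over the primitive $r$-th roots of unity $\xi$ (the sign of $x$ is irrelevant because $c_r$ is real).

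\textbf{Step 4: the Ramanujan sum.} The main point to verify carefully is the classical identity
$$c_r(x)=\sum_{d\mid\gcd(r,x)}\mu\!\left(\frac rd\right)d .$$
To prove it, Möbius-invert the relation $\sum_{e\mid r}c_e(x)=\sum_{\omega^r=1}\omega^x$, whose right-hand side equals $r$ if $r\mid x$ and $0$ otherwise. Substituting Steps 2–4 into Step 1 yields the stated formula.
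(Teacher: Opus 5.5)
Your proof is correct, and you are right that $\binom{n/s}{k/s}$ in the statement should read $\binom{n/r}{k/r}$. The statement mixes its summation index $r$ with the letter $s$ used in the source.

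The paper proves nothing here: it imports the formula from Li and Wan \cite{abeliangroups}. There it is the cyclic case of an explicit formula for $N(k,b,G)$ over an arbitrary finite abelian group $G$, obtained from their distinct-coordinate sieve. That sieve counts ordered $k$-tuples with pairwise distinct entries by a signed sum over permutations grouped by cycle type, and then evaluates the resulting character sums.

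Your route avoids the sieve altogether. Because $D$ is the whole group, $\prod_{a\in\mathbb{Z}_n}(1+y\psi_j(a))$ collapses to $(1-(-y)^r)^{n/r}$, so the count reduces to orthogonality of characters plus the M\"obius-inversion evaluation of the Ramanujan sum $c_r(x)$. This is shorter and self-contained. It also extends to any finite abelian $G$: group the characters $\chi$ by order $r$, and evaluate $\sum_{\mathrm{ord}(\chi)=r}\chi(x)$ by the same M\"obius inversion, using that the characters with $\chi^r=1$ are exactly those of $G/rG$.

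What the sieve buys in exchange is reach beyond $D=G$. For a general subset $D$ the product over $D$ no longer factors, and the sieve, combined with character-sum estimates, still produces formulas or asymptotics. Your collapse step uses $D=\mathbb{Z}_n$ in an essential way. For the case actually needed in Theorems \ref{MGRSk-1weight} and \ref{EMGRSk-1weight}, that costs nothing, and your argument is the more economical one.

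One point worth stating explicitly: only divisors $r\mid n$ occur, so $\xi^{x}$ and $\gcd(r,x)$ are well defined for $x\in\mathbb{Z}_n$ via any integer representative.
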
 
	
	Secondly, we recall the definitions of elementary symmetric polynomials and complete symmetric polynomials.
	
	\begin{definition}\label{DofESpolynomial}{\rm (\cite{Ding2025nonGRS})}
		Let $n$ and $t$ be a positive integer.  
		
		$(1)$ The $t$-th elementary symmetric polynomial in $n$-variables is defined by
		\[
		\sigma_{t}(x_{1},x_{2},\dots,x_{n})
		=
		\begin{cases}
			1, & t=0;\\[4pt]
			\displaystyle\sum_{1\leq j_{1}<j_{2}<\cdots<j_{t}\leq n} x_{j_1}x_{j_2}\cdots x_{j_t}, & 1\leq t\leq n;\\[4pt]
			0, & t>n.
		\end{cases}
		\]
		And we denote $\sigma_{t} = \sigma_{t}(x_{1},x_{2},\dots,x_{n})$;
		
		$(2)$ The $t$-th complete symmetric polynomial in $n$-variables is defined by
		\[
		S_{t}(x_{1},x_{2},\dots,x_{n})
		=
		\begin{cases}
			0, & t<0;\\[4pt]
			\displaystyle\sum_{\substack{t_{1}+t_{2}+\cdots+t_{n}=t\\ t_{i}\geq 0}} x_{1}^{t_{1}}x_{2}^{t_{2}}\cdots x_{n}^{t_{n}}, & t\geq 0.
		\end{cases}
		\]
		And we denote $S_{t} = S_{t}(x_{1},x_{2},\dots,x_{n})$. 
	\end{definition}
	
	The following Lemma \ref{sigmaS} shows that the two types of symmetric polynomials defined in Definitions \ref{DofESpolynomial} are closely related.
	\begin{lemma}\label{sigmaS}{\rm( \cite{Symmetric}, Page 21, Eq. 2.6$^{\prime}$)} 
		Let $\sigma_{t}$ and $S_{t}$ be the elementary symmetric polynomial and the complete symmetric polynomial, respectively. Then for any $\ N\geq 1$, 
		$$\sum\limits_{t=0}^{N}(-1)^{t}\sigma_{t}S_{N-t}=0.$$
	\end{lemma}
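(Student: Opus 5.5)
The plan is to prove the identity in Lemma \ref{sigmaS} with generating functions in an auxiliary variable $z$, working in the formal power series ring $R[[z]]$ with $R=\mathbb{Z}[x_{1},\ldots,x_{n}]$. Once it holds there, it holds over any field, in particular $\mathbb{F}_{q}$, by specialization.

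First, I will define
$$E(z)=\prod_{i=1}^{n}\left(1+x_{i}z\right)$$
and expand the product to get $E(z)=\sum_{t\geq 0}\sigma_{t}z^{t}$. A monomial $z^{t}$ arises exactly by choosing $x_{j}z$ from $t$ distinct factors and $1$ from the others. This agrees with Definition \ref{DofESpolynomial}(1), including $\sigma_{0}=1$ and $\sigma_{t}=0$ for $t>n$. Replacing $z$ by $-z$ gives $E(-z)=\sum_{t\geq0}(-1)^{t}\sigma_{t}z^{t}$.

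Next, I will define
$$H(z)=\prod_{i=1}^{n}\frac{1}{1-x_{i}z}=\prod_{i=1}^{n}\sum_{m\geq0}x_{i}^{m}z^{m}.$$
Each factor $1-x_{i}z$ has constant term $1$, so it is invertible in $R[[z]]$. Multiplying out the geometric series, the coefficient of $z^{m}$ is $\sum_{t_{1}+\cdots+t_{n}=m}x_{1}^{t_{1}}\cdots x_{n}^{t_{n}}=S_{m}$. Hence $H(z)=\sum_{m\geq0}S_{m}z^{m}$, matching Definition \ref{DofESpolynomial}(2).

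Then $E(-z)H(z)=\prod_{i}(1-x_{i}z)\cdot\prod_{i}(1-x_{i}z)^{-1}=1$. Comparing coefficients of $z^{N}$ on both sides gives
$$\sum_{t=0}^{N}(-1)^{t}\sigma_{t}S_{N-t}=0\quad\text{for every } N\geq1,$$
since the right side $1$ has no $z^{N}$ term for $N\geq 1$.

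The only point needing care is the bookkeeping of the conventions: $\sigma_{t}=0$ for $t>n$ and $S_{t}=0$ for $t<0$. These make the finite sum over $0\leq t\leq N$ agree exactly with the Cauchy-product coefficient, even when $N>n$. No step is a genuine obstacle.
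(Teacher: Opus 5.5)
Your proof is correct and takes essentially the same approach as the source. The paper gives no proof of its own and simply cites Macdonald, where Eq.~(2.6$^{\prime}$) is derived exactly as you do, by comparing coefficients of $z^{N}$ in $E(-z)H(z)=1$ with $E(z)=\prod_i(1+x_iz)$ and $H(z)=\prod_i(1-x_iz)^{-1}$, followed by specialization from $\mathbb{Z}[x_1,\ldots,x_n]$ to $\mathbb{F}_q$.
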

	
	Finally, we present a lemma concerning complete symmetric polynomials, which will be used in Section \ref{sec6}.
	\begin{lemma}\label{usapowersum}{\rm( \cite{Ding2025nonGRS}, Lemma 2.6)}
		For any subset $\left\{\alpha_{1}, \ldots, \alpha_{n}\right\}\subseteq\mathbb{F}_{q}$ with $n\geq 3$, we have
		$$\sum\limits_{i=1}^{n}u_{i}\alpha_{i}^{h}=\begin{cases}
			0,&\text{if}\ 0\leq h\leq n-2;\\
			S_{h-n+1}(\alpha_{1},\ldots,\alpha_{n}),&\text{if}\ h\geq n-1,\\
		\end{cases}$$
		where $u_{i}=\prod\limits_{j=1, j \neq i}^{n}\left(\alpha_{i}-\alpha_{j}\right)^{-1}$ for $1 \leq i \leq n$. 
	\end{lemma}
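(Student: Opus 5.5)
We prove Lemma \ref{usapowersum} by a generating-function argument: encode all the sums $\sum_i u_i\alpha_i^{h}$ as the coefficients of one formal power series, then identify that series in closed form. Throughout we use that the $\alpha_i$ are pairwise distinct, which is implicit in writing $\{\alpha_1,\ldots,\alpha_n\}$ as an $n$-element subset and is needed for $u_i$ to be defined. Some $\alpha_i$ may equal $0$, so we avoid dividing by the $\alpha_i$.

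\emph{Step 1 (partial fractions).} In $\mathbb{F}_q(x)$ we claim
$$\frac{1}{\prod_{j=1}^{n}(x-\alpha_j)}=\sum_{i=1}^{n}\frac{u_i}{x-\alpha_i}.$$
Clearing denominators, this is equivalent to the polynomial identity
$$1=\sum_{i=1}^{n}u_i\prod_{j\neq i}(x-\alpha_j).$$
Both sides have degree at most $n-1$. At $x=\alpha_k$ the right-hand side reduces to $u_k\prod_{j\ne k}(\alpha_k-\alpha_j)=1$. So the two sides agree at $n$ distinct points and are therefore equal. This is Lagrange interpolation of the constant $1$.

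\emph{Step 2 (substitute $x=1/z$).} We have
$$\prod_{j}(1/z-\alpha_j)=z^{-n}\prod_j(1-\alpha_j z)\quad\text{and}\quad \frac{1}{1/z-\alpha_i}=\frac{z}{1-\alpha_i z}.$$
Substituting into Step 1 and dividing by $z$ gives, in $\mathbb{F}_q[[z]]$,
$$\sum_{i=1}^{n}\frac{u_i}{1-\alpha_i z}=\frac{z^{n-1}}{\prod_{j=1}^{n}(1-\alpha_j z)}.$$
Each factor $1-\alpha_j z$ has constant term $1$, so it is invertible as a power series. This remains valid when some $\alpha_j=0$, which is why we pass through the variable $x$ rather than manipulating series directly.

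\emph{Step 3 (compare coefficients).} On the left, expanding each $\frac{1}{1-\alpha_i z}$ as a geometric series gives $\sum_{h\ge0}\bigl(\sum_i u_i\alpha_i^{h}\bigr)z^h$. On the right, multiplying out the product of geometric series $\prod_j\sum_{t_j\ge0}\alpha_j^{t_j}z^{t_j}$ gives $\sum_{t\ge0}S_t(\alpha_1,\ldots,\alpha_n)z^t$, by Definition \ref{DofESpolynomial}$(2)$. Hence the right-hand side equals $\sum_{t\ge0}S_t\,z^{t+n-1}$. Reading off the coefficient of $z^h$:
\begin{itemize}
\item for $0\le h\le n-2$, the coefficient is $0$;
\item for $h\ge n-1$, it is $S_{h-n+1}(\alpha_1,\ldots,\alpha_n)$.
\end{itemize}
This is exactly the claim.

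The only point needing care is the substitution in Step 2 when some $\alpha_j=0$, which the route through $x$ handles. The hypothesis $n\ge3$ plays no role in the argument; the identity holds for all $n\ge1$.
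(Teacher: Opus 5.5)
Your proof is correct and complete. The Lagrange identity $1=\sum_i u_i\prod_{j\neq i}(x-\alpha_j)$, the substitution $x=1/z$ (equivalently, the polynomial identity $z^{n-1}=\sum_i u_i\prod_{j\neq i}(1-\alpha_j z)$), and the expansion $\prod_j(1-\alpha_j z)^{-1}=\sum_{t\geq 0}S_t z^t$ in $\mathbb{F}_q[[z]]$ together settle both ranges of $h$ at once.

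The paper gives no proof of this lemma; it imports the statement from \cite{Ding2025nonGRS}, so there is no internal argument to match. A standard alternative is a two-stage argument:
\begin{enumerate}
\item Lagrange interpolation of $x^h$, comparing $x^{n-1}$-coefficients, gives $\sum_i u_i\alpha_i^h=0$ for $h\le n-2$ and $=1=S_0$ for $h=n-1$.
\item For $h\ge n$, the relation $\prod_j(\alpha_i-\alpha_j)=0$ gives the recursion $T_h=\sum_{t=1}^{n}(-1)^{t-1}\sigma_t T_{h-t}$ for $T_h=\sum_i u_i\alpha_i^h$, and Lemma \ref{sigmaS} closes the induction.
\end{enumerate}
Your generating function absorbs this recursion. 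The identity $\prod_j(1-\alpha_jz)\cdot\sum_t S_tz^t=1$ that you use is exactly the generating-function form of Lemma \ref{sigmaS}. So your route is shorter and needs no induction, while the two-stage route stays at the level of finite sums.

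Keep your remark that $n\ge 3$ is superfluous. Theorem \ref{MGRSHhull} applies the lemma with $n=m$, and its hypotheses allow $m\in\{1,2\}$ (for instance $k=2$, $\delta=1$, $m=1$). The cited statement does not literally cover that case, but your version does.
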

	\section{Two classes of MDS or NMDS MGRS codes and their weight distributions}\label{sec3}
	In this section, we firstly introduce the definition of MGRS codes and the necessary and sufficient conditions for two classes of MGRS codes to be MDS. Then, we prove that two classes of MGRS codes are either MDS or NMDS codes. Furthermore, we derive the necessary and sufficient conditions for these codes to be NMDS, and describe the construction process in details. Finally, by using the subset sum problem, we completely determine the weight distributions for a special class of NMDS MGRS codes.

	\subsection{The definition of MGRS codes and the MDS property} 
	\begin{definition}\label{MGRSdefinition}{\rm( \cite{LiuMGRS}, Definition 3)}
		Let $2\leq k\leq n\leq q$, $1\leq t\leq k-1$, $\boldsymbol{\alpha}=\left(\alpha_{1}, \ldots, \alpha_{n}\right) \in \mathbb{F}_{q}^{n}$ with $\alpha_{i} \neq \alpha_{j}(i \neq j)$ and $\boldsymbol{v}=$ $\left(v_{1}, \ldots, v_{n}\right) \in\left(\mathbb{F}_{q}^{*}\right)^{n}$. The modified generalized Reed-Solomon (in short, MGRS) code $\mathrm{MGRS}_{n,k}\left(\boldsymbol{\alpha},\boldsymbol{v},\eta,t\right)$ is defined as
		$$
		\mathrm{MGRS}_{n,k}\left(\boldsymbol{\alpha},\boldsymbol{v},\eta,t\right)\triangleq\left\{\left(v_{1} f\left(\alpha_{1}\right), \ldots, v_{n-1} f\left(\alpha_{n-1}\right),v_{n}\left(f\left(0\right)+\eta f_{t}\right)\right) | f(x) \in \mathbb{F}_{q}^{k}[x]\right\},
		$$
		where $v_{n+1}\in\mathbb{F}_{q}^{*}$, $\eta\in\mathbb{F}_{q}$ and $f_t$ denotes the coefficient of $x^{t}$ in $f(x)$.
	\end{definition}
	\begin{remark}\label{MGRSgeneratormatrix}
		It is easy to know that $\mathrm{MGRS}_{n,k}\left(\boldsymbol{\alpha},\boldsymbol{v},\eta,t\right)$ has the following generator matrix 
		$$\small\boldsymbol{G}_{\boldsymbol{v},n-1,t} =\begin{pmatrix}
			v_{1}&v_{2}&\cdots&v_{n-1}&v_{n}\\
			v_{1}\alpha_{1}&v_{2}\alpha_{2}&\cdots&v_{n-1}\alpha_{n-1}&0\\
			
			\vdots&\vdots& &\vdots&\vdots\\
			v_{1}\alpha_{1}^{t-1}&v_{2}\alpha_{2}^{t-1}&\cdots&v_{n-1}\alpha_{n-1}^{t-1}&0\\
			
			v_{1}\alpha_{1}^{t}&v_{2}\alpha_{2}^{t}&\cdots&v_{n-1}\alpha_{n-1}^{t}&v_{n}\eta\\
			
			v_{1}\alpha_{1}^{t+1}&v_{2}\alpha_{2}^{t+1}&\cdots&v_{n-1}\alpha_{n-1}^{t+1}&0\\
			\vdots&\vdots& &\vdots&\vdots\\ 
			v_{1}\alpha_{1}^{k-1}&v_{2}\alpha_{2}^{k-1}&\cdots&v_{n-1}\alpha_{n-1}^{k-1}&0
		\end{pmatrix}_{k\times n}.$$
	\end{remark}  
	
	In the following Lemma \ref{NMDSMDSMGRS}, a sufficient and necessary condition for $\mathrm{MGRS}_{n,k}\left(\boldsymbol{\alpha},\boldsymbol{v},\eta,t\right)(t\in\left\{1,k-1\right\})$ to
	be MDS is given in the following
	\begin{lemma}\label{NMDSMDSMGRS}{\rm( \cite{LiuMGRS}, Corollaries 1-2)}
		For $\mathrm{MGRS}_{n,k}\left(\boldsymbol{\alpha},\boldsymbol{v},\eta,t\right)$ or $\mathrm{EMGRS}_{n,k}\left(\boldsymbol{\alpha},\boldsymbol{v},\eta,t\right)(t\in\left\{1,k-1\right\})$, the following four statements are true,
		
		$(1)$ $\mathrm{MGRS}_{n,k}\left(\boldsymbol{\alpha},\boldsymbol{v},\eta,1\right)$ is MDS if and only if for any $(k-1)$-element subset $\mathcal{L}\subseteq\left\{\alpha_{1},\ldots,\alpha_{n-1}\right\}$, $\frac{1}{\eta}\neq \sum\limits_{\alpha_{i}\in \mathcal{L}}\frac{1}{\alpha_{i}}$;
		
		$(2)$ $\mathrm{MGRS}_{n,k}\left(\boldsymbol{\alpha},\boldsymbol{v},\eta,k-1\right)$ is MDS if and only if for any $(k-1)$-element subset $\mathcal{K}\subseteq\left\{\alpha_{1},\ldots,\alpha_{n-1}\right\}$, $\eta\neq (-1)^{k}\prod\limits_{\alpha_{i}\in \mathcal{K}}\alpha_{i}$. 
	\end{lemma}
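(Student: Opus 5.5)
The plan is to use Lemma \ref{LCDMDSorNMDSequivalent}(2). The code $\mathrm{MGRS}_{n,k}(\boldsymbol{\alpha},\boldsymbol{v},\eta,t)$ is MDS if and only if every $k\times k$ minor of the generator matrix $\boldsymbol{G}_{\boldsymbol{v},n-1,t}$ in Remark \ref{MGRSgeneratormatrix} is nonzero. I will split the $k$-subsets of columns according to whether they contain the last column.

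\textbf{Subsets avoiding the last column.} Suppose the $k$ columns are indexed by $\alpha_{i_1},\ldots,\alpha_{i_k}$ with all indices $\le n-1$. The minor equals $\prod_j v_{i_j}$ times a Vandermonde determinant. Since $v_i\ne 0$ and the $\alpha_i$ are pairwise distinct, these minors never vanish. Hence only subsets containing the last column matter, and any condition on $\eta$ must come from them.

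\textbf{Subsets containing the last column.} Let $\mathcal{L}=\{\beta_1,\ldots,\beta_{k-1}\}\subseteq\{\alpha_1,\ldots,\alpha_{n-1}\}$ index the other $k-1$ columns, and put the last column in position $k$. That column is $v_n(\boldsymbol{e}_1+\eta\boldsymbol{e}_{t+1})$, so I expand the determinant along it. This needs the two $(k-1)\times(k-1)$ minors of the $k\times(k-1)$ matrix $\bigl(\beta_j^{\,i}\bigr)_{0\le i\le k-1}$, obtained by deleting row $0$ or row $t$.
\begin{itemize}
\item Deleting row $0$ leaves $\mathrm{diag}(\beta_j)$ times a Vandermonde matrix, so this minor is $\sigma_{k-1}(\mathcal{L})V(\mathcal{L})$.
\item Deleting row $t$ gives, by the standard identity for a Vandermonde matrix with one power omitted (provable via the product $\prod_j(x-\beta_j)$ and Cramer's rule, or via Definition \ref{DofESpolynomial}), the minor $\sigma_{k-1-t}(\mathcal{L})V(\mathcal{L})$.
\end{itemize}
With the cofactor signs $(-1)^{1+k}$ and $(-1)^{t+1+k}$, this yields
$$\det=(-1)^{k+1}v_n\Bigl(\prod_{j}v_{i_j}\Bigr)V(\mathcal{L})\Bigl[\sigma_{k-1}(\mathcal{L})+(-1)^{t}\eta\,\sigma_{k-1-t}(\mathcal{L})\Bigr].$$
So the minor vanishes exactly when the bracket vanishes.

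\textbf{Specializing $t$.}
\begin{itemize}
\item For $t=k-1$ we have $\sigma_0=1$. The bracket is $\prod\beta_j+(-1)^{k-1}\eta$, which is zero iff $\eta=(-1)^k\prod_{\alpha_i\in\mathcal{K}}\alpha_i$. This gives statement (2).
\item For $t=1$ the bracket is $\sigma_{k-1}(\mathcal{L})-\eta\,\sigma_{k-2}(\mathcal{L})$. When $\eta\ne0$ and all $\beta_j\ne0$, dividing by $\eta\,\sigma_{k-1}$ shows it vanishes iff $\frac1\eta=\sigma_{k-2}/\sigma_{k-1}=\sum_{\alpha_i\in\mathcal{L}}\frac1{\alpha_i}$. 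This gives statement (1).
\end{itemize}

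\textbf{Main obstacle.} I expect the real difficulty to be the degenerate cases in (1), not the determinant computation itself.
\begin{itemize}
\item If $0\in\mathcal{L}$, then $\sigma_{k-1}=0$ and $\sigma_{k-2}$ equals the product of the nonzero elements, so the minor vanishes iff $\eta=0$.
\item If $\eta=0$, the last column is $v_n\boldsymbol{e}_1$, i.e.\ evaluation at $0$. The code is then a GRS code, which fails to be MDS exactly when some $\alpha_i=0$.
\end{itemize}
I would check that both situations agree with the stated criterion under the convention of \cite{LiuMGRS}, namely that $\frac10$ is treated as $\infty$, or equivalently that $\alpha_i\ne0$ is tacitly assumed. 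I would state this convention explicitly so that (1) becomes an exact equivalence.
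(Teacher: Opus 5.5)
Your proof is correct. The paper gives no proof of this lemma; it is quoted from the cited work of Wang, Liu and Luo. So your cofactor expansion is a self-contained derivation rather than a variant of an in-paper argument.

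Your key formula checks out. Deleting row $t$ from $(\beta_j^{\,i})_{0\le i\le k-1}$ gives $\sigma_{k-1-t}(\mathcal{L})V(\mathcal{L})$ with sign $+1$. The cofactor signs $(-1)^{1+k}$ and $(-1)^{t+1+k}$ combine as you state, and both specializations $t=k-1$ and $t=1$ are right.

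A slightly shorter route works with messages instead of minors, which is how the paper reasons in the proof of Theorem \ref{MGRSk-1weight}. The last column together with the columns indexed by $\mathcal{L}$ is dependent iff some nonzero $f\in\mathbb{F}_q^k[x]$ vanishes on $\mathcal{L}$ and satisfies $f(0)+\eta f_t=0$. The first condition forces $f=\lambda\prod_{\beta\in\mathcal{L}}(x-\beta)$ with $\lambda\neq 0$. The second then becomes $\lambda(-1)^{k-1}\bigl[\sigma_{k-1}(\mathcal{L})+(-1)^{t}\eta\,\sigma_{k-1-t}(\mathcal{L})\bigr]=0$. This is the same bracket, obtained without the generalized Vandermonde identity or any sign bookkeeping. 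Your version instead gives the exact value of each $k\times k$ minor, which is slightly more than the MDS criterion requires.

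Your treatment of the degenerate cases is also correct and worth stating explicitly. Under the convention $1/0=\infty$, statement (1) is an exact equivalence for every $\eta\in\mathbb{F}_q$. If $0\in\mathcal{L}$, the minor vanishes iff $\eta=0$. If $\eta=0$, the code is the GRS code on $\alpha_1,\ldots,\alpha_{n-1},0$, which is MDS iff $0\notin\{\alpha_1,\ldots,\alpha_{n-1}\}$. As literally written, (1) presupposes $\eta\neq 0$ and $0\notin\{\alpha_1,\ldots,\alpha_{n-1}\}$. Statement (2), by contrast, needs no such caveat. The paper only applies (1) in the nondegenerate setting: Theorems \ref{NMDSMGRSK-1} and \ref{NMDSMGRS1} assume $\{\alpha_1,\ldots,\alpha_{n-1}\}\subseteq\mathbb{F}_q^{*}$.
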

	
	\subsection{An equivalent condition for two classes of MGRS codes to be NMDS}\label{sec3.2}
	
	In this subsection, we focus on two classes of MGRS codes $\mathrm{MGRS}_{n,k}\left(\boldsymbol{\alpha},\boldsymbol{v},\eta,1\right)$ and $\mathrm{MGRS}_{n,k}\left(\boldsymbol{\alpha},\boldsymbol{v},\eta,k-1\right)$. In order to present a necessary and sufficient condition for these codes to be NMDS, we firstly present the following 
	\begin{proposition}\label{MDSNMDSMGRS1}
		If $\left\{\alpha_{1},\ldots,\alpha_{n-1}\right\}\subseteq\mathbb{F}_{q}^{*}$, then the code $\mathrm{MGRS}_{n,k}\left(\boldsymbol{\alpha},\boldsymbol{v},\eta,k-1\right)$ is either MDS or NMDS.
	\end{proposition}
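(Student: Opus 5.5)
We argue via the column criterion in Lemma \ref{LCDMDSorNMDSequivalent}, applied to the generator matrix $\boldsymbol{G}_{\boldsymbol{v},n-1,k-1}$ of Remark \ref{MGRSgeneratormatrix} with $t=k-1$. Since the $v_i$ are nonzero, we may drop them and work with the columns
$$\boldsymbol{c}_i=(1,\alpha_i,\ldots,\alpha_i^{k-1})^{T_e}\ (1\le i\le n-1),\qquad \boldsymbol{c}_n=(1,0,\ldots,0,\eta)^{T_e}.$$
Suppose the code is not MDS. Then some $k$ columns are dependent, which is condition $(ii)$, so it suffices to verify conditions $(i)$ and $(iii)$ of Lemma \ref{LCDMDSorNMDSequivalent}(3).

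\textbf{Condition $(i)$.} Take any $k-1$ columns. If none of them is $\boldsymbol{c}_n$, they are columns of a Vandermonde matrix with distinct nodes, hence independent. Otherwise they are $\boldsymbol{c}_n$ together with $k-2$ Vandermonde columns $\boldsymbol{c}_i$, $i\in I$. Suppose $\lambda\boldsymbol{c}_n+\sum_{i\in I}\lambda_i\boldsymbol{c}_i=0$. Rows $2,\ldots,k-1$ give $\sum_{i\in I}\lambda_i\alpha_i^{j}=0$ for $1\le j\le k-2$. This is a $(k-2)\times(k-2)$ system with matrix $(\alpha_i^{j})$, which is a Vandermonde matrix scaled by $\prod_{i\in I}\alpha_i$. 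Because every $\alpha_i\neq0$ (this is exactly where the hypothesis $\{\alpha_1,\ldots,\alpha_{n-1}\}\subseteq\mathbb{F}_q^*$ enters), the matrix is invertible, so all $\lambda_i=0$. Row $1$ then forces $\lambda=0$. This proves $(i)$.

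\textbf{Condition $(iii)$.} Take any $k+1$ columns; we must show they span $\mathbb{F}_q^k$. If $\boldsymbol{c}_n$ is not among them, any $k$ of them already form an invertible Vandermonde matrix. If $\boldsymbol{c}_n$ is among them, the remaining $k$ columns $\boldsymbol{c}_i$ are Vandermonde columns with distinct nodes, hence of rank $k$. In either case the rank is $k$, so $(iii)$ holds, and in fact holds trivially.

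The only step with real content is $(i)$. The hypothesis $\alpha_i\ne0$ is essential there: if some $\alpha_i=0$, then $\boldsymbol{c}_i=(1,0,\ldots,0)^{T_e}$ and $\boldsymbol{c}_n$ would share their middle rows, and independence could fail for $k-1$ columns. Combining $(i)$, $(ii)$ and $(iii)$, a non-MDS code is NMDS, so the code is either MDS or NMDS.
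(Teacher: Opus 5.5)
Your proof is correct and follows essentially the same route as the paper. Both verify conditions $(i)$ and $(iii)$ of Lemma \ref{LCDMDSorNMDSequivalent}(3): $(i)$ reduces, via $\alpha_i\neq 0$, to the nonvanishing of the scaled Vandermonde minor that the paper computes as $\det(\boldsymbol{R}_1)$, and $(iii)$ reduces to an ordinary Vandermonde minor. Your version is slightly more careful, since it treats every choice of columns (including the trivial case where $\boldsymbol{c}_n$ is absent) rather than a single representative set.
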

	{\bf Proof.} By Definition \ref{MGRSdefinition} and the definition of the monomially equivalent, we only focus on $\mathrm{MGRS}_{n,k}\left(\boldsymbol{\alpha},\boldsymbol{1},\eta,k-1\right)$ which has the generator matrix
	$$\boldsymbol{G}_{\boldsymbol{1}
		,n-1,k-1} =\begin{pmatrix}
		1&1&\cdots&1&1\\
		\alpha_{1}&\alpha_{2}&\cdots&\alpha_{n-1}&0\\
		\alpha_{1}^{2}&\alpha_{2}^{2}&\cdots&\alpha_{n-1}^{2}&0\\
		\vdots&\vdots& &\vdots&\vdots\\
		\alpha_{1}^{k-2}&\alpha_{2}^{k-2}&\cdots&\alpha_{n-1}^{k-2}&0\\
		\alpha_{1}^{k-1}&\alpha_{2}^{k-1}&\cdots&\alpha_{n-1}^{k-1}&\eta
	\end{pmatrix}.
	$$
	
	If the code $\mathrm{MGRS}_{n,k}\left(\boldsymbol{\alpha},\boldsymbol{v},\eta,k-1\right)$ is not MDS, then by Lemma \ref{LCDMDSorNMDSequivalent}, there exist $k$ columns in $\boldsymbol{G}_{\boldsymbol{1}
		,n-1,k-1}$ that are  $\mathbb{F}_{q}$-linearly dependent, i.e., Lemma \ref{LCDMDSorNMDSequivalent} $(2)$ holds.
	
	Next, we prove Lemma \ref{LCDMDSorNMDSequivalent} $(1)$ and $(3)$.
	
	For Lemma \ref{LCDMDSorNMDSequivalent} $(1)$, we only prove that there exists some $(k-1)\times (k-1)$ non-zero minor for the following matrix 
	$$\boldsymbol{G}_{\boldsymbol{1}
		,n-1,k-1}^{(k-1)} =\begin{pmatrix}	1&1&\cdots&1&1\\
		\alpha_{1}&\alpha_{2}&\cdots&\alpha_{k-2}&0\\
		\alpha_{1}^{2}&\alpha_{2}^{2}&\cdots&\alpha_{k-2}^{2}&0\\
		\vdots&\vdots& &\vdots&\vdots\\
		\alpha_{1}^{k-2}&\alpha_{2}^{k-2}&\cdots&\alpha_{k-2}^{k-2}&0\\
		\alpha_{1}^{k-1}&\alpha_{2}^{k-1}&\cdots&\alpha_{k-2}^{k-1}&\eta
	\end{pmatrix}_{k\times (k-1)}.
	$$
	Now, we consider the submatrix $\boldsymbol{R}_1$ given by deleting the last row of $\boldsymbol{G}_{\boldsymbol{1}
		,n-1,k-1}^{(k-1)} $, i.e.,
	$$\boldsymbol{R}_1=\begin{pmatrix}
		1&1&\cdots&1&1\\
		\alpha_{1}&\alpha_{2}&\cdots&\alpha_{k-2}&0\\
		\alpha_{1}^{2}&\alpha_{2}^{2}&\cdots&\alpha_{k-2}^{2}&0\\
		\vdots&\vdots& &\vdots&\vdots\\
		\alpha_{1}^{k-2}&\alpha_{2}^{k-2}&\cdots&\alpha_{k-2}^{k-2}&0
	\end{pmatrix}_{(k-1)\times (k-1)},
	$$
	by $\left\{\alpha_{1},\ldots,\alpha_{n-1}\right\}\subseteq\mathbb{F}_{q}^{*}$, we have
	$$\det\left(\boldsymbol{R}_1\right)=(-1)^{k}\cdot\prod\limits_{i=1}^{k-2}\alpha_{i}\cdot\prod\limits_{1 \leq s <t\leq k-2}\left(\alpha _{t}-\alpha _{s}\right)\neq 0,$$
	which means that $\boldsymbol{R}_1$ is a $(k-1)\times (k-1)$ non-zero minor of $\boldsymbol{G}_{\boldsymbol{1}
		,n-1,k-1}^{(k-1)} $.
	
	For Lemma \ref{LCDMDSorNMDSequivalent} $(2)$, we only prove that there exists a $k\times k$ non-zero minor for the following matrix
	$$\boldsymbol{G}_{\boldsymbol{1},n-1,k-1}^{(k)} =\begin{pmatrix}
		1&1&\cdots&1&1\\
		\alpha_{1}&\alpha_{2}&\cdots&\alpha_{k}&0\\
		\alpha_{1}^{2}&\alpha_{2}^{2}&\cdots&\alpha_{k}^{2}&0\\
		\vdots&\vdots& &\vdots&\vdots\\
		\alpha_{1}^{k-2}&\alpha_{2}^{k-2}&\cdots&\alpha_{k}^{k-2}&0\\
		\alpha_{1}^{k-1}&\alpha_{2}^{k-1}&\cdots&\alpha_{k}^{k-1}&\eta
	\end{pmatrix}_{k\times (k+1)}.
	$$
	It's easy to prove that the submatrix given by deleting the last column of $\boldsymbol{G}_{\boldsymbol{1},n-1,k-1}^{(k)} $ is a $k\times k$ non-zero minor of $\boldsymbol{G}_{\boldsymbol{1},n-1,k-1}^{(k)} $.   
	
	This completes the proof of Theorem \ref{MDSNMDSMGRS1}.
	
	$\hfill\Box$
	
	\iffalse$$\boldsymbol{G}_{2}=\begin{pmatrix}
		1&1&\cdots&1&1\\
		\alpha_{1}&\alpha_{2}&\cdots&\alpha_{n-1}&\eta\\
		\alpha_{1}^{2}&\alpha_{2}^{2}&\cdots&\alpha_{n-1}^{2}&0\\
		\vdots&\vdots& &\vdots&\vdots\\
		\alpha_{1}^{k-2}&\alpha_{2}^{k-2}&\cdots&\alpha_{n-1}^{k-2}&0\\
		\alpha_{1}^{k-1}&\alpha_{2}^{k-1}&\cdots&\alpha_{n-1}^{k-1}&0
	\end{pmatrix}
	$$\fi
	
	Furthermore, by Lemma \ref{NMDSMDSEMGRS} (1) and Proposition \ref{MDSNMDSMGRS1}, we have the following 
	\begin{theorem}\label{NMDSMGRSK-1}
		If $\left\{\alpha_{1},\ldots,\alpha_{n-1}\right\}\subseteq\mathbb{F}_{q}^{*}$, then $\mathrm{MGRS}_{n,k}\left(\boldsymbol{\alpha},\boldsymbol{v},\eta,k-1\right)$ is NMDS if and only if there exists an  $(k-1)$-element subset $\mathcal{K}\subseteq\left\{\alpha_{1},\ldots,\alpha_{n-1}\right\}$ such that $\eta=(-1)^{k}\prod\limits_{\alpha_{i}\in \mathcal{K}}\alpha_{i}$.
	\end{theorem}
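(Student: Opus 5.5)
The plan is to combine the dichotomy of Proposition \ref{MDSNMDSMGRS1} with the MDS criterion of Lemma \ref{NMDSMDSMGRS} $(2)$. The theorem cites Lemma \ref{NMDSMDSEMGRS} $(1)$, but the relevant statement for $t=k-1$ is Lemma \ref{NMDSMDSMGRS} $(2)$.

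First I check that an MDS code is never NMDS. An $[n,k]$ MDS code has $d=n-k+1$, while an NMDS code satisfies condition $(ii)$ of Lemma \ref{LCDMDSorNMDSequivalent} $(3)$. That condition produces $k$ linearly dependent columns of the generator matrix, which is incompatible with Lemma \ref{LCDMDSorNMDSequivalent} $(2)$. So the two properties are mutually exclusive.

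Since $\{\alpha_1,\ldots,\alpha_{n-1}\}\subseteq\mathbb{F}_q^{*}$, Proposition \ref{MDSNMDSMGRS1} says $\mathrm{MGRS}_{n,k}(\boldsymbol{\alpha},\boldsymbol{v},\eta,k-1)$ is exactly one of MDS or NMDS. Hence the code is NMDS if and only if it is not MDS.

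By Lemma \ref{NMDSMDSMGRS} $(2)$, the code is MDS if and only if $\eta\neq(-1)^k\prod_{\alpha_i\in\mathcal{K}}\alpha_i$ for every $(k-1)$-subset $\mathcal{K}\subseteq\{\alpha_1,\ldots,\alpha_{n-1}\}$. Negating this, the code fails to be MDS, and is therefore NMDS, if and only if some $(k-1)$-subset $\mathcal{K}$ gives $\eta=(-1)^k\prod_{\alpha_i\in\mathcal{K}}\alpha_i$. This is exactly the claimed equivalence.

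To make the argument self-contained rather than relying only on the cited lemma, I would verify the MDS criterion directly. Take a set of $k$ columns of $\boldsymbol{G}_{\boldsymbol{1},n-1,k-1}$ consisting of the last column and the columns indexed by a $(k-1)$-subset $\mathcal{K}$. Expand the determinant along the last column, whose only nonzero entries are $1$ in the top row and $\eta$ in the bottom row.

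The $\eta$ term gives $\eta$ times the Vandermonde determinant $V(\mathcal{K})$, up to sign. The top term gives, up to sign, the minor formed by rows $x^1,\ldots,x^{k-1}$, which equals $\prod_{\alpha_i\in\mathcal{K}}\alpha_i\cdot V(\mathcal{K})$. Comparing the signs $(-1)^{1+k}$ and $(-1)^{k+k}$, the determinant equals $V(\mathcal{K})\bigl(\eta+(-1)^{k+1}\prod_{\alpha_i\in\mathcal{K}}\alpha_i\bigr)$. This vanishes exactly when $\eta=(-1)^k\prod_{\alpha_i\in\mathcal{K}}\alpha_i$.

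Any $k$ columns not including the last column form a Vandermonde matrix and are nonsingular. So this computation confirms both the lemma and the sign convention. The only delicate point is this sign bookkeeping; everything else is the direct combination described above.
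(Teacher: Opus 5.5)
Your proposal is correct and follows the paper's route exactly: the paper gets the theorem by combining the MDS-or-NMDS dichotomy of Proposition \ref{MDSNMDSMGRS1} with the MDS criterion for $t=k-1$, which, as you rightly note, is Lemma \ref{NMDSMDSMGRS}~$(2)$ and not the cited Lemma \ref{NMDSMDSEMGRS}~$(1)$. Your cofactor expansion, giving determinant $V(\mathcal{K})\bigl(\eta+(-1)^{k+1}\prod_{\alpha_i\in\mathcal{K}}\alpha_i\bigr)$, correctly verifies that criterion and its sign directly, which the paper does not do.
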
 
	
	Based on the above Theorem \ref{NMDSMGRSK-1}, a method for constructing MDS MGRS codes with parameters $[n,k,n-k+1]$ or  constructing NMDS MGRS codes with parameters $[n,k,n-k]$ can be divided into the following four steps:
	
	{\bf Step 1.} Given a positive integer $n\leq q$ and select an $(n-1)$-element subset $$H=\left\{\alpha_{1},\ldots,\alpha_{n-1}\right\}\subseteq\mathbb{F}_{q}^{*};$$ 
	
	{\bf Step 2.} Given a positive integer $k$ with $2\leq k\leq n$ and calculate the following set
	$$\widetilde{H}_{1}=\left\{(-1)^{k}\prod\limits_{\alpha_{i}\in \mathcal{H}}\alpha_{i}:  \mathcal{H}\subseteq H, \#\mathcal{H}=k-1\right\};$$
	
	{\bf Step 3.} If $\widetilde{H}_{1}=\mathbb{F}_{q}^{*}$, it means $\eta\in\widetilde{H}_{1}$, i.e., there exists a  $(k-1)$-element subset $\mathcal{K}\subseteq\left\{\alpha_{1},\ldots,\alpha_{n-1}\right\}$ such that $\eta=(-1)^{k}\prod\limits_{\alpha_{i}\in \mathcal{K}}\alpha_{i}$, then $\mathrm{MGRS}_{n,k}\left(H,\boldsymbol{v},\eta,k-1\right)$ is NMDS;
	
	{\bf Step 4.} Otherwise, choice $\eta\in\mathbb{F}_{q}^{*}\backslash \widetilde{H}_{1}$, then $\mathrm{MGRS}_{n,k}\left(H,\boldsymbol{v},\eta,k-1\right)$ is MDS.
	
	In the similar proof as that for Proposition \ref{MDSNMDSMGRS1}, one can prove the following
	\begin{proposition}\label{MDSNMDSMGRS2}
		If $\left\{\alpha_{1},\ldots,\alpha_{n-1}\right\}\subseteq\mathbb{F}_{q}^{*}$, then $\mathrm{MGRS}_{n,k}\left(\boldsymbol{\alpha},\boldsymbol{v},\eta,1\right)$ is either MDS or NMDS.
	\end{proposition}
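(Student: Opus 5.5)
The plan is to follow the proof of Proposition \ref{MDSNMDSMGRS1} and use Lemma \ref{LCDMDSorNMDSequivalent}~$(3)$. Suppose the code is not MDS. Then Lemma \ref{LCDMDSorNMDSequivalent}~$(2)$ already gives $k$ linearly dependent columns, which is condition $(ii)$. So only conditions $(i)$ and $(iii)$ need to be checked. By Definition \ref{MGRSdefinition}, multiplying column $j$ by $v_j^{-1}$ is a monomial equivalence that preserves all these rank conditions. We may therefore take $\boldsymbol{v}=\boldsymbol{1}$. The generator matrix $\boldsymbol{G}_{\boldsymbol{1},n-1,1}$ then has columns $(1,\alpha_j,\alpha_j^2,\ldots,\alpha_j^{k-1})^{T}$ for $1\leq j\leq n-1$, and last column $(1,\eta,0,\ldots,0)^{T}$.

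For condition $(i)$, take any $k-1$ columns of $\boldsymbol{G}_{\boldsymbol{1},n-1,1}$.
\begin{itemize}
\item If the last column is not among them, the first $k-1$ rows give a $(k-1)\times(k-1)$ Vandermonde minor in distinct $\alpha_i$'s. This minor is nonzero.
\item If the last column is among them, relabel so the other columns are $\alpha_1,\ldots,\alpha_{k-2}$. The $t=k-1$ case deleted the row $x^{k-1}$ carrying $\eta$. Here we instead delete the row $x^{1}$, which carries $\eta$, and keep the rows $x^0,x^2,x^3,\ldots,x^{k-1}$.
\end{itemize}
In the second case the last column of the resulting $(k-1)\times(k-1)$ submatrix $\boldsymbol{R}_2$ is $(1,0,\ldots,0)^{T}$. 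Expanding along that column gives, up to sign, the determinant of $(\alpha_i^{j})_{2\leq j\leq k-1,\,1\leq i\leq k-2}$. This equals
$$\pm\prod_{i=1}^{k-2}\alpha_i^{2}\prod_{1\leq s<t\leq k-2}(\alpha_t-\alpha_s).$$
It is nonzero precisely because $\{\alpha_1,\ldots,\alpha_{n-1}\}\subseteq\mathbb{F}_q^{*}$ and the $\alpha_i$ are distinct. The argument works for every $\eta$, so no hypothesis on $\eta$ is needed. For small $k$ (e.g.\ $k=2$) the product over $i$ is empty and the minor is $\pm1$, so this case is trivial.

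For condition $(iii)$, take any $k+1$ columns (this requires $k+1\leq n$). At least $k$ of them come from the first $n-1$ columns. Those $k$ columns form a $k\times k$ Vandermonde matrix in distinct $\alpha_i$'s, which is nonsingular. Hence the $k+1$ columns have full rank $k$.

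The main step needing care is the choice of row to delete in $(i)$. With $t=1$ the $\eta$ entry sits in row $x^1$, so deleting that row is what makes $\boldsymbol{R}_2$ a shifted Vandermonde matrix. The cost is a factor $\prod\alpha_i^{2}$, which is exactly where the nonzero hypothesis on the $\alpha_i$ enters. The boundary case $k=n$ needs a remark: there are no $k+1$ columns, so $(iii)$ is vacuous, and the dichotomy follows from $(i)$ together with the definition. Combining $(i)$, $(ii)$ and $(iii)$ with Lemma \ref{LCDMDSorNMDSequivalent}~$(3)$ shows that a non-MDS code $\mathrm{MGRS}_{n,k}(\boldsymbol{\alpha},\boldsymbol{v},\eta,1)$ is NMDS.
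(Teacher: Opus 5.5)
Your proposal is correct and is essentially the paper's argument. The paper proves this statement ``in the similar proof as that for Proposition~\ref{MDSNMDSMGRS1}'', that is, by exhibiting nonzero $(k-1)\times(k-1)$ and $k\times k$ minors to verify conditions $(i)$ and $(iii)$ of Lemma~\ref{LCDMDSorNMDSequivalent}. Your deletion of the row $x^{1}$ carrying $\eta$, which produces the factor $\prod\alpha_i^{2}$, is the natural $t=1$ analogue of deleting the row $x^{k-1}$ there.

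The only soft spot is your $k=n$ remark. In that case a non-MDS generator matrix is a singular $n\times n$ matrix, so the code is not genuinely $k$-dimensional and Lemma~\ref{LCDMDSorNMDSequivalent}~$(3)$ does not apply. As the paper implicitly does, you should simply assume $n\geq k+1$.
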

	
	Furthermore, by Lemma \ref{NMDSMDSEMGRS} (2) and Proposition \ref{MDSNMDSMGRS2}, we have the following
	\begin{theorem}\label{NMDSMGRS1}
		If $\left\{\alpha_{1},\ldots,\alpha_{n-1}\right\}\subseteq\mathbb{F}_{q}^{*}$, then $\mathrm{MGRS}_{n,k}\left(\boldsymbol{\alpha},\boldsymbol{v},\eta,1\right)$ is NMDS if and only if there exists a $(k-1)$-element subset $\mathcal{L}\subseteq\left\{\alpha_{1},\ldots,\alpha_{n-1}\right\}$ such that $\frac{1}{\eta}=\sum\limits_{\alpha_{i}\in \mathcal{L}}\frac{1}{\alpha_{i}}$.
	\end{theorem}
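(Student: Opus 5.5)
The statement follows from the MDS criterion for $t=1$ together with the dichotomy of Proposition \ref{MDSNMDSMGRS2}. The criterion is recorded as Lemma \ref{NMDSMDSMGRS} (1); the paper refers to it as Lemma \ref{NMDSMDSEMGRS} (2).

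Assume $\left\{\alpha_{1},\ldots,\alpha_{n-1}\right\}\subseteq\mathbb{F}_{q}^{*}$. By Proposition \ref{MDSNMDSMGRS2}, $\mathrm{MGRS}_{n,k}\left(\boldsymbol{\alpha},\boldsymbol{v},\eta,1\right)$ is either MDS or NMDS, and these are mutually exclusive since their minimum distances differ. Hence the code is NMDS if and only if it is not MDS. By Lemma \ref{NMDSMDSMGRS} (1), it fails to be MDS exactly when some $(k-1)$-subset $\mathcal{L}$ of $\left\{\alpha_{1},\ldots,\alpha_{n-1}\right\}$ satisfies $\frac{1}{\eta}=\sum_{\alpha_{i}\in\mathcal{L}}\frac{1}{\alpha_{i}}$. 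Chaining these equivalences proves the theorem.

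The one point needing care is the case $\eta=0$, where $\frac{1}{\eta}$ is undefined. I would read the condition as requiring $\eta\neq0$. I would then check directly that $\eta=0$ gives an MDS code, consistent with the statement. The last column becomes $v_n e_1$, so take any $k$ columns containing it. Expanding along that column reduces the minor to $v_n$ times the $(k-1)\times(k-1)$ determinant on rows $x,\ldots,x^{k-1}$. That determinant equals $\prod_i\alpha_i\prod_{s<t}(\alpha_t-\alpha_s)\neq 0$ because all $\alpha_i\neq 0$. Any $k$ columns avoiding the last one form a nonsingular Vandermonde matrix.

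For completeness I would also sketch Proposition \ref{MDSNMDSMGRS2} itself, mirroring the proof of Proposition \ref{MDSNMDSMGRS1}. It suffices to exhibit a nonzero $(k-1)$-minor in every set of $k-1$ columns and a nonzero $k$-minor in every set of $k+1$ columns. For $k-1$ columns that include the last column, delete the row $x^{1}$. The last column becomes $v_ne_1$, and the minor reduces to a Vandermonde-type determinant on powers $2,\ldots,k-1$, which is nonzero since the $\alpha_i$ are nonzero and distinct. For $k-1$ columns not including it, delete the row $x^{k-1}$ to get an ordinary Vandermonde matrix. For $k+1$ columns, any $k$ of them avoiding the last column form a full Vandermonde matrix. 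The only thing to verify carefully is the row-deletion choice in the first case, which is the main (mild) obstacle.
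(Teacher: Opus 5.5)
Your proposal is correct and is exactly the paper's argument: the theorem follows by combining the $t=1$ MDS criterion with the MDS-or-NMDS dichotomy of Proposition \ref{MDSNMDSMGRS2}. The paper cites the criterion as Lemma \ref{NMDSMDSEMGRS} (2), and you are right that the intended reference is Lemma \ref{NMDSMDSMGRS} (1). Your extra checks are also correct: that $\eta=0$ gives an MDS code, and the minor computations for Proposition \ref{MDSNMDSMGRS2} (delete the $x^{1}$ row when the last column is present, the $x^{k-1}$ row otherwise); these fill in details the paper only describes as ``similar to Proposition \ref{MDSNMDSMGRS1}''.
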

	
	Based on the above Theorem \ref{NMDSMGRS1}, a method for constructing MDS MGRS codes with parameters $[n,k,n-k+1]$ or NMDS MGRS codes with parameters $[n,k,n-k]$ can be divided into the following four steps:
	
	{\bf Step 1.} Given a positive integer $n\leq q$ and select an $(n-1)$-element subset $$H=\left\{\alpha_{1},\ldots,\alpha_{n-1}\right\}\subseteq\mathbb{F}_{q}^{*};$$ 
	
	{\bf Step 2.} Given a positive integer $k$ with $2\leq k\leq n$ and calculate the following set
	$$\widetilde{H}_{2}=\left\{\sum\limits_{\alpha_{i}\in \mathcal{H}}\alpha_{i}^{-1}:  \mathcal{H}\subseteq H, \#\mathcal{H}=k-1\right\};$$
	
	{\bf Step 3.} If $\widetilde{H}_{2}=\mathbb{F}_{q}$, it means $\eta^{-1}\in\widetilde{H}_{2}$, i.e., there exist a  $(k-1)$-element subset $\mathcal{L}\subseteq\left\{\alpha_{1},\ldots,\alpha_{n-1}\right\}$ such that $\eta^{-1}=\sum\limits_{\alpha_{i}\in \mathcal{L}}\alpha_{i}$, then $\mathrm{MGRS}_{n,k}\left(H,\boldsymbol{v},\eta,1\right)$ is NMDS;
	
	{\bf Step 4.} Otherwise, choice $\eta^{-1}\in\mathbb{F}_{q}\backslash \widetilde{H}_{2}$, then $\mathrm{MGRS}_{n,k}\left(H,\boldsymbol{v},\eta,1\right)$ is MDS.
	\subsection{The weight distributions for a class of NMDS MGRS codes}\label{sec3.3} 
	In this subsection, by the counting formula of the subset sum over $\mathbb{Z}_{n}$, for a special class of NMDS MGRS codes discussed in Section \ref{sec3.2}, we completely determine the weight distribution.
	\begin{theorem}\label{MGRSk-1weight}
		If $\mathbb{F}_{q}^{*}=\langle\omega\rangle$ and $\eta=\omega^{t},t\in\left\{1,2,\ldots,q-1\right\}$, then for the NMDS MGRS code $\mathrm{MGRS}_{n,k}\left(\mathbb{F}_{q}^{*},\boldsymbol{v},\eta,k-1\right)$, the total
		number $A_{n-k}$ of codewords with Hamming weight $n-k$ is 
		$$A_{n-k}=\sum_{r\mid \gcd(q-1,k-1)}(-1)^{k-1+\frac{k-1}{r}}\binom{\frac{q-1}{r}}{\frac{k-1}{r}}\sum_{\substack{d\mid (r,t-\frac{k(q-1)}{2})}} \mu\left(\frac{r}{d}\right)d.$$
	\end{theorem}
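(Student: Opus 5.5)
Since $\boldsymbol{\alpha}$ runs over all of $\mathbb{F}_{q}^{*}$, we are in the case $n=q$ with $\{\alpha_{1},\ldots,\alpha_{n-1}\}=\mathbb{F}_{q}^{*}$, and the code is NMDS with parameters $[q,k,q-k]$. The plan is to reduce the count of minimum-weight codewords to counting ``bad'' $k$-subsets of columns of $\boldsymbol{G}_{\boldsymbol{v},n-1,k-1}$. We then translate the bad-subset condition of Lemma \ref{NMDSMDSMGRS} (2) into a subset-sum condition in $\mathbb{Z}_{q-1}$, and finally apply Lemma \ref{Znsubsetsum}. Throughout, $\boldsymbol{v}$ can be taken to be $\boldsymbol{1}$, since scaling coordinates preserves weights.

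\textbf{Step 1: codewords of weight $n-k$ versus dependent $k$-subsets.} A codeword $\boldsymbol{c}=\boldsymbol{m}\boldsymbol{G}$ has weight $n-k$ exactly when it vanishes on exactly $k$ coordinates. Fix a set $T$ of $k$ columns.
\begin{itemize}
\item If the columns in $T$ are independent, no nonzero codeword vanishes on $T$.
\item If they are dependent, then by the NMDS property (Lemma \ref{LCDMDSorNMDSequivalent} (3)(i)) they have rank exactly $k-1$. Hence the codewords vanishing on $T$ form a $1$-dimensional space.
\item Each nonzero codeword in this space vanishes on exactly $T$. It cannot vanish on $k+1$ coordinates, since any $k+1$ columns have full rank $k$ by condition (iii).
\end{itemize}
So the map from weight-$(n-k)$ codewords to their zero sets is $(q-1)$-to-one onto the set of dependent $k$-subsets, and
$$A_{n-k}=(q-1)\cdot\#\{T:\ \#T=k,\ \text{columns of }T\text{ dependent}\}.$$

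\textbf{Step 2: identify the dependent subsets.} If $T$ avoids the last column, the $k\times k$ submatrix is a Vandermonde matrix on distinct $\alpha_i$, so it is nonsingular. If $T$ contains the last column together with the columns indexed by a $(k-1)$-set $\mathcal{K}$, expand the determinant along the last column. This gives a Vandermonde-type determinant equal to a nonzero multiple of $\eta-(-1)^{k}\prod_{\alpha_i\in\mathcal{K}}\alpha_i$, which is exactly the computation behind Lemma \ref{NMDSMDSMGRS} (2). Hence the dependent subsets correspond bijectively to the $(k-1)$-subsets $\mathcal{K}\subseteq\mathbb{F}_q^{*}$ with $\prod_{\alpha\in\mathcal{K}}\alpha=(-1)^{k}\eta$.

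\textbf{Step 3: pass to $\mathbb{Z}_{q-1}$.} Write each $\alpha\in\mathcal{K}$ as $\omega^{a}$ with $a\in\mathbb{Z}_{q-1}$. Use $-1=\omega^{\frac{q-1}{2}}$, which is valid since $q$ is odd, and $\eta=\omega^{t}$. The condition becomes
$$\sum_{a\in A}a\equiv t-\frac{k(q-1)}{2}\pmod{q-1},$$
where $A\subseteq\mathbb{Z}_{q-1}$ runs over $(k-1)$-subsets. The number of such $A$ is $N\!\left(k-1,\,t-\tfrac{k(q-1)}{2},\,\mathbb{Z}_{q-1}\right)$. Apply Lemma \ref{Znsubsetsum} with $n\to q-1$ and $k\to k-1$. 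Its factor $\frac{1}{q-1}$ cancels the factor $q-1$ from Step 1, and this yields exactly the stated formula, reading the binomial as $\binom{(q-1)/r}{(k-1)/r}$.

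The main obstacle is Step 1. The delicate points are verifying that every dependent $k$-set has rank exactly $k-1$ and that no weight-$(n-k)$ codeword is counted twice; both rely on the full NMDS property, including that any $k+1$ columns have full rank. Step 2 only requires a careful sign check of the determinant expansion, which should match Lemma \ref{NMDSMDSMGRS} (2).
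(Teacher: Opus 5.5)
Your proposal is correct. Step 3 matches the paper's proof: you rewrite $\prod_{\alpha\in\mathcal{K}}\alpha=(-1)^k\eta$ as a subset-sum congruence in $\mathbb{Z}_{q-1}$ via $-1=\omega^{(q-1)/2}$, and the factor $\frac{1}{q-1}$ in Lemma \ref{Znsubsetsum} cancels the multiplicity $q-1$. You depart from the paper only in how you reduce the weight count to counting $(k-1)$-subsets.

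The paper argues directly with polynomials. A codeword $\bigl(f(\alpha_1),\ldots,f(\alpha_{n-1}),f(0)+\eta f_{k-1}\bigr)$ can have weight $n-k$ only if $f$ has exactly $k-1$ roots in $\mathcal{I}$ and $f(0)+\eta f_{k-1}=0$; the case of $k$ roots is impossible since $\deg f\le k-1$. Hence $f=\lambda\prod_{\alpha\in\mathcal{J}}(x-\alpha)$. Reading off $f_{k-1}=\lambda$ and $f(0)=(-1)^{k-1}\lambda\prod_{\alpha\in\mathcal{J}}\alpha$ gives the condition $\eta=(-1)^k\prod_{\alpha\in\mathcal{J}}\alpha$, with the factor $q-1$ coming from the free choice of $\lambda$.

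You instead use a general NMDS fact: minimum-weight codewords correspond $(q-1)$-to-one to dependent $k$-subsets of columns, which is essentially the identity $A_{n-k}=A_k^{\perp}$. You then classify the dependent subsets via the determinant $V(\mathcal{K})\bigl(\eta-(-1)^k\prod_{\alpha\in\mathcal{K}}\alpha\bigr)$. This refines the computation behind Lemma \ref{NMDSMDSMGRS}(2) to say exactly which subsets fail, not just whether one does.

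The paper's route is shorter and needs neither the NMDS hypothesis nor a determinant evaluation. It also exhibits the minimum-weight codewords explicitly. Your route costs a check of conditions (i) and (iii) of Lemma \ref{LCDMDSorNMDSequivalent}, which do hold here because every $\alpha_i\neq 0$. In exchange, your Step 1 works for any NMDS code. For instance, the extended code of Theorem \ref{EMGRSk-1weight} could be handled the same way, by classifying which $k$-subsets involving the extra columns are dependent.
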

	{\bf Proof.} By Definition \ref{MGRSdefinition}, it is easy to get  $$\mathrm{MGRS}_{n,k}\left(\boldsymbol{\alpha},\boldsymbol{v},\eta,k-1\right)\triangleq\left\{\left(v_{1} f\left(\alpha_{1}\right), \ldots, v_{n-1} f\left(\alpha_{n-1}\right),v_{n}\left(f\left(0\right)+\eta f_{k-1}\right)\right) | f(x) \in \mathbb{F}_{q}^{k}[x]\right\}.$$
	\iffalse furthermore, the MGRS code $\mathrm{MGRS}_{n,k}\left(\boldsymbol{\alpha},\boldsymbol{v},\eta,k-1\right)$ is NMDS if and only if $$d\left(\mathrm{MGRS}_{n,k}\left(\boldsymbol{\alpha},\boldsymbol{v},\eta,k-1\right)\right)=n-k.$$\fi
	Next, we calculate the number of codewords with minimum
	weight $n-k$. For the convenience, we set $\mathcal{I}=\left\{\alpha_{1},\ldots,\alpha_{n-1}\right\}$. Then $\wt\left(\boldsymbol{c}\right)=n-k$ if and only if 
	$$\#\left\{\alpha_{i}\in\mathcal{I}\ |\ f(\alpha_{i})=0\right\}=k\ \text{and}\ f\left(0\right)+\eta f_{k-1}\neq 0$$
	or
	$$\#\left\{\alpha_{i}\in\mathcal{I}\ |\ f(\alpha_{i})=0\right\}=k-1\ \text{and}\ f\left(0\right)+\eta f_{k-1}=0.$$ 
	
	{\bf Case 1.} If $f\left(0\right)+\eta f_{k-1}\neq0$, then 
	$$
	A_{n-k}=\#\left\{\mathbb{F}_{q}^{k}[x]\cap\left\{f(x)=\lambda\prod\limits_{\alpha_{i}\in\mathcal{J}}\left(x-\alpha_{i}\right)|\ \#\mathcal{J}=k,\mathcal{J}\subseteq\mathcal{I},\lambda\in\mathbb{F}_{q}^{*},f\left(0\right)+\eta f_{k-1}\neq 0\right\}\right\}.$$ 
	Note that for any $f(x)\in \mathbb{F}_{q}^{k}[x]$, $\deg\left(f(x)\right)\leq k-1$, and then $A_{n-k}=0.$
	
	{\bf Cases 2.} If $f\left(0\right)+\eta f_{k-1}=0$, then 
	$$\begin{aligned} 
		A_{n-k}=&\#\left\{\mathbb{F}_{q}^{k}[x]\cap\left\{f(x)=\lambda\prod\limits_{\alpha_{i}\in\mathcal{J}}\left(x-\alpha_{i}\right)|\ \#\mathcal{J}=k-1,\mathcal{J}\subseteq\mathcal{I},\lambda\in\mathbb{F}_{q}^{*},f\left(0\right)+\eta f_{k-1}= 0\right\}\right\}\\
		=&\#\left\{\mathbb{F}_{q}^{k}[x]\cap\left\{f(x)=\lambda\prod\limits_{\alpha_{i}\in\mathcal{J}}\left(x-\alpha_{i}\right)|\ \#\mathcal{J}=k-1,\mathcal{J}\subseteq\mathcal{I},\lambda\in\mathbb{F}_{q}^{*},f_{k-1}=-\frac{f\left(0\right)}{\eta}\right\}\right\}.
	\end{aligned}$$
	Note that
	$$f(x)=\lambda\prod\limits_{\alpha_{i}\in\mathcal{J}}\left(x-\alpha_{i}\right)=\lambda\left(x^{k-1}-\sum\limits_{\alpha_{i}\in\mathcal{J}}\alpha_{i}x^{k-2}+\cdots+(-1)^{k-1}\prod\limits_{\alpha_{i}\in\mathcal{J}}\alpha_{i}\right),$$
	thus, $f_{k-1}=\lambda$ and $f(0)=(-1)^{k-1}\cdot\lambda\cdot\prod\limits_{\alpha_{i}\in\mathcal{J}}\alpha_{i}$. Furthermore, we have
	$$\begin{aligned}A_{n-k}=&\#\left\{\mathbb{F}_{q}^{k}[x]\cap\left\{f(x)=\lambda\prod\limits_{\alpha_{i}\in\mathcal{J}}\left(x-\alpha_{i}\right)|\ \#\mathcal{J}=k-1,\mathcal{J}\subseteq\mathcal{I},\lambda\in\mathbb{F}_{q}^{*},\eta=(-1)^{k}\prod\limits_{\alpha_{i}\in\mathcal{J}}\alpha_{i}\right\}\right\}\\ 
		=&\# \left\{\left(\lambda,\mathcal{J}\right)\ |\ \eta=(-1)^{k}\prod\limits_{\alpha_{i}\in\mathcal{J}}\alpha_{i},\lambda\in\mathbb{F}_{q}^{*},\mathcal{J}\subseteq\mathcal{I}\ \text{and}\ \#\mathcal{J}=k-1\right\}\\
		=&(q-1)\# \left\{\mathcal{J}\ |\ \eta=(-1)^{k}\prod\limits_{\alpha_{i}\in\mathcal{J}}\alpha_{i},\mathcal{J}\subseteq\mathcal{I}\ \text{and}\ \#\mathcal{J}=k-1\right\}.
	\end{aligned}$$ 
	
	Now, by $\eta=\omega^{t},t\in\left\{1,2,\ldots,q-1\right\}$, we can get  
	$$\begin{aligned}A_{n-k}=&(q-1)\# \left\{\mathcal{L}\ |\ \omega^{t}=\omega^{\frac{k(q-1)}{2}+\sum\limits_{{i}\in\mathcal{L}}i},\mathcal{L}\subseteq\left\{1,2,\ldots,q-1\right\}\ \text{and}\ \#\mathcal{L}=k-1\right\}\\
		=&(q-1)\# \left\{\mathcal{L}\ |\frac{k(q-1)}{2}+\sum\limits_{{i}\in\mathcal{L}}i\equiv t(\bmod\ q-1),\mathcal{L}\subseteq\left\{1,2,\ldots,q-1\right\}\ \text{and}\ \#\mathcal{L}=k-1\right\}\\
		=&(q-1)\# \left\{\mathcal{L}\ |\sum\limits_{{i}\in\mathcal{L}}i\equiv t-\frac{k(q-1)}{2}(\bmod\ q-1),\mathcal{L}\subseteq\left\{1,2,\ldots,q-1\right\}\ \text{and}\ \#\mathcal{L}=k-1\right\}\\
		=&(q-1)\# \left\{\mathcal{L}\ |\sum\limits_{{i}\in\mathcal{L}}i= t-\frac{k(q-1)}{2},\mathcal{L}\subseteq\mathbb{Z}_{q-1}\ \text{and}\ \#\mathcal{L}=k-1\right\}.
	\end{aligned}$$
	And so by Lemma \ref{Znsubsetsum}, we have
	$$\begin{aligned}
		A_{n-k}=&(q-1)N\left(k-1,t-\frac{k(q-1)}{2},\mathbb{Z}_{q-1}\right)\\
		=&\sum_{r\mid \gcd(q-1,k-1)}(-1)^{k-1+\frac{k-1}{r}}\binom{\frac{q-1}{r}}{\frac{k-1}{r}}\sum_{\substack{d\mid (r,t-\frac{k(q-1)}{2})}} \mu\left(\frac{r}{d}\right)d .
	\end{aligned}$$   
	
	This completes the proof of Theorem \ref{MGRSk-1weight}.
	
	$\hfill\Box$

	\section{Two classes of MDS or NMDS EMGRS codes and their weight distributions}\label{sec4}
	In this section, we firstly introduce the definition of EMGRS codes and the necessary and sufficient conditions for two classes of EMGRS codes to be MDS. Then, we prove that two classes of EMGRS codes are either MDS or NMDS codes. Furthermore, we derive the necessary and sufficient conditions for these codes to be NMDS, and describe the construction process in details. Finally, by using the subset sum problem, we completely determine the weight distributions for a special class of NMDS EMGRS codes.
	
	\subsection{The definition of EMGRS codes and the MDS property} 
	The definition of the EMGRS code is given in the following 
	\begin{definition}\label{EMGRSdefinition}{\rm( \cite{LiuMGRS}, Definition 4)}
		Let $2\leq k\leq n\leq q$, $1\leq t\leq k-1$, $\boldsymbol{\alpha}=\left(\alpha_{1}, \ldots, \alpha_{n}\right) \in \mathbb{F}_{q}^{n}$ with $\alpha_{i} \neq \alpha_{j}(i \neq j)$ and $\boldsymbol{v}=$ $\left(v_{1}, \ldots, v_{n}\right) \in\left(\mathbb{F}_{q}^{*}\right)^{n}$. The extended modified generalized Reed-Solomon (in short, EMGRS) code $\mathrm{EMGRS}_{n+1,k}\left(\boldsymbol{\alpha},\boldsymbol{v},\eta,t\right)$ is defined as
		$$
		\mathrm{EMGRS}_{n,k}\left(\boldsymbol{\alpha},\boldsymbol{v},\eta,t\right)\triangleq\left\{\left(v_{1} f\left(\alpha_{1}\right), \ldots, v_{n-1} f\left(\alpha_{n-1}\right),v_{n}\left(f\left(0\right)+\eta f_{t}\right),v_{n+1}f_{k-1}\right) | f(x) \in \mathbb{F}_{q}^{k}[x]\right\},
		$$
		where $v_{n+1}\in\mathbb{F}_{q}^{*}$, $\eta\in\mathbb{F}_{q}$ and $f_t$ denotes the coefficient of $x^{t}$ in $f(x)$.
	\end{definition}
	\iffalse\begin{remark}
		It is easy to know that the MGRS code $\mathrm{MGRS}_{n,k}\left(\boldsymbol{\alpha},\boldsymbol{v},\eta,t\right)$ has the following generator matrix 
		$$\boldsymbol{G}_{\boldsymbol{v},n-1,t} =\begin{pmatrix}
			v_{1}&v_{2}&\cdots&v_{n-1}&v_{n}\\
			v_{1}\alpha_{1}&v_{2}\alpha_{2}&\cdots&v_{n-1}\alpha_{n-1}&0\\
			
			\vdots&\vdots& &\vdots&\vdots\\
			v_{1}\alpha_{1}^{t-1}&v_{2}\alpha_{2}^{t-1}&\cdots&v_{n-1}\alpha_{n-1}^{t-1}&0\\
			
			v_{1}\alpha_{1}^{t}&v_{2}\alpha_{2}^{t}&\cdots&v_{n-1}\alpha_{n-1}^{t}&v_{n}\eta\\
			
			v_{1}\alpha_{1}^{t+1}&v_{2}\alpha_{2}^{t+1}&\cdots&v_{n-1}\alpha_{n-1}^{t+1}&0\\
			\vdots&\vdots& &\vdots&\vdots\\ 
			v_{1}\alpha_{1}^{k-1}&v_{2}\alpha_{2}^{k-1}&\cdots&v_{n-1}\alpha_{n-1}^{k-1}&0
		\end{pmatrix}_{k\times n}.$$
	\end{remark}\fi 
	
	In the following Lemma \ref{NMDSMDSEMGRS}, a sufficient and necessary condition for $\mathrm{MGRS}_{n,k}\left(\boldsymbol{\alpha},\boldsymbol{v},\eta,t\right)$ or $\mathrm{EMGRS}_{n,k}\left(\boldsymbol{\alpha},\boldsymbol{v},\eta,t\right)(t\in\left\{1,k-1\right\})$  to
	be MDS are given, respectively.
	\begin{lemma}\label{NMDSMDSEMGRS}{\rm( \cite{LiuMGRS}, Corollaries 1-4)}
		For $\mathrm{EMGRS}_{n,k}\left(\boldsymbol{\alpha},\boldsymbol{v},\eta,t\right)(t\in\left\{1,k-1\right\})$, the following two statements are true, 
		
		$(1)$ $\mathrm{EMGRS}_{n,k}\left(\boldsymbol{\alpha},\boldsymbol{v},\eta,1\right)$ is MDS if and only if $\frac{1}{\eta}\neq \sum\limits_{s=1}^{m}\alpha_{i_{s}}^{-1}$, where $\left\{i_1,\ldots,i_m\right\}$ is any $m$-element subset of $\left\{1,2,\ldots,n-1\right\}$ and $m\in\left\{k-1,k-2\right\}$; 
		
		$(2)$ if $\left\{\alpha_{1},\ldots,\alpha_{n-1}\right\}\subseteq\mathbb{F}_{q}^{*}$, then $\mathrm{EMGRS}_{n+1,k}\left(\boldsymbol{\alpha},\boldsymbol{v},\eta,k-1\right)$ is MDS if and only if $\mathrm{MGRS}_{n,k}\left(\boldsymbol{\alpha},\boldsymbol{v},\eta,k-1\right)$ is MDS.
	\end{lemma}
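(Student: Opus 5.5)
The plan is to apply the column criterion of Lemma \ref{LCDMDSorNMDSequivalent} $(2)$ directly to a generator matrix of $\mathrm{EMGRS}_{n+1,k}(\boldsymbol{\alpha},\boldsymbol{v},\eta,t)$. Scaling columns does not change the MDS property, so we may take $\boldsymbol{v}=\boldsymbol{1}$. The generator matrix then has three kinds of columns:
\begin{itemize}
\item[(a)] the evaluation columns $(1,\alpha_i,\ldots,\alpha_i^{k-1})^{T_e}$ for $1\le i\le n-1$;
\item[(b)] the modified column $\boldsymbol{e}_1+\eta\boldsymbol{e}_{t+1}$;
\item[(c)] the extension column $\boldsymbol{e}_k$.
\end{itemize}
The code is MDS if and only if every $k\times k$ minor is nonzero. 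I will sort the $k$-subsets of columns by which of (b) and (c) they contain, and compute the determinant in each case.

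\textbf{Cases without the modified column.} If only type (a) columns are chosen, the minor is a Vandermonde determinant, hence nonzero. If $\boldsymbol{e}_k$ and $k-1$ columns of type (a) are chosen, expand along $\boldsymbol{e}_k$. This leaves a $(k-1)\times(k-1)$ Vandermonde determinant in the chosen $\alpha_i$'s, which is again nonzero. So these cases never obstruct the MDS property.

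\textbf{Case: the modified column with $k-1$ columns of type (a), indexed by a set $\mathcal{K}$.} By multilinearity the minor equals $D_1+\eta D_{t+1}$. Here $D_j$ is the determinant obtained by replacing the modified column with $\boldsymbol{e}_j$. Expanding along $\boldsymbol{e}_j$ gives, up to sign, the generalized Vandermonde determinant with the row of power $j-1$ deleted, which equals $\sigma_{k-j}(\mathcal{K})\cdot V(\mathcal{K})$ (standard Schur/Jacobi identity). Hence $D_1=\pm\prod_{\mathcal{K}}\alpha_i\,V(\mathcal{K})$.
\begin{itemize}
\item For $t=k-1$, $D_k=\pm V(\mathcal{K})$. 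The vanishing condition becomes $\eta=(-1)^k\prod_{\mathcal{K}}\alpha_i$, recovering Lemma \ref{NMDSMDSMGRS} $(2)$.
\item For $t=1$, $D_2=\pm\sigma_{k-2}(\mathcal{K})V(\mathcal{K})$. The vanishing condition becomes $\sigma_{k-1}(\mathcal{K})=\eta\,\sigma_{k-2}(\mathcal{K})$, i.e. $\frac1\eta=\sum_{\mathcal{K}}\alpha_i^{-1}$ when all $\alpha_i\neq 0$. This gives the $m=k-1$ condition.
\end{itemize}
The careful sign bookkeeping in this step is where I expect the main effort. The signs must come out so that the conditions have exactly the stated shape ($(-1)^k$ in the product case, no sign in the reciprocal-sum case).

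\textbf{Case: both special columns with $k-2$ columns of type (a), indexed by a set $\mathcal{M}$.} First expand along $\boldsymbol{e}_k$. This deletes the last row and leaves a $(k-1)\times(k-1)$ matrix whose special column is $\boldsymbol{e}_1+\eta\boldsymbol{e}_{t+1}$ truncated to its first $k-1$ coordinates.
\begin{itemize}
\item For $t=k-1$, the $\eta$-entry sat in the deleted row, so the special column is just $\boldsymbol{e}_1$. The minor is $\pm\prod_{\mathcal{M}}\alpha_i\,V(\mathcal{M})$, which is nonzero because $\alpha_i\in\mathbb{F}_q^*$. Thus the extension adds no new condition, and $\mathrm{EMGRS}_{n+1,k}$ is MDS exactly when $\mathrm{MGRS}_{n,k}$ is. This proves $(2)$.
\item For $t=1$ (with $k\ge3$), the previous computation in size $k-1$ gives the condition $\frac1\eta\neq\sum_{\mathcal{M}}\alpha_i^{-1}$ over $(k-2)$-subsets. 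For $k=2$ the empty sum is $0$ and the condition holds automatically.
\end{itemize}
Collecting all cases yields $(1)$ with $m\in\{k-1,k-2\}$. Throughout I use the convention, implicit in the statement, that the $\alpha_i$ appearing in the reciprocals are nonzero. If some $\alpha_i=0$ is allowed, the corresponding minor must be checked separately using the $\sigma$-form $\sigma_{m}=\eta\,\sigma_{m-1}$ of the condition.
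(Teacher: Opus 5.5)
Your proof is correct. The paper does not prove this lemma at all; it imports it from \cite{LiuMGRS}, so there is no in-paper argument to match. Your direct verification through Lemma \ref{LCDMDSorNMDSequivalent}$(2)$ is a self-contained replacement, in the same style of column-minor bookkeeping the paper itself uses for Propositions \ref{MDSNMDSMGRS1} and \ref{MDSNMDSEMGRS1}.

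The sign step you flag as the main effort is routine. Place the modified column first; cofactor expansion then gives $D_j=(-1)^{j+1}\sigma_{k-j}(\mathcal{K})V(\mathcal{K})$. Hence the minor equals $V(\mathcal{K})\bigl(\sigma_{k-1}(\mathcal{K})+(-1)^{t}\eta\,\sigma_{k-1-t}(\mathcal{K})\bigr)$.
\begin{itemize}
\item For $t=k-1$ this yields exactly $\eta=(-1)^k\prod_{\mathcal{K}}\alpha_i$.
\item For $t=1$ it yields $\sigma_{k-1}=\eta\,\sigma_{k-2}$, with no sign.
\end{itemize}

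Compared with the bare citation, your route gives two things. First, a single formula valid for every $t$. Second, it shows exactly where the hypothesis $\alpha_i\neq 0$ is used:
\begin{itemize}
\item in the mixed case (b)+(c) for $t=k-1$, where the minor is $\pm\prod_{\mathcal{M}}\alpha_i\,V(\mathcal{M})$;
\item in converting the $\sigma$-conditions into reciprocal sums in $(1)$.
\end{itemize}
This explains why $(2)$ states the hypothesis explicitly, while $(1)$ carries it only implicitly through $\alpha_{i_s}^{-1}$.

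The edge case $\eta=0$ is also covered by your $\sigma$-form. Then $\sigma_{k-1}(\mathcal{K})\neq 0$ and $\sigma_{k-2}(\mathcal{M})\neq 0$ make every minor nonzero, which matches reading $1/\eta$ as never equal to a finite sum.
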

	
	\subsection{An equivalent condition for two classes of EMGRS codes to be NMDS}\label{sec4.2}
	In this subsection, we focus on two classes of EMGRS codes $\mathrm{EMGRS}_{n,k}\left(\boldsymbol{\alpha},\boldsymbol{v},\eta,1\right)$ and $\mathrm{EMGRS}_{n,k}\left(\boldsymbol{\alpha},\boldsymbol{v},\eta,k-1\right)$. In order to present the necessary and sufficient conditions for these codes to be NMDS, we firstly present the following 
	\begin{proposition}\label{MDSNMDSEMGRS1}
		If $\left\{\alpha_{1},\ldots,\alpha_{n-1}\right\}\subseteq\mathbb{F}_{q}^{*}$, then $\mathrm{EMGRS}_{n,k}\left(\boldsymbol{\alpha},\boldsymbol{v},\eta,k-1\right)$ is either MDS or NMDS.
	\end{proposition}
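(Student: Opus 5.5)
The plan is to follow the proof of Proposition \ref{MDSNMDSMGRS1}: assume the code is not MDS and verify conditions $(i)$–$(iii)$ of Lemma \ref{LCDMDSorNMDSequivalent} $(3)$ for a generator matrix. Scaling columns by nonzero constants gives a monomially equivalent code, so we may take $\boldsymbol{v}=\boldsymbol{1}$. The generator matrix $\boldsymbol{G}$ of $\mathrm{EMGRS}_{n,k}\left(\boldsymbol{\alpha},\boldsymbol{1},\eta,k-1\right)$ then has three kinds of columns:
\begin{itemize}
\item the $n-1$ Vandermonde columns $\boldsymbol{c}_i=(1,\alpha_i,\ldots,\alpha_i^{k-1})^{T_e}$ for $1\leq i\leq n-1$;
\item the column $\boldsymbol{c}_n=\boldsymbol{e}_1+\eta\boldsymbol{e}_k$;
\item the column $\boldsymbol{c}_{n+1}=\boldsymbol{e}_k$, coming from the coordinate $f_{k-1}$.
\end{itemize}
Condition $(ii)$ is immediate: if the code is not MDS, Lemma \ref{LCDMDSorNMDSequivalent} $(2)$ gives $k$ linearly dependent columns. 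The real work is $(i)$ and $(iii)$. Both reduce to one fact, which uses $\alpha_i\in\mathbb{F}_q^{*}$ and the distinctness of the $\alpha_i$: for any $m\leq k-2$ Vandermonde columns, the rows $2,\ldots,k-1$ form the matrix $\left(\alpha_{i}^{j}\right)_{1\leq j\leq k-2}$. This is $\diag(\alpha_i)$ times a Vandermonde matrix, so it has full column rank $m$.

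For $(i)$, take any $k-1$ columns and split into four cases according to which special columns occur.
\begin{itemize}
\item Only Vandermonde columns: the first $k-1$ rows give a nonsingular Vandermonde minor.
\item $\boldsymbol{c}_{n+1}$ together with $k-2$ Vandermonde columns: expand along row $k$, where $\boldsymbol{c}_{n+1}$ has its only nonzero entry. This reduces to a $(k-2)\times(k-2)$ Vandermonde minor in rows $1,\ldots,k-2$, which is nonzero.
\item $\boldsymbol{c}_n$ together with $k-2$ Vandermonde columns: use rows $1,\ldots,k-1$. Expanding along $\boldsymbol{c}_n$, whose only entry in these rows is a $1$ in row $1$, leaves the $(k-2)\times(k-2)$ block in rows $2,\ldots,k-1$. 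This is exactly the minor $\boldsymbol{R}_1$ from the proof of Proposition \ref{MDSNMDSMGRS1}, and it is nonzero by the key fact.
\item Both special columns together with $k-3$ Vandermonde columns: the two special columns span $\boldsymbol{e}_1$ and $\boldsymbol{e}_k$. Modulo these, each Vandermonde column reduces to its rows $2,\ldots,k-1$, and by the key fact these $k-3$ vectors are independent. Hence the rank is $2+(k-3)=k-1$.
\end{itemize}

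For $(iii)$, take any $k+1$ columns. At most two are special, so at least $k-1$ are Vandermonde. If at least $k$ are Vandermonde, a $k\times k$ Vandermonde minor already gives rank $k$. Otherwise the chosen set is exactly $\boldsymbol{c}_n$, $\boldsymbol{c}_{n+1}$ and $k-1$ Vandermonde columns. The special columns give $\boldsymbol{e}_1,\boldsymbol{e}_k$, and by the key fact any $k-2$ of the Vandermonde columns, restricted to rows $2,\ldots,k-1$, have rank $k-2$. So the total rank is $k$.

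The main obstacle is bookkeeping rather than any deep step. One must make sure every case, in particular the one containing both extra columns, uses only the nonvanishing of the $\alpha_i$ and never $\eta\neq0$. The small cases also need care: when $k=2$ the block of rows $2,\ldots,k-1$ is empty, so the argument degenerates and should be checked directly.
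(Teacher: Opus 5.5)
Your proposal is correct and follows essentially the same route as the paper: normalize to $\boldsymbol{v}=\boldsymbol{1}$, obtain condition $(ii)$ from the non-MDS assumption, and verify $(i)$ and $(iii)$ of Lemma \ref{LCDMDSorNMDSequivalent} $(3)$ using the same case split on which of the two special columns occur (the paper's $\boldsymbol{S}_1$--$\boldsymbol{S}_4$ and $\boldsymbol{U}_1$--$\boldsymbol{U}_4$). The differences are cosmetic: you handle the cases with both special columns by working modulo $\langle \boldsymbol{e}_1,\boldsymbol{e}_k\rangle$ instead of writing explicit minors, and in the last case of $(iii)$ the paper drops $\boldsymbol{c}_n$ and uses $\boldsymbol{c}_{n+1}$ with the $k-1$ Vandermonde columns, a minor that does not need $\alpha_i\neq 0$.
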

	{\bf Proof.} By Definition \ref{MGRSdefinition} and the definition of the monomially equivalent, we only focus on $\mathrm{EMGRS}_{n,k}\left(\boldsymbol{\alpha},\boldsymbol{1},\eta,k-1\right)$, which has the generator matrix
	$$\boldsymbol{G}_{\boldsymbol{1},n-1,k-1}^{(E)} =\begin{pmatrix}
		1&1&\cdots&1&1&0\\
		\alpha_{1}&\alpha_{2}&\cdots&\alpha_{n-1}&0&0\\
		\alpha_{1}^{2}&\alpha_{2}^{2}&\cdots&\alpha_{n-1}^{2}&0&0\\
		\vdots&\vdots& &\vdots&\vdots&\vdots\\
		\alpha_{1}^{k-2}&\alpha_{2}^{k-2}&\cdots&\alpha_{n-1}^{k-2}&0&0\\
		\alpha_{1}^{k-1}&\alpha_{2}^{k-1}&\cdots&\alpha_{n-1}^{k-1}&\eta&1
	\end{pmatrix}.
	$$
	
	If $\mathrm{EMGRS}_{n,k}\left(\boldsymbol{\alpha},\boldsymbol{v},\eta,k-1\right)$ is not MDS, then by Lemma \ref{LCDMDSorNMDSequivalent}, there exists $k$ linearly dependent
	columns in $\boldsymbol{G}_{\boldsymbol{1},n-1,k-1}^{(E)}$, i.e., Lemma \ref{LCDMDSorNMDSequivalent} $(2)$ holds.
	
	Next, we prove Lemma \ref{LCDMDSorNMDSequivalent} $(1)$ and $(3)$.
	
	For Lemma \ref{LCDMDSorNMDSequivalent} $(1)$, we only prove that there exists some $(k-1)\times (k-1)$ non-zero minor for the following matrix 
	$$\boldsymbol{S}_{1} =\begin{pmatrix}	1&1&\cdots&1\\
		\alpha_{1}&\alpha_{2}&\cdots&\alpha_{k-1}\\
		\alpha_{1}^{2}&\alpha_{2}^{2}&\cdots&\alpha_{k-1}^{2}\\
		\vdots&\vdots& &\vdots\\
		\alpha_{1}^{k-2}&\alpha_{2}^{k-2}&\cdots&\alpha_{k-1}^{k-2}\\
		\alpha_{1}^{k-1}&\alpha_{2}^{k-1}&\cdots&\alpha_{k-1}^{k-1}\end{pmatrix}_{k\times (k-1)},\ \boldsymbol{S}_{2} =\begin{pmatrix}	1&1&\cdots&1&1\\
		\alpha_{1}&\alpha_{2}&\cdots&\alpha_{k-2}&0\\
		\alpha_{1}^{2}&\alpha_{2}^{2}&\cdots&\alpha_{k-2}^{2}&0\\
		\vdots&\vdots& &\vdots&\vdots\\
		\alpha_{1}^{k-2}&\alpha_{2}^{k-2}&\cdots&\alpha_{k-2}^{k-2}&0\\
		\alpha_{1}^{k-1}&\alpha_{2}^{k-1}&\cdots&\alpha_{k-2}^{k-1}&\eta
	\end{pmatrix}_{k\times (k-1)},$$
	$$\boldsymbol{S}_{3} =\begin{pmatrix}	1&1&\cdots&1&0\\
		\alpha_{1}&\alpha_{2}&\cdots&\alpha_{k-2}&0\\
		\alpha_{1}^{2}&\alpha_{2}^{2}&\cdots&\alpha_{k-2}^{2}&0\\
		\vdots&\vdots& &\vdots&\vdots\\
		\alpha_{1}^{k-2}&\alpha_{2}^{k-2}&\cdots&\alpha_{k-2}^{k-2}&0\\
		\alpha_{1}^{k-1}&\alpha_{2}^{k-1}&\cdots&\alpha_{k-2}^{k-1}&1
	\end{pmatrix}_{k\times (k-1)},\ \boldsymbol{S}_{4} =\begin{pmatrix}	1&1&\cdots&1&1&0\\
		\alpha_{1}&\alpha_{2}&\cdots&\alpha_{k-3}&0&0\\
		\alpha_{1}^{2}&\alpha_{2}^{2}&\cdots&\alpha_{k-3}^{2}&0&0\\
		\vdots&\vdots& &\vdots&\vdots&\vdots\\
		\alpha_{1}^{k-2}&\alpha_{2}^{k-2}&\cdots&\alpha_{k-3}^{k-2}&0&0\\
		\alpha_{1}^{k-1}&\alpha_{2}^{k-1}&\cdots&\alpha_{k-3}^{k-1}&\eta&1
	\end{pmatrix}_{k\times (k-1)}.
	$$
	
	For $\boldsymbol{S}_{1} $, we consider the submatrix $\boldsymbol{T}_1$ given by deleting the last row of $\boldsymbol{S}_{1} $, i.e.,
	$$\boldsymbol{T}_1=\begin{pmatrix}	1&1&\cdots&1\\
		\alpha_{1}&\alpha_{2}&\cdots&\alpha_{k-1}\\
		\alpha_{1}^{2}&\alpha_{2}^{2}&\cdots&\alpha_{k-1}^{2}\\
		\vdots&\vdots& &\vdots\\
		\alpha_{1}^{k-2}&\alpha_{2}^{k-2}&\cdots&\alpha_{k-1}^{k-2}
	\end{pmatrix}_{(k-1)\times (k-1)},
	$$
	it is obvious that
	$$\det\left(\boldsymbol{T}_1\right)=\prod\limits_{1 \leq s <t\leq k-1}\left(\alpha _{t}-\alpha _{s}\right)\neq 0,$$
	which means that $\boldsymbol{T}_1$ is a $(k-1)\times (k-1)$ non-zero minor of $\boldsymbol{S}_{1} $.
	
	For $\boldsymbol{S}_{2} $, we consider the submatrix $\boldsymbol{T}_2$ given by deleting the last row of $\boldsymbol{S}_{2} $, i.e.,
	$$\boldsymbol{T}_2=\begin{pmatrix}	1&1&\cdots&1&1\\
		\alpha_{1}&\alpha_{2}&\cdots&\alpha_{k-2}&0\\
		\alpha_{1}^{2}&\alpha_{2}^{2}&\cdots&\alpha_{k-2}^{2}&0\\
		\vdots&\vdots& &\vdots&\vdots\\
		\alpha_{1}^{k-2}&\alpha_{2}^{k-2}&\cdots&\alpha_{k-2}^{k-2}&0
	\end{pmatrix}_{(k-1)\times (k-1)},
	$$
	by $\left\{\alpha_{1},\ldots,\alpha_{n-1}\right\}\subseteq\mathbb{F}_{q}^{*}$, we have
	$$\det\left(\boldsymbol{T}_2\right)=(-1)^{k}\cdot\prod\limits_{i=1}^{k-1}\alpha_{i}\cdot\prod\limits_{1 \leq s <t\leq k-2}\left(\alpha _{t}-\alpha _{s}\right)\neq 0,$$
	which means that $\boldsymbol{T}_2$ is a $(k-1)\times (k-1)$ non-zero minor of $\boldsymbol{S}_{2} $.
	
	For $\boldsymbol{S}_{3} $, we consider the submatrix $\boldsymbol{T}_3$ given by deleting the penultimate row of $\boldsymbol{S}_{3} $, i.e.,
	$$\boldsymbol{T}_3=\begin{pmatrix}	1&1&\cdots&1&0\\
		\alpha_{1}&\alpha_{2}&\cdots&\alpha_{k-2}&0\\
		\alpha_{1}^{2}&\alpha_{2}^{2}&\cdots&\alpha_{k-2}^{2}&0\\
		\vdots&\vdots& &\vdots&\vdots\\
		\alpha_{1}^{k-3}&\alpha_{2}^{k-3}&\cdots&\alpha_{k-2}^{k-3}&0\\
		\alpha_{1}^{k-1}&\alpha_{2}^{k-1}&\cdots&\alpha_{k-2}^{k-1}&1
	\end{pmatrix}_{(k-1)\times (k-1)},
	$$
	it is obvious that
	$$\det\left(\boldsymbol{T}_3\right)=(-1)^{2k-2}\cdot\prod\limits_{1 \leq s <t\leq k-2}\left(\alpha _{t}-\alpha _{s}\right)\neq 0,$$
	which means that $\boldsymbol{T}_3$ is a $(k-1)\times (k-1)$ non-zero minor of $\boldsymbol{S}_{3} $.
	
	For $\boldsymbol{S}_{4} $, we consider the submatrix $\boldsymbol{T}_4$ given by deleting the penultimate row of $\boldsymbol{S}_{4} $, i.e.,
	$$\boldsymbol{T}_4=\begin{pmatrix}	1&1&\cdots&1&1&0\\
		\alpha_{1}&\alpha_{2}&\cdots&\alpha_{k-3}&0&0\\
		\alpha_{1}^{2}&\alpha_{2}^{2}&\cdots&\alpha_{k-3}^{2}&0&0\\
		\vdots&\vdots& &\vdots&\vdots&\vdots\\
		\alpha_{1}^{k-3}&\alpha_{2}^{k-3}&\cdots&\alpha_{k-3}^{k-3}&0&0\\
		\alpha_{1}^{k-1}&\alpha_{2}^{k-1}&\cdots&\alpha_{k-3}^{k-1}&\eta&1
	\end{pmatrix}_{(k-1)\times (k-1)},
	$$
	by $\left\{\alpha_{1},\ldots,\alpha_{n-1}\right\}\subseteq\mathbb{F}_{q}^{*}$, we have
	$$\det\left(\boldsymbol{T}_4\right)=(-1)^{2k-2}\cdot(-1)^{k-1}\cdot\prod\limits_{i=1}^{k-3}\alpha_{i}\cdot\prod\limits_{1 \leq s <t\leq k-3}\left(\alpha _{t}-\alpha _{s}\right)\neq 0,$$
	which means that $\boldsymbol{T}_4$ is a $(k-1)\times (k-1)$ non-zero minor of $\boldsymbol{S}_{4} $.
	
	For Lemma \ref{LCDMDSorNMDSequivalent} $(2)$, we only prove that there exists some $k\times k$ non-zero minor for the following matrix
	$$\boldsymbol{U}_{1} =\begin{pmatrix}	1&1&\cdots&1\\
		\alpha_{1}&\alpha_{2}&\cdots&\alpha_{k+1}\\
		\alpha_{1}^{2}&\alpha_{2}^{2}&\cdots&\alpha_{k+1}^{2}\\
		\vdots&\vdots& &\vdots\\
		\alpha_{1}^{k-2}&\alpha_{2}^{k-2}&\cdots&\alpha_{k+1}^{k-2}\\
		\alpha_{1}^{k-1}&\alpha_{2}^{k-1}&\cdots&\alpha_{k+1}^{k-1}\end{pmatrix}_{k\times (k+1)},\ \boldsymbol{S}_{2} =\begin{pmatrix}	1&1&\cdots&1&1\\
		\alpha_{1}&\alpha_{2}&\cdots&\alpha_{k}&0\\
		\alpha_{1}^{2}&\alpha_{2}^{2}&\cdots&\alpha_{k}^{2}&0\\
		\vdots&\vdots& &\vdots&\vdots\\
		\alpha_{1}^{k-2}&\alpha_{2}^{k-2}&\cdots&\alpha_{k}^{k-2}&0\\
		\alpha_{1}^{k-1}&\alpha_{2}^{k-1}&\cdots&\alpha_{k}^{k-1}&\eta
	\end{pmatrix}_{k\times (k+1)},$$
	$$\boldsymbol{U}_{3} =\begin{pmatrix}	1&1&\cdots&1&0\\
		\alpha_{1}&\alpha_{2}&\cdots&\alpha_{k}&0\\
		\alpha_{1}^{2}&\alpha_{2}^{2}&\cdots&\alpha_{k}^{2}&0\\
		\vdots&\vdots& &\vdots&\vdots\\
		\alpha_{1}^{k-2}&\alpha_{2}^{k-2}&\cdots&\alpha_{k}^{k-2}&0\\
		\alpha_{1}^{k-1}&\alpha_{2}^{k-1}&\cdots&\alpha_{k}^{k-1}&1
	\end{pmatrix}_{k\times (k+1)},\ \boldsymbol{U}_{4} =\begin{pmatrix}	1&1&\cdots&1&1&0\\
		\alpha_{1}&\alpha_{2}&\cdots&\alpha_{k-1}&0&0\\
		\alpha_{1}^{2}&\alpha_{2}^{2}&\cdots&\alpha_{k-1}^{2}&0&0\\
		\vdots&\vdots& &\vdots&\vdots&\vdots\\
		\alpha_{1}^{k-2}&\alpha_{2}^{k-2}&\cdots&\alpha_{k-1}^{k-2}&0&0\\
		\alpha_{1}^{k-1}&\alpha_{2}^{k-1}&\cdots&\alpha_{k-1}^{k-1}&\eta&1
	\end{pmatrix}_{k\times (k+1)}.
	$$
	
	For $\boldsymbol{U}_{i}(i=1,2,3)$, by deleting the last column of $\boldsymbol{U}_{i}(i=1,2,3)$, we can obtain a  $k\times k$ Vandermonde determinant, which is obvious a $k\times k$ non-zero minor of $\boldsymbol{U}_{i}(i=1,2,3)$.
	
	For $\boldsymbol{U}_{4}$, by deleting the penultimate column of $\boldsymbol{U}_{4}$, we can obtain a  $k\times k$ non-zero minor of $\boldsymbol{U}_{4}$ as the following
	$$\begin{pmatrix}	1&1&\cdots&1&0\\
		\alpha_{1}&\alpha_{2}&\cdots&\alpha_{k-1}&0\\
		\alpha_{1}^{2}&\alpha_{2}^{2}&\cdots&\alpha_{k-1}^{2}&0\\
		\vdots&\vdots& &\vdots&\vdots\\
		\alpha_{1}^{k-2}&\alpha_{2}^{k-2}&\cdots&\alpha_{k-1}^{k-2}&0\\
		\alpha_{1}^{k-1}&\alpha_{2}^{k-1}&\cdots&\alpha_{k-1}^{k-1}&1
	\end{pmatrix}_{k\times k}.$$  
	
	This completes the proof of Proposition \ref{MDSNMDSEMGRS1}.
	
	$\hfill\Box$
	
	Furthermore, by Lemma \ref{NMDSMDSEMGRS} (4) and Proposition \ref{MDSNMDSEMGRS1}, we have the following 
	\begin{theorem}\label{NMDSEMGRSK-1}
		If $\left\{\alpha_{1},\ldots,\alpha_{n-1}\right\}\subseteq\mathbb{F}_{q}^{*}$, then $\mathrm{EMGRS}_{n,k}\left(\boldsymbol{\alpha},\boldsymbol{v},\eta,k-1\right)$ is NMDS if and only if there exists some $(k-1)$-element subset $\mathcal{K}\subseteq\left\{\alpha_{1},\ldots,\alpha_{n-1}\right\}$ such that $\eta=(-1)^{k}\prod\limits_{\alpha_{i}\in \mathcal{K}}\alpha_{i}$.
	\end{theorem}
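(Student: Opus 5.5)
The plan is to combine Proposition \ref{MDSNMDSEMGRS1} with the MDS criterion of Lemma \ref{NMDSMDSEMGRS} $(2)$ (the statement ``Lemma \ref{NMDSMDSEMGRS} (4)'' in the text refers to this item), and then with Lemma \ref{NMDSMDSMGRS} $(2)$. Since $\{\alpha_1,\ldots,\alpha_{n-1}\}\subseteq\mathbb{F}_q^*$, Proposition \ref{MDSNMDSEMGRS1} says that $\mathrm{EMGRS}_{n,k}(\boldsymbol{\alpha},\boldsymbol{v},\eta,k-1)$ is either MDS or NMDS. MDS and NMDS are mutually exclusive, because the minimum distances $n-k+2$ and $n-k+1$ of the extended length-$(n+1)$ code differ. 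Hence the code is NMDS if and only if it is not MDS.

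Next we invoke Lemma \ref{NMDSMDSEMGRS} $(2)$. Under the hypothesis $\{\alpha_1,\ldots,\alpha_{n-1}\}\subseteq\mathbb{F}_q^*$, it gives that the EMGRS code with $t=k-1$ is MDS if and only if $\mathrm{MGRS}_{n,k}(\boldsymbol{\alpha},\boldsymbol{v},\eta,k-1)$ is MDS. By Lemma \ref{NMDSMDSMGRS} $(2)$, the latter holds if and only if $\eta\neq(-1)^k\prod_{\alpha_i\in\mathcal{K}}\alpha_i$ for every $(k-1)$-subset $\mathcal{K}\subseteq\{\alpha_1,\ldots,\alpha_{n-1}\}$. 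Negating this gives the result: the EMGRS code is not MDS, equivalently NMDS, exactly when some $(k-1)$-subset $\mathcal{K}$ satisfies $\eta=(-1)^k\prod_{\alpha_i\in\mathcal{K}}\alpha_i$.

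The only delicate point is bookkeeping rather than mathematics. The paper's indexing of the EMGRS code fluctuates between $n$ and $n+1$, and the item numbers of the cited lemma are off. I would make explicit that the EMGRS code here is the length-$(n+1)$ extension of the length-$n$ MGRS code. This is needed so that Lemma \ref{NMDSMDSEMGRS} $(2)$ applies to the same pair $(\boldsymbol{\alpha},\eta)$ as Lemma \ref{NMDSMDSMGRS} $(2)$, and so that ``not MDS'' together with Proposition \ref{MDSNMDSEMGRS1} really means distance $n-k+1$.

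As a sanity check, one could also argue directly. A codeword vanishing on $k-1$ coordinates among the $\alpha_i$ comes from $f=\lambda\prod_{\alpha_i\in\mathcal{J}}(x-\alpha_i)$ with $\#\mathcal{J}=k-1$, so $f_{k-1}=\lambda\neq0$ and the last coordinate is automatically nonzero. The extra zero at position $n$ then forces $f(0)+\eta\lambda=0$, i.e. $\eta=(-1)^k\prod_{\alpha_i\in\mathcal{J}}\alpha_i$, exactly as in the proof of Theorem \ref{MGRSk-1weight}.
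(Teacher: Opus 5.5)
Your proposal is correct and follows the paper's own argument. The paper likewise gets the theorem in one line by combining Proposition \ref{MDSNMDSEMGRS1} (the code is MDS or NMDS) with the EMGRS MDS criterion of Lemma \ref{NMDSMDSEMGRS}; its reference to item $(4)$ means the $t=k-1$ statement, i.e.\ item $(2)$ here, and that criterion is made explicit through Lemma \ref{NMDSMDSMGRS} $(2)$. Your explicit appeal to the MGRS product criterion and your note that the extended code has length $n+1$ simply make that deduction precise.
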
 
	
	In the similar proof as that for Proposition \ref{MDSNMDSEMGRS1}, one can prove the following
	\begin{theorem}\label{MDSNMDSEMGRS2}
		If $\left\{\alpha_{1},\ldots,\alpha_{n-1}\right\}\subseteq\mathbb{F}_{q}^{*}$, then $\mathrm{EMGRS}_{n,k}\left(\boldsymbol{\alpha},\boldsymbol{v},\eta,1\right)$ is either MDS or NMDS.
	\end{theorem}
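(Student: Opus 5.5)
We follow the same pattern as the proof of Proposition \ref{MDSNMDSEMGRS1}, working with Lemma \ref{LCDMDSorNMDSequivalent} $(3)$. By monomial equivalence we may take $\boldsymbol{v}=\boldsymbol{1}$. Then $\mathrm{EMGRS}_{n,k}\left(\boldsymbol{\alpha},\boldsymbol{1},\eta,1\right)$ has the generator matrix $\boldsymbol{G}^{(E)}_{\boldsymbol{1},n-1,1}$. Its first $n-1$ columns are the Vandermonde columns $(1,\alpha_i,\ldots,\alpha_i^{k-1})^{T_e}$. The $n$-th column is $\boldsymbol{a}=(1,\eta,0,\ldots,0)^{T_e}$. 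The last column is $\boldsymbol{e}_k=(0,\ldots,0,1)^{T_e}$. If the code is not MDS, some $k$ columns are dependent, which is condition $(ii)$. It therefore remains to check condition $(i)$, that any $k-1$ columns are independent, and condition $(iii)$, that any $k+1$ columns have rank $k$.

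\textbf{Condition $(i)$.} A set of $k-1$ columns contains at most the two special columns $\boldsymbol{a}$ and $\boldsymbol{e}_k$, so there are four types, as with $\boldsymbol{S}_1,\ldots,\boldsymbol{S}_4$. In each type we exhibit a nonzero $(k-1)\times(k-1)$ minor.
\begin{itemize}
\item Only Vandermonde columns: delete the last row, leaving a Vandermonde determinant.
\item Vandermonde columns plus $\boldsymbol{a}$: delete the second row (the $\alpha$-row). The Vandermonde part then has rows $1,\alpha^2,\ldots,\alpha^{k-2}$ and the $\boldsymbol{a}$-column becomes $(1,0,\ldots,0)^{T_e}$. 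Expanding along that column leaves a matrix with rows $\alpha,\ldots,\alpha^{k-2}$, whose determinant is $\prod\alpha_i$ times a Vandermonde determinant. This is nonzero because $\alpha_i\in\mathbb{F}_q^{*}$.
\item Vandermonde columns plus $\boldsymbol{e}_k$: expand along $\boldsymbol{e}_k$ using the last row, and keep rows $0,\ldots,k-3$, giving a Vandermonde determinant.
\item Both $\boldsymbol{a}$ and $\boldsymbol{e}_k$ with $k-3$ Vandermonde columns: expand along $\boldsymbol{e}_k$ in row $k$ and along $\boldsymbol{a}$ in row $1$. This leaves rows $\alpha,\ldots,\alpha^{k-3}$ of the Vandermonde columns, whose determinant is again $\prod\alpha_i$ times a Vandermonde determinant. (For small $k$ the degenerate cases are trivial; for $k=2$, $\boldsymbol{a}$ and $\boldsymbol{e}_2$ are each nonzero.)
\end{itemize}

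\textbf{Condition $(iii)$.} For $k+1$ columns, there are again four types.
\begin{itemize}
\item At most one special column: the chosen set contains at least $k$ Vandermonde columns, and these alone already give a nonzero $k\times k$ Vandermonde minor.
\item Both special columns with $k-1$ Vandermonde columns: drop $\boldsymbol{a}$. The remaining $k-1$ Vandermonde columns together with $\boldsymbol{e}_k$ have determinant equal to the $(k-1)\times(k-1)$ Vandermonde determinant in rows $0,\ldots,k-2$, which is nonzero.
\end{itemize}

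Combining $(i)$, $(ii)$ and $(iii)$ with Lemma \ref{LCDMDSorNMDSequivalent} $(3)$ gives the claim. The only delicate step is the mixed minor in $(i)$ involving $\boldsymbol{a}$. This is where the hypothesis $\alpha_i\neq0$ is used, and where the second row must be deleted rather than the last, because deleting a different row would produce determinants involving $\eta$ that could vanish.
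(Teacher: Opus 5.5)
Your proposal is correct and is essentially the paper's argument. The paper only says the proof is analogous to Proposition~\ref{MDSNMDSEMGRS1}, and you carry out that analogy with the right adaptation: for $t=1$ the entry $\eta$ sits in the $x^{1}$-row, so that is the row to delete in the minors containing $\boldsymbol{a}$. One small slip in the write-up: after deleting the $x^{1}$-row and expanding along $\boldsymbol{a}$ (and $\boldsymbol{e}_k$), the surviving rows are $\alpha^{2},\ldots,\alpha^{k-1}$ (resp.\ $\alpha^{2},\ldots,\alpha^{k-2}$), so these minors equal $\pm\prod\alpha_i^{2}$ times a Vandermonde determinant rather than what you wrote, which is still nonzero because $\alpha_i\neq 0$.
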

	
	Furthermore, by Lemma \ref{NMDSMDSEMGRS} (3) and Theorem \ref{MDSNMDSEMGRS2}, we have the following
	\begin{theorem}\label{NMDSEMGRS1}
		If $\left\{\alpha_{1},\ldots,\alpha_{n-1}\right\}\subseteq\mathbb{F}_{q}^{*}$, then $\mathrm{EMGRS}_{n,k}\left(\boldsymbol{\alpha},\boldsymbol{v},\eta,1\right)$ is NMDS if and only if there exist some $m$-element subset $\left\{i_1,\ldots,i_m\right\}$ of $\left\{1,2,\ldots,n-1\right\}$ such that $$\frac{1}{\eta}= \sum\limits_{s=1}^{m}\alpha_{i_{s}}^{-1}$$ where $m=k-1$ or $k-2$.
	\end{theorem}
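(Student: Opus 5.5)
The statement follows once two facts are combined: the dichotomy of Theorem \ref{MDSNMDSEMGRS2}, which says $\mathrm{EMGRS}_{n,k}\left(\boldsymbol{\alpha},\boldsymbol{v},\eta,1\right)$ is either MDS or NMDS, and the MDS criterion of Lemma \ref{NMDSMDSEMGRS} $(1)$ for the case $t=1$. Since the two possibilities are mutually exclusive, the code is NMDS exactly when it is not MDS. Negating the criterion of Lemma \ref{NMDSMDSEMGRS} $(1)$, it fails to be MDS exactly when $\frac{1}{\eta}=\sum_{s=1}^{m}\alpha_{i_s}^{-1}$ for some $m$-subset with $m\in\{k-1,k-2\}$. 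So the real work is in establishing Theorem \ref{MDSNMDSEMGRS2}. The paper only asserts it "in the similar proof", so I would carry it out explicitly.

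By monomial equivalence I take $\boldsymbol{v}=\boldsymbol{1}$. The generator matrix then has:
\begin{itemize}
\item Vandermonde columns $(1,\alpha_i,\ldots,\alpha_i^{k-1})^{T}$ for $1\le i\le n-1$;
\item the column $(1,\eta,0,\ldots,0)^{T}$;
\item the column $(0,\ldots,0,1)^{T}$.
\end{itemize}
By Lemma \ref{LCDMDSorNMDSequivalent} $(3)$, condition $(ii)$ is simply the non-MDS hypothesis. For $(i)$ and $(iii)$ I would split any set of $k-1$ (resp. $k+1$) columns according to which of the two special columns it contains, giving four cases each, mirroring $\boldsymbol{S}_1,\ldots,\boldsymbol{S}_4$ and $\boldsymbol{U}_1,\ldots,\boldsymbol{U}_4$.

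For $(i)$, the cases and the nonzero minors I would exhibit are as follows.
\begin{itemize}
\item Only Vandermonde columns: delete the last row to get a Vandermonde determinant.
\item Vandermonde columns plus $(1,\eta,0,\ldots,0)^{T}$: delete the second row. Expanding along the special column, the remaining minor is $\prod\alpha_i$ times a Vandermonde in powers $1,\ldots,k-2$; this is where $\alpha_i\ne0$ is used. Alternatively, delete the last row, which gives rows $0,\ldots,k-2$; the special column is then $(1,\eta,0,\ldots,0)^{T}$, and the expansion gives a combination involving $\eta$, so I prefer deleting row $2$.
\item Vandermonde columns plus $e_k$: delete row $k-1$, and expand along $e_k$ to get a Vandermonde determinant.
\item Vandermonde columns plus both special columns: delete row $k-1$, expand along $e_k$, and then along the $\eta$-column as in the second case, giving $\pm\prod\alpha_i$ times a Vandermonde determinant.
\end{itemize}
In every case the minor is nonzero since the $\alpha_i$ are distinct and nonzero.

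For $(iii)$ I need some $k\times k$ nonzero minor among any $k+1$ columns. When at least $k$ of them are Vandermonde columns, those $k$ give a nonzero Vandermonde minor. Otherwise both special columns are present together with $k-1$ Vandermonde columns. In that case I drop the $\eta$-column and expand along $e_k$, leaving a $(k-1)\times(k-1)$ Vandermonde determinant in rows $0,\ldots,k-2$, which is nonzero.

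The step I expect to be the main obstacle is the second case of $(i)$. With $t=1$ the entry $\eta$ sits in row $2$ rather than the last row, so the row-deletion choice used for $t=k-1$ no longer gives a clean minor. Choosing to delete row $2$ and using $\alpha_i\in\mathbb{F}_q^{*}$ should resolve it, but small $k$ (e.g.\ $k=2,3$) needs a separate check.
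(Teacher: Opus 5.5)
Your reduction matches the paper's: combine the MDS-or-NMDS dichotomy of Theorem \ref{MDSNMDSEMGRS2} with the negation of the MDS criterion in Lemma \ref{NMDSMDSEMGRS}$(1)$. The paper never writes out the dichotomy, so your attempt to verify it goes beyond what the paper does. Your first three cases of $(i)$ are correct, and so is your argument for $(iii)$. In the third case, ``delete row $k-1$'' has to mean the row of $x^{k-2}$, since the row carrying the $1$ of $\boldsymbol{e}_k$ must survive; your row indexing switches between $0$-based and $1$-based.

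The fourth case of $(i)$ fails as written. Take $k-3$ Vandermonde columns indexed by $\mathcal{S}$, together with $\boldsymbol{a}=(1,\eta,0,\ldots,0)^{T}$ and $\boldsymbol{e}_k$. Delete the row of $x^{k-2}$ (the penultimate row, the choice from the $t=k-1$ proof) and expand along $\boldsymbol{e}_k$. The surviving $(k-2)\times(k-2)$ matrix has rows $x^{0},\ldots,x^{k-3}$, and for $k\geq 4$ the entry $\eta$ is still in $\boldsymbol{a}$. So you cannot ``expand along the $\eta$-column as in the second case'': there, $\eta$ had been removed by the row deletion. Up to sign, the determinant is
$$V_{\mathcal{S}}\Bigl(\prod_{i\in\mathcal{S}}\alpha_{i}-\eta\,\sigma_{k-4}\left(\alpha_{i}:i\in\mathcal{S}\right)\Bigr)=V_{\mathcal{S}}\prod_{i\in\mathcal{S}}\alpha_{i}\Bigl(1-\eta\sum_{i\in\mathcal{S}}\alpha_{i}^{-1}\Bigr),$$
where $V_{\mathcal{S}}$ is the Vandermonde determinant. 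This vanishes whenever $\eta^{-1}$ is a sum of $k-3$ of the reciprocals, and nothing excludes that. For $k=4$ and $\eta=\alpha_1$, your minor is $\eta-\alpha_1=0$, even though the three columns are independent. It is the same trap you spotted in the second case, just not carried over.

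The repair is to delete the row of $x^{1}$ here too. Then $\boldsymbol{a}$ becomes $(1,0,\ldots,0)^{T}$ and $\boldsymbol{e}_k$ is untouched. Expanding along both leaves the rows $x^{2},\ldots,x^{k-2}$ of the $k-3$ Vandermonde columns, which equals $\prod_{i\in\mathcal{S}}\alpha_i^{2}$ times a Vandermonde determinant. This is nonzero because the $\alpha_i$ are distinct and nonzero. For $k=3$ the penultimate row is the $x^{1}$ row, so only $k\geq 4$ is affected.
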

	
	Based on the above Theorem \ref{NMDSEMGRS1}, a method for constructing MDS EMGRS codes with parameters $[n+1,k,n-k+2]$ or NMDS EMGRS codes with parameters $[n+1,k,n-k+1]$ can be divided into the following four steps:
	
	{\bf Step 1.} Given a positive integer $n\leq q$ and select an $(n-1)$-element subset $$H=\left\{\alpha_{1},\ldots,\alpha_{n-1}\right\}\subseteq\mathbb{F}_{q}^{*};$$ 
	
	{\bf Step 2.} Given a positive integer $2\leq k\leq n$ and calculate the following two sets
	$$\widetilde{H}_{2}=\left\{\sum\limits_{\alpha_{i}\in \mathcal{H}}\alpha_{i}^{-1}:  \mathcal{H}\subseteq H, \#\mathcal{H}=k-2\right\}$$
	and
	$$\widetilde{H}_{3}=\left\{\sum\limits_{\alpha_{i}\in \mathcal{H}}\alpha_{i}^{-1}:  \mathcal{H}\subseteq H, \#\mathcal{H}=k-1\right\};$$
	
	{\bf Step 3.} If $\widetilde{H}_{2}\cup\widetilde{H}_{3}=\mathbb{F}_{q}$, it means $\eta^{-1}\in\widetilde{H}_{2}$ or $\widetilde{H}_{3}$, i.e., there exists a $(k-2)$-element subset or a $(k-1)$-element subset $\mathcal{L}\subseteq\left\{\alpha_{1},\ldots,\alpha_{n-1}\right\}$ such that $\eta^{-1}=\sum\limits_{\alpha_{i}\in \mathcal{L}}\alpha_{i}^{-1}$, then $\mathrm{EMGRS}_{n,k}\left(H,\boldsymbol{v},\eta,1\right)$ is NMDS;
	
	{\bf Step 4.} Otherwise, choice $\eta^{-1}\in\mathbb{F}_{q}\backslash \widetilde{H}_{4}$ with $\widetilde{H}_{4}=\widetilde{H}_{2}\cup\widetilde{H}_{3}$, then $\mathrm{EMGRS}_{n,k}\left(H,\boldsymbol{v},\eta,1\right)$ is MDS.
	\subsection{The weight distributions for a class of NMDS EMGRS codes}\label{sec4.3}
	In this subsection, by the counting formula of the subset sum over $\mathbb{Z}_{n}$, for a special class of NMDS MGRS codes given in Section \ref{sec4.2}, we completely determine the weight distributions.
	\begin{theorem}\label{EMGRSk-1weight}
		If $\mathbb{F}_{q}^{*}=\langle\omega\rangle$ and $\eta=\omega^{t}$ with $1\leq t\leq q-1$, then for the NMDS MGRS code $\mathrm{EMGRS}_{n,k}\left(\mathbb{F}_{q}^{*},\boldsymbol{v},\eta,k-1\right)$, the total
		number $A_{n+1-k}$ of codewords with Hamming weight $n-k$ is 
		$$A_{n+1-k}=\sum_{r\mid \gcd(q-1,k-1)}(-1)^{k-1+\frac{k-1}{r}}\binom{\frac{q-1}{r}}{\frac{k-1}{r}}\sum_{\substack{d\mid (r,t-\frac{k(q-1)}{2})}} \mu\left(\frac{r}{d}\right)d.$$
	\end{theorem}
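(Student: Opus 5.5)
The plan is to count minimum-weight codewords directly, as in the proof of Theorem \ref{MGRSk-1weight}, and then reduce to the same subset-sum count over $\mathbb{Z}_{q-1}$. Since $\boldsymbol{\alpha}$ runs over $\mathbb{F}_{q}^{*}$, we have $n-1=q-1$. Each codeword is
$$\boldsymbol{c}_f=\left(v_{1}f(\alpha_{1}),\ldots,v_{n-1}f(\alpha_{n-1}),v_{n}\left(f(0)+\eta f_{k-1}\right),v_{n+1}f_{k-1}\right)$$
for a nonzero $f(x)\in\mathbb{F}_{q}^{k}[x]$, and it has length $n+1$. By Theorem \ref{NMDSEMGRSK-1} the code is NMDS, so its minimum weight is $n+1-k$. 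Hence $\wt(\boldsymbol{c}_f)=n+1-k$ exactly when $\boldsymbol{c}_f$ has exactly $k$ zero coordinates. Set $\mathcal{I}=\mathbb{F}_q^{*}$.

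I will split according to whether $f_{k-1}$ vanishes.

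\textbf{Case $f_{k-1}=0$.} The last coordinate is zero, so we would need $k-1$ further zeros among the first $n$ coordinates. Since $\deg f\leq k-2$, $f$ has at most $k-2$ roots in $\mathcal{I}$. If it has exactly $k-2$ roots, then $f=\lambda\prod_{\alpha_i\in\mathcal{J}}(x-\alpha_i)$ with $\#\mathcal{J}=k-2$. In that case $f(0)+\eta f_{k-1}=f(0)=\lambda\prod_{\alpha_i\in\mathcal{J}}(-\alpha_i)\neq0$, because every $\alpha_i\neq 0$. So the total number of zeros is at most $k-1<k$, and this case contributes nothing.

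\textbf{Case $f_{k-1}\neq0$.} Now the last coordinate is nonzero, and $f$ has at most $k-1$ roots in $\mathcal{I}$. Reaching $k$ zeros forces two things. First, $f=\lambda\prod_{\alpha_i\in\mathcal{J}}(x-\alpha_i)$ with $\#\mathcal{J}=k-1$ and $\lambda\in\mathbb{F}_q^*$. Second, $f(0)+\eta\lambda=0$. Expanding as in the proof of Theorem \ref{MGRSk-1weight} gives $f_{k-1}=\lambda$ and $f(0)=(-1)^{k-1}\lambda\prod_{\alpha_i\in\mathcal{J}}\alpha_i$. The second condition therefore becomes $\eta=(-1)^k\prod_{\alpha_i\in\mathcal{J}}\alpha_i$, which is independent of $\lambda$. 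Hence
$$A_{n+1-k}=(q-1)\cdot\#\left\{\mathcal{J}\subseteq\mathcal{I}:\#\mathcal{J}=k-1,\ \eta=(-1)^k\prod_{\alpha_i\in\mathcal{J}}\alpha_i\right\}.$$
The factor $q-1$ counts the choices of $\lambda$.

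\textbf{Reduction to subset sums.} Write $\alpha_i=\omega^{i}$ and $-1=\omega^{(q-1)/2}$. The product condition then becomes the congruence $\sum_{i\in\mathcal{L}}i\equiv t-\frac{k(q-1)}{2}\pmod{q-1}$, with $\mathcal{L}\subseteq\mathbb{Z}_{q-1}$ and $\#\mathcal{L}=k-1$. This is exactly the manipulation already carried out in the proof of Theorem \ref{MGRSk-1weight}. Therefore $A_{n+1-k}=(q-1)N\!\left(k-1,t-\frac{k(q-1)}{2},\mathbb{Z}_{q-1}\right)$. Applying Lemma \ref{Znsubsetsum}, the factor $\frac{1}{q-1}$ in that lemma cancels the factor $q-1$, and we obtain the stated formula.

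The only real obstacle is the case analysis. We must make sure that the extra coordinate $v_{n+1}f_{k-1}$ creates no new minimum-weight codewords when $f_{k-1}=0$. This is exactly where the hypothesis $0\notin\{\alpha_1,\ldots,\alpha_{n-1}\}$ is used, since it guarantees $f(0)\neq0$ in that case. Everything after that is identical to the MGRS computation.
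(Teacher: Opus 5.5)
Your proposal is correct and follows essentially the same route as the paper. You count the polynomials $f$ whose codeword has exactly $k$ zero coordinates, rule out every configuration with $f_{k-1}=0$ (using $0\notin\mathcal{I}$ to get $f(0)\neq 0$), and reduce the surviving case to $(q-1)N\left(k-1,t-\frac{k(q-1)}{2},\mathbb{Z}_{q-1}\right)$ via Lemma \ref{Znsubsetsum}; the only difference is that you split on whether $f_{k-1}$ vanishes, while the paper runs four cases according to whether each of the last two coordinates vanishes, which amounts to the same eliminations.
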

	{\bf Proof.} By Definition \ref{MGRSdefinition}, it is easy to get  
	$$
	\begin{aligned}
		&\mathrm{EMGRS}_{n+1,k}\left(\boldsymbol{\alpha},\boldsymbol{v},\eta,k-1\right)\\
		\triangleq&\left\{\left(v_{1} f\left(\alpha_{1}\right), \ldots, v_{n-1} f\left(\alpha_{n-1}\right),v_{n}\left(f\left(0\right)+\eta f_{k-1}\right),v_{n+1}f_{k-1}\right) | f(x) \in \mathbb{F}_{q}^{k}[x]\right\}.
	\end{aligned}
	$$
	
	Next, we calculate the number of codewords with minimum
	weight $n+1-k$. For the convenience, we set $\mathcal{I}=\left\{\alpha_{1},\ldots,\alpha_{n-1}\right\}$. Then $\wt\left(\boldsymbol{c}\right)=n+1-k$ if and only if 
	$$\#\left\{\alpha_{i}\in\mathcal{I}\ |\ f(\alpha_{i})=0\right\}=k,\ f\left(0\right)+\eta f_{k-1}\neq 0\ \text{and}\ f_{k-1}\neq 0,$$
	or
	$$\#\left\{\alpha_{i}\in\mathcal{I}\ |\ f(\alpha_{i})=0\right\}=k-1,\ f\left(0\right)+\eta f_{k-1}\neq 0\ \text{and}\ f_{k-1}=0,$$
	or
	$$\#\left\{\alpha_{i}\in\mathcal{I}\ |\ f(\alpha_{i})=0\right\}=k-1,\ f\left(0\right)+\eta f_{k-1}=0\ \text{and}\ f_{k-1}\neq 0,$$
	or
	$$\#\left\{\alpha_{i}\in\mathcal{I}\ |\ f(\alpha_{i})=0\right\}=k-2,\ f\left(0\right)+\eta f_{k-1}=0\ \text{and}\ f_{k-1}=0.$$
	
	{\bf Case 1.} If $f\left(0\right)+\eta f_{k-1}\neq 0$ and $f_{k-1}\neq 0$, then 
	$$
	\begin{aligned}
		&A_{n+1-k}\\
		=&\#\left\{\mathbb{F}_{q}^{k}[x]\cap\left\{f(x)=\lambda\prod\limits_{\alpha_{i}\in\mathcal{J}}\left(x-\alpha_{i}\right)|\ \#\mathcal{J}=k,\mathcal{J}\subseteq\mathcal{I},\lambda\in\mathbb{F}_{q}^{*},f\left(0\right)+\eta f_{k-1}\neq 0,\ f_{k-1}\neq 0\right\}\right\}.
	\end{aligned}
	$$
	Note that for any $f(x)\in \mathbb{F}_{q}^{k}[x]$, $\deg\left(f(x)\right)\leq k-1$, then  $A_{n+1-k}=0.$
	
	{\bf Case 2.} If $f\left(0\right)+\eta f_{k-1}\neq 0$ and $f_{k-1}=0$, then 
	$$
	\begin{aligned}
		&A_{n+1-k}\\
		=&\#\left\{\mathbb{F}_{q}^{k}[x]\cap\left\{f(x)=\lambda\prod\limits_{\alpha_{i}\in\mathcal{J}}\left(x-\alpha_{i}\right)|\ \#\mathcal{J}=k-1,\mathcal{J}\subseteq\mathcal{I},\lambda\in\mathbb{F}_{q}^{*},f\left(0\right)+\eta f_{k-1}\neq 0,\ f_{k-1}= 0\right\}\right\}. 
	\end{aligned}
	$$ 
	In fact,  $f_{k-1}=\lambda=0$ that is contradict with $\lambda\in\mathbb{F}_{q}^{*}$, and so  $A_{n+1-k}=0.$.

	{\bf Case 3.} If $f\left(0\right)+\eta f_{k-1}=0$ and $f_{k-1}\neq 0$, then 
	$$
	\begin{aligned}
		&A_{n+1-k}\\
		=&\#\left\{\mathbb{F}_{q}^{k}[x]\cap\left\{f(x)=\lambda\prod\limits_{\alpha_{i}\in\mathcal{J}}\left(x-\alpha_{i}\right)|\ \#\mathcal{J}=k-1,\mathcal{J}\subseteq\mathcal{I},\lambda\in\mathbb{F}_{q}^{*},f\left(0\right)+\eta f_{k-1}=0,\ f_{k-1}\neq 0\right\}\right\}.
	\end{aligned}
	$$ 
	Note that
	$$f(x)=\lambda\prod\limits_{\alpha_{i}\in\mathcal{J}}\left(x-\alpha_{i}\right)=\lambda\left(x^{k-1}-\sum\limits_{\alpha_{i}\in\mathcal{J}}\alpha_{i}x^{k-2}+\cdots+(-1)^{k-1}\prod\limits_{\alpha_{i}\in\mathcal{J}}\alpha_{i}\right),$$
	thus, $f_{k-1}=\lambda$ and $f(0)=(-1)^{k-1}\cdot\lambda\cdot\prod\limits_{\alpha_{i}\in\mathcal{J}}\alpha_{i}$. Furthermore, we have
	$$\begin{aligned}
		A_{n+1-k}=&\# \left\{\left(\lambda,\mathcal{J}\right)\ |\ (-1)^{k}\prod\limits_{\alpha_{i}\in\mathcal{J}}\alpha_{i}=\eta,\lambda\in\mathbb{F}_{q}^{*},\mathcal{J}\subseteq\mathcal{I}\ \text{and}\ \#\mathcal{J}=k-1\right\}\\
		=&(q-1)\# \left\{\left(\lambda,\mathcal{J}\right)\ |\ (-1)^{k}\prod\limits_{\alpha_{i}\in\mathcal{J}}\alpha_{i}=\eta,\mathcal{J}\subseteq\mathcal{I}\ \text{and}\ \#\mathcal{J}=k-1\right\}.
	\end{aligned}$$
	In the similar proof as that for Theorem $\ref{NMDSMGRSweight}$ {Case 1}, we can get
	$$A_{n+1-k}=\sum_{r\mid \gcd(q-1,k-1)}(-1)^{k-1+\frac{k-1}{r}}\binom{\frac{q-1}{r}}{\frac{k-1}{r}}\sum_{\substack{d\mid (r,t-\frac{k(q-1)}{2})}} \mu\left(\frac{r}{d}\right)d.$$
	
	{\bf Case 4.} If $f\left(0\right)+\eta f_{k-1}=0$ and $f_{k-1}=0$, then 
	$$
	\begin{aligned}
		&A_{n+1-k}\\
		=&\#\left\{\mathbb{F}_{q}^{k}[x]\cap\left\{f(x)=\lambda\prod\limits_{\alpha_{i}\in\mathcal{J}}\left(x-\alpha_{i}\right)|\ \#\mathcal{J}=k-2,\mathcal{J}\subseteq\mathcal{I},\lambda\in\mathbb{F}_{q}^{*},f\left(0\right)+\eta f_{k-1}=0,\ f_{k-1}= 0\right\}\right\}.
	\end{aligned}
	$$ 
	Now, by $f\left(0\right)+\eta f_{k-1}=0$ and $ f_{k-1}= 0$, we have $f(0)=0$, which is contradict with $$f(0)=(-1)^{k-2}\cdot\lambda\cdot\prod\limits_{\alpha_{i}\in\mathcal{J}}\alpha_{i}\in\mathbb{F}_{q}^{*},$$ and so  $A_{n+1-k}=0$.
	
	In conclusions, we have $$A_{n+1-k}=\sum_{r\mid \gcd(q-1,k-1)}(-1)^{k-1+\frac{k-1}{r}}\binom{\frac{q-1}{r}}{\frac{k-1}{r}}\sum_{\substack{d\mid (r,t-\frac{k(q-1)}{2})}} \mu\left(\frac{r}{d}\right)d.$$
	
	This completes the proof of Theorem \ref{EMGRSk-1weight}.
	
	$\hfill\Box$
	\section{The Euclidean  LCD or one-dimensional  Euclidean Hull MGRS codes}\label{sec5}
	Throughout this section, we fix $\mathrm{MGRS}_{k+1,k}\left(\boldsymbol{\alpha},\boldsymbol{v},\eta,t\right)=\mathcal{C}$, $k\mid q-1$, $\mathbb{F}_{q}^{*}=\mathbb{F}_{q}\backslash\left\{0\right\}=\langle\gamma\rangle$ and $\alpha_{i}=\gamma^{\frac{q-1}{k}i}(1\leq i\leq k)$. In this section, by taking four special classes of the vector $\boldsymbol{\alpha}=\left(\alpha_{1},\ldots,\alpha_{n}\right)$, we correspondingly construct four classes of GRL codes  with the parameters $\left[n+\ell,k\right]$, which is either  Euclidean LCD or 1-dimension hull.
	
	\subsection{Main results} 
	\begin{theorem}\label{EHullMGRS1}
		Let $\boldsymbol{\alpha}=\left(\gamma^{\delta}\alpha_{1},\ldots,\gamma^{\delta}\alpha_{k}\right)$ with $1\leq \delta\leq q-1$ and $k\mid q-1$. Then
		$$
		\dim\left(\ehull\left(\mathcal{C}\right)\right)=\begin{cases}
			1,&\ \text{if}\ k=2t,p\nmid k+1,\gamma^{\delta k}k+\eta^2\in\mathbb{F}_{q}^{*} \ \text{and}\ \gamma^{\delta k}k+\frac{k\eta^2}{k+1}=0;\\
			&\ \ \ \ \text{or}\ k\neq 2t \ \text{and}\ p\nmid k+1;\\
			0,&\ \text{if}\ k=2t\ \text{and}\ p\mid k+1;\\
			&\ \ \ \  \text{or}\ k=2t\ \text{and}\ \ \gamma^{\delta k}k+\eta^2=0;\\
			&\ \ \ \ \text{or}\ k=2t,p\nmid k+1,\gamma^{\delta k}k+\eta^2\in\mathbb{F}_{q}^{*} \ \text{and}\ \gamma^{\delta k}k+\frac{k\eta^2}{k+1}\in\mathbb{F}_{q}^{*};\\
			&\ \ \ \ \text{or}\ k\neq 2t \ \text{and}\ p\mid k+1. 
		\end{cases} 
		$$ 
	\end{theorem}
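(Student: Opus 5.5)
The plan is to compute $\dim(\ehull(\mathcal{C}))$ as $k-\rank(\boldsymbol{G}\boldsymbol{G}^{T_e})$, using Lemma \ref{EHhull} together with $\ehull(\mathcal{C})=\ehull(\mathcal{C}^{\perp_E})$. Here $\boldsymbol{G}=\boldsymbol{G}_{\boldsymbol{v},k,t}$ is the $k\times(k+1)$ generator matrix from Remark \ref{MGRSgeneratormatrix}, with $n=k+1$ and evaluation points $\gamma^{\delta}\alpha_1,\ldots,\gamma^\delta\alpha_k$. I will take $\boldsymbol{v}=\boldsymbol{1}$, or at least $v_i^2$ constant, which is the setting the case conditions appear to encode.

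Index the rows by $0\le i\le k-1$. The $(i,j)$ entry of $\boldsymbol{M}=\boldsymbol{G}\boldsymbol{G}^{T_e}$ is
\[
\sum_{l=1}^{k}(\gamma^\delta\alpha_l)^{i+j}+e_ie_j ,
\]
where $e_0=1$, $e_t=\eta$ and $e_i=0$ otherwise. Since $0\le i+j\le 2k-2$, Lemma \ref{Fq*sum} (with $s=k$) kills the power sum unless $i+j\in\{0,k\}$. So $\boldsymbol{M}$ has the following entries:
\begin{itemize}
\item $M_{00}=k+1$;
\item $M_{ij}=c:=\gamma^{\delta k}k$ whenever $i+j=k$ with $i,j\ge1$;
\item the rank-one correction $M_{0t}=M_{t0}=\eta$, which adds $\eta^2$ to $M_{tt}$;
\item all other entries are $0$.
\end{itemize}
Note that $c\ne0$, because $k\mid q-1$ forces $p\nmid k$.

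So $\boldsymbol{M}$ is a reversal (anti-diagonal) matrix on the indices $\{1,\ldots,k-1\}$, perturbed only in row and column $0$ and at the entry $(t,t)$. The rank can be read off after permuting rows and columns into small blocks. The rows with $i\notin\{0,t,k-t\}$ each contain the single nonzero entry $c$ in column $k-i$, and these pair up into nonsingular $2\times2$ blocks $\begin{pmatrix}0&c\\ c&0\end{pmatrix}$. What remains is a small block on the indices $\{0,t,k-t\}$, and I will split on whether $t=k-t$.

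\textbf{Case $k=2t$.} The residual block is the $2\times2$ matrix on $\{0,t\}$,
\[
\begin{pmatrix}k+1&\eta\\ \eta& c+\eta^2\end{pmatrix},
\]
with determinant $(k+1)(c+\eta^2)-\eta^2=(k+1)c+k\eta^2$. I will analyse the subcases of the statement through this determinant, its diagonal entries and the quantity $c+\frac{k\eta^2}{k+1}$ when $k+1$ is invertible. This decides whether the block has rank $2$ or $1$, and hence whether the hull dimension is $0$ or $1$.

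\textbf{Case $k\ne2t$.} The residual block is the $3\times3$ matrix on $\{0,t,k-t\}$,
\[
\begin{pmatrix}k+1&\eta&0\\ \eta&\eta^2&c\\ 0&c&0\end{pmatrix}.
\]
Expanding along the last row gives determinant $-c^2(k+1)$. The row/column $k-t$ should eliminate $\eta$ entirely, so only $p\mid k+1$ versus $p\nmid k+1$ matters.

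The main obstacle is this final bookkeeping: matching the determinant and rank of the residual block to the exact case split in the theorem. My naive reading of the determinants does not obviously reproduce every stated case. For example, in the case $k\ne2t$ the determinant $-c^2(k+1)$ seems to give hull dimension $0$ exactly when $p\nmid k+1$, which is the opposite of the statement. So I will recheck the following points carefully:
\begin{itemize}
\item the precise generator matrix, in particular whether the last coordinate is $v_n$ and how it interacts with $v_i^2$;
\item whether Lemma \ref{EHhull} is applied to $\mathcal{C}$ or to its dual;
\item the relation $\dim\ehull(\mathcal{C})=k-\rank(\boldsymbol{M})$ versus a rank drop of the block.
\end{itemize}
I will then reconcile the case conditions. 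In particular, in the case $k=2t$ I will compute the rank of the $2\times2$ block separately when $(k+1)c+k\eta^2=0$ versus when it is nonzero, and when $p\mid k+1$.
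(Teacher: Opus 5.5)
Your computation of $\boldsymbol{M}=\boldsymbol{G}\boldsymbol{G}^{T_e}$ is correct, and it is the same matrix the paper's proof writes down. With $\boldsymbol{v}=\boldsymbol{1}$, Lemma \ref{Fq*sum} leaves three kinds of nonzero entries: $M_{00}=k+1$, the anti-diagonal entries $c=\gamma^{\delta k}k\neq 0$, and the rank-one contribution of the last column. After a simultaneous permutation, $\boldsymbol{M}$ is block diagonal with one residual block and otherwise nonsingular blocks. These are your $2\times 2$ blocks, plus a $1\times 1$ block $(c)$ at index $k/2$ when $k$ is even and $k\neq 2t$.

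The gap is in how you end. You treat the $k\neq 2t$ mismatch as a possible slip to be reconciled, but it is not your slip, and none of your recheck items removes it. Since $\ehull(\mathcal{C})=\ehull(\mathcal{C}^{\perp_E})$, Lemma \ref{EHhull} gives $\dim\ehull(\mathcal{C})=k-\rank(\boldsymbol{M})$ under either reading. Because the other blocks are invertible, this is exactly the rank deficiency of the residual block.

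For $k\neq 2t$ that block has determinant $-(k+1)c^2$. When $p\mid k+1$ its rank is exactly $2$, since the rows indexed by $t$ and $k-t$, namely $(\eta,\eta^2,c)$ and $(0,c,0)$, are independent. Hence for $k\neq 2t$ the code is LCD when $p\nmid k+1$ and has a one-dimensional hull when $p\mid k+1$. This is the opposite of the statement, so the $k\neq 2t$ branch cannot be proved as written.

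The paper's proof of Case 2 displays the same matrix and simply asserts the reversed ranks. The paper's own Example \ref{EHullMGRS1example} agrees with you, not with the theorem. Part (2) ($k=5$, $t=2$, $3\mid 6$) is reported to have a one-dimensional hull. Part (6) ($k=4$, $t=3$, $3\nmid 5$) is reported to be LCD.

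So the correct conclusion of your argument is a corrected theorem with the two $k\neq 2t$ lines swapped. For $k=2t$, your determinant $(k+1)c+k\eta^2$ does confirm the listed subcases, provided $\eta\neq 0$:
\begin{itemize}
\item when $p\mid k+1$ it equals $-\eta^2$;
\item when $c+\eta^2=0$ it equals $-\eta^2=c\neq 0$;
\item when $p\nmid k+1$ it equals $(k+1)\bigl(c+\frac{k\eta^2}{k+1}\bigr)$, and the side condition $c+\eta^2\neq 0$ is then automatic once $c+\frac{k\eta^2}{k+1}=0$.
\end{itemize}
If $\eta=0$ and $p\mid k+1$, the block is $\mathrm{diag}(0,c)$ and the hull is one-dimensional, which is another exception to the statement.

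Finally, keep your restriction $v_1^2=\cdots=v_{k+1}^2$ as an explicit hypothesis. The paper's reduction to $\boldsymbol{v}=\boldsymbol{1}$ is invalid: $\boldsymbol{G}_{\boldsymbol{v},k,t}\boldsymbol{G}_{\boldsymbol{v},k,t}^{T_e}=\boldsymbol{G}_{\boldsymbol{1},k,t}\boldsymbol{D}^2\boldsymbol{G}_{\boldsymbol{1},k,t}^{T_e}$ with $\boldsymbol{D}=\mathrm{diag}(v_1,\ldots,v_{k+1})$, and this in general has a different rank from $\boldsymbol{G}_{\boldsymbol{1},k,t}\boldsymbol{G}_{\boldsymbol{1},k,t}^{T_e}$. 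For instance, with $v_i^2=a$ for $i\le k$ and $v_{k+1}^2=b$, the $k\neq 2t$ residual determinant becomes $-(ak+b)a^2c^2$.
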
 
	\begin{theorem}\label{EHullMGRS2}
		Let $\boldsymbol{\alpha}=\left(0,\gamma^{\delta}\alpha_{1},\ldots,\gamma^{\delta}\alpha_{k}\right)$ with $1\leq \delta\leq q-1$ and $k\mid q-1$. Then
		$$
		\dim\left(\ehull\left(\mathcal{C}\right)\right)=\begin{cases}
			1,&\ \text{if}\ k=2t,p\nmid k+2,\gamma^{\delta k}k+\eta^2\in\mathbb{F}_{q}^{*} \ \text{and}\ \gamma^{\delta k}k+\frac{k\eta^2}{k+2}=0;\\
			&\ \ \ \ \text{or}\ k\neq 2t \ \text{and}\ p\mid k+2.\\
			0,&\ \text{if}\ k=2t\ \text{and}\ p\mid k+2;\\
			&\ \ \ \  \text{or}\ k=2t\ \text{and}\ \ \gamma^{\delta k}k+\eta^2=0;\\
			&\ \ \ \ \text{or}\ k=2t,p\nmid k+2,\gamma^{\delta k}k+\eta^2\in\mathbb{F}_{q}^{*} \ \text{and}\ \gamma^{\delta k}k+\frac{k\eta^2}{k+2}\in\mathbb{F}_{q}^{*};\\
			&\ \ \ \ \text{or}\ k\neq 2t \ \text{and}\ p\mid k+2. 
		\end{cases} 
		$$
	\end{theorem}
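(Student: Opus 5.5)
The plan is to compute $\rank\left(\boldsymbol{G}\boldsymbol{G}^{T_{e}}\right)$ for a generator matrix $\boldsymbol{G}$ of $\mathcal{C}$ and then apply Lemma \ref{EHhull}. Lemma \ref{EHhull} gives $\dim\left(\ehull\left(\mathcal{C}^{\perp_E}\right)\right)=k-\rank\left(\boldsymbol{G}\boldsymbol{G}^{T_e}\right)$. Since $\ehull\left(\mathcal{C}\right)=\ehull\left(\mathcal{C}^{\perp_E}\right)$, this is exactly $\dim\left(\ehull\left(\mathcal{C}\right)\right)$.

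As in Section \ref{sec5}, I take $\boldsymbol{v}=\boldsymbol{1}$ and use the generator matrix of Remark \ref{MGRSgeneratormatrix}. Its evaluation points are $0,\gamma^{\delta}\alpha_1,\ldots,\gamma^{\delta}\alpha_k$, followed by the modified last column $\boldsymbol{e}_0+\eta\boldsymbol{e}_t$, where rows are indexed $0,\ldots,k-1$.

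\textbf{Step 1: the Gram matrix.} Index rows and columns by $0\le i,j\le k-1$. The $(i,j)$ entry of $\boldsymbol{G}\boldsymbol{G}^{T_e}$ has three contributions.
\begin{itemize}
\item The point $0$ contributes $1$ only when $i=j=0$.
\item By Lemma \ref{Fq*sum}, the points $\gamma^{\delta}\alpha_l$ contribute $k\gamma^{\delta(i+j)}$ if $k\mid i+j$ and $0$ otherwise. Since $0\le i+j\le 2k-2$, this happens only for $i+j\in\{0,k\}$.
\item The last column contributes the rank-one matrix $(\boldsymbol{e}_0+\eta\boldsymbol{e}_t)(\boldsymbol{e}_0+\eta\boldsymbol{e}_t)^{T_e}$.
\end{itemize}
Hence the $(0,0)$ entry is $k+2$, the $(0,t)$ and $(t,0)$ entries are $\eta$, and the entries with $i+j=k$, $i,j\ge1$, equal $k\gamma^{\delta k}$, with $\eta^2$ added at position $(t,t)$. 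All other entries are $0$. Note that $p\nmid k$ because $k\mid q-1$.

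\textbf{Step 2: block decomposition.} After a permutation of rows and columns, the matrix is block diagonal. For each $1\le i\le k-1$ with $\{i,k-i\}$ not meeting $\{t,k-t\}$, the indices $i,k-i$ give a block $\begin{pmatrix}0&k\gamma^{\delta k}\\ k\gamma^{\delta k}&0\end{pmatrix}$, or the $1\times1$ block $(k\gamma^{\delta k})$ when $i=k/2$. All of these are nonsingular. The rank deficiency is therefore concentrated in the block carrying the indices $0$ and $t$.

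If $k\neq 2t$, that block sits on $\{0,t,k-t\}$ and equals
$$\begin{pmatrix}k+2&\eta&0\\ \eta&\eta^2&k\gamma^{\delta k}\\ 0&k\gamma^{\delta k}&0\end{pmatrix},$$
with determinant $-(k+2)k^2\gamma^{2\delta k}$. So the rank is $k$ when $p\nmid k+2$ and $k-1$ when $p\mid k+2$. In the latter case the block still has rank $2$, because of the off-diagonal $k\gamma^{\delta k}$.

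If $k=2t$, the block sits on $\{0,t\}$ and equals $\begin{pmatrix}k+2&\eta\\ \eta&\eta^2+k\gamma^{\delta k}\end{pmatrix}$, with determinant $(k+2)k\gamma^{\delta k}+(k+1)\eta^2$. Its rank is $2$ if this determinant is nonzero, and $1$ otherwise; it cannot be $0$ because the entry $k+2$ and the entry $\eta^2+k\gamma^{\delta k}$ cannot both vanish together with $\eta$.

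\textbf{Step 3: case matching.} Reading off $k-\rank$ gives hull dimension $0$ or $1$ in each case. I then rewrite the vanishing or non-vanishing of the determinants in the form used in the statement, splitting according to whether $p\mid k+2$ and whether $\gamma^{\delta k}k+\eta^2=0$.

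I expect this last step to be the main obstacle. The determinant $(k+2)k\gamma^{\delta k}+(k+1)\eta^2$ must be normalized, by dividing by $k+2$ when $p\nmid k+2$, into the stated expression $\gamma^{\delta k}k+\frac{k\eta^2}{k+2}$-type conditions. The degenerate subcases $p\mid k+2$ and $\eta=0$ have to be tracked carefully. I would recheck the coefficient of $\eta^2$ there directly against the $2\times2$ determinant rather than trust the displayed form, since the normalization may need adjusting. The $k\ne 2t$ line also needs a consistency check: by Step 2, hull dimension $1$ corresponds to $p\mid k+2$ and dimension $0$ to $p\nmid k+2$.
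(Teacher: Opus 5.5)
Your approach is the paper's own. The paper proves Theorem~\ref{EHullMGRS1} by exactly this computation of $\boldsymbol{G}\boldsymbol{G}^{T_e}$ via Lemma~\ref{Fq*sum}, splits into $k=2t$ and $k\neq 2t$, applies Lemma~\ref{EHhull}, and declares Theorem~\ref{EHullMGRS2} analogous. Your Gram matrix, block decomposition and both determinants are correct. Your caution about Step~3 is warranted: the matching fails because the statement is misprinted, not because of your computation.

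What your argument actually establishes, for $\boldsymbol{v}=\boldsymbol{1}$, is the following.
\begin{itemize}
\item For $k\neq 2t$, the hull has dimension $1$ if $p\mid k+2$ and $0$ if $p\nmid k+2$. The statement lists ``$p\mid k+2$'' under both values; the second occurrence must be read as $p\nmid k+2$, exactly as you found.
\item For $k=2t$, the dimension is $1$ if and only if $(k+2)k\gamma^{\delta k}+(k+1)\eta^{2}=0$. When $p\nmid k+2$ this reads $\gamma^{\delta k}k+\frac{(k+1)\eta^{2}}{k+2}=0$.
\end{itemize}
The printed numerator $k\eta^{2}$ was carried over from Theorem~\ref{EHullMGRS1}, where the $(0,0)$ entry is $k+1$ and the determinant is $(k+1)k\gamma^{\delta k}+k\eta^{2}$. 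The printed and correct expressions differ by $\frac{\eta^{2}}{k+2}$, so they never vanish together. Hence whenever the printed conditions for dimension $1$ hold, the code is in fact LCD.

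For example, take $q=9$, $k=2$, $t=1$ and $\eta=\gamma^{\delta+2}$, so that $\eta^{2}=-\gamma^{2\delta}$. This meets every printed condition for dimension $1$. Yet the Gram matrix $\begin{pmatrix}1&\eta\\ \eta&2\gamma^{2\delta}+\eta^{2}\end{pmatrix}$ has determinant $2\gamma^{2\delta}\neq 0$.

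Finally, the branch ``$k=2t$ and $p\mid k+2$ gives $0$'' needs $\eta\neq 0$. For $\eta=0$, which the definition allows since $\eta\in\mathbb{F}_q$, your determinant vanishes and the dimension is $1$.

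Separately, taking $\boldsymbol{v}=\boldsymbol{1}$ is a restriction, not a reduction. Write $\boldsymbol{G}_{\boldsymbol{v}}=\boldsymbol{G}_{\boldsymbol{1}}\boldsymbol{D}$ with $\boldsymbol{D}$ diagonal. Then $\boldsymbol{G}_{\boldsymbol{v}}\boldsymbol{G}_{\boldsymbol{v}}^{T_e}=\boldsymbol{G}_{\boldsymbol{1}}\boldsymbol{D}^{2}\boldsymbol{G}_{\boldsymbol{1}}^{T_e}$, and its rank depends on $\boldsymbol{v}$ in general. In the example above, rescaling the column at the point $0$ by $\gamma^{2}$ makes the determinant vanish. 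The identity in Section~\ref{sec5} that pulls $\boldsymbol{D}^{2}$ outside is incorrect. The corrected theorem should therefore be stated for $\boldsymbol{v}=\boldsymbol{1}$, which is precisely the case your argument covers.
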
 
	
	\begin{theorem}\label{EHullMGRS3}
		Let $\boldsymbol{\alpha}=\left(\gamma^{s}\alpha_{1},\ldots,\gamma^{s}\alpha_{k},\gamma^{t}\alpha_{1},\ldots,\gamma^{t}\alpha_{k}\right)$ and $\Delta_{2}=\left(\gamma^{sk}+\gamma^{tk}\right)k$. If $1\leq s\neq t\leq q-1$, $k\mid q-1$, $\frac{q-1}{k}\nmid s-t$ and $v_{2}\left(s-t\right)\neq v_{2}(q-1)-v_{2}(k)-1,$ then
		$$
		\dim\left(\ehull\left(\mathcal{C}\right)\right)\\
		=\begin{cases}
			1,&\ \text{if}\ k=2t,p\nmid 2k+1,\Delta_{2}+\eta^2\in\mathbb{F}_{q}^{*} \ \text{and}\ \Delta_{2}+\frac{2k\eta^2}{2k+1}=0;\\
			&\ \ \ \ \text{or}\ k\neq 2t \ \text{and}\ p\nmid 2k+1;\\
			0,&\ \text{if}\ k=2t\ \text{and}\ p\mid 2k+1;\\
			&\ \ \ \  \text{or}\ k=2t\ \text{and}\ \ \Delta_{2}+\eta^2=0;\\
			&\ \ \ \ \text{or}\ k=2t,p\nmid 2k+1,\Delta_{2}+\eta^2\in\mathbb{F}_{q}^{*} \ \text{and}\ \Delta_{2}+\frac{2k\eta^2}{2k+1}\in\mathbb{F}_{q}^{*};\\ 
			&\ \ \ \ \text{or}\ k\neq 2t \ \text{and}\ p\mid 2k+1. 
		\end{cases}
		$$ 
	\end{theorem}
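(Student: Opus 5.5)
We argue through Lemma \ref{EHhull}. Since $\ehull(\mathcal{C})=\ehull(\mathcal{C}^{\perp_E})$, we have $\dim(\ehull(\mathcal{C}))=k-\rank(\boldsymbol{G}\boldsymbol{G}^{T_e})$. We take $\boldsymbol{v}=\boldsymbol{1}$ and $n=2k+1$. The generator matrix then has columns $(1,a,\ldots,a^{k-1})^{T}$ for the $2k$ evaluation points $a$ in $\boldsymbol{\alpha}$, plus the last column $\boldsymbol{c}=\boldsymbol{e}_0+\eta\boldsymbol{e}_t$ (rows indexed $0,\ldots,k-1$). Lemma \ref{kq-1sttaking} is used once to guarantee that these $2k$ points are pairwise distinct and that $\gamma^{sk}+\gamma^{tk}\neq 0$. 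Because $k\mid q-1$, the integer $k$ is a unit in $\mathbb{F}_q$, and therefore $\Delta_2\in\mathbb{F}_q^{*}$.

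\textbf{Step 1: the Gram matrix.} The $(i,j)$ entry of $\boldsymbol{G}\boldsymbol{G}^{T_e}$ equals
$$\sum_{l}\left(\gamma^{s}\alpha_l\right)^{i+j}+\sum_{l}\left(\gamma^{t}\alpha_l\right)^{i+j}+c_ic_j,\qquad 0\le i+j\le 2k-2.$$
By Lemma \ref{Fq*sum}, the power sums vanish unless $k\mid i+j$, that is, unless $i+j\in\{0,k\}$. The case $i+j=0$ contributes $2k$, and the case $i+j=k$ contributes $\Delta_2$. Hence
$$\boldsymbol{G}\boldsymbol{G}^{T_e}=\boldsymbol{M}+\boldsymbol{c}\boldsymbol{c}^{T}.$$
Here $\boldsymbol{M}$ has $(0,0)$-entry $2k$, carries $\Delta_2$ on the anti-diagonal $i+j=k$ with $1\le i,j\le k-1$, and is zero elsewhere. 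The correction $\boldsymbol{c}\boldsymbol{c}^{T}$ adds $1$ at $(0,0)$, $\eta$ at $(0,t)$ and $(t,0)$, and $\eta^2$ at $(t,t)$.

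\textbf{Step 2: block decomposition.} Simultaneously permuting rows and columns splits this matrix into independent blocks. Each anti-diagonal pair $\{i,k-i\}$ not meeting $\{0,t\}$ gives a $2\times 2$ block $\begin{pmatrix}0&\Delta_2\\ \Delta_2&0\end{pmatrix}$, and each such block is invertible. All of the rank deficiency is therefore concentrated in the small block containing the indices $0$ and $t$, and the proof reduces to computing one determinant in each of two cases.

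\textbf{Case $k=2t$.} Here $t$ is self-paired on the anti-diagonal, and the relevant block is
$$\begin{pmatrix}2k+1&\eta\\ \eta&\Delta_2+\eta^2\end{pmatrix}.$$
Its determinant is $(2k+1)(\Delta_2+\eta^2)-\eta^2=(2k+1)\Delta_2+2k\eta^2$. If $p\nmid 2k+1$, this equals $(2k+1)\bigl(\Delta_2+\tfrac{2k\eta^2}{2k+1}\bigr)$, and its vanishing or non-vanishing produces the first line of the "$1$" case and the third line of the "$0$" case. The remaining subcases, $p\mid 2k+1$ and $\Delta_2+\eta^2=0$, are checked directly against this $2\times 2$ determinant and its diagonal entries. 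This yields the corresponding lines of the statement.

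\textbf{Case $k\neq 2t$.} The relevant block lives on the indices $\{0,t,k-t\}$ and is
$$\begin{pmatrix}2k+1&\eta&0\\ \eta&\eta^2&\Delta_2\\ 0&\Delta_2&0\end{pmatrix},$$
with determinant $-(2k+1)\Delta_2^2$. Whether this vanishes depends only on whether $p\mid 2k+1$. We then read off $\rank(\boldsymbol{G}\boldsymbol{G}^{T_e})\in\{k-1,k\}$ and match it with the two "$k\neq 2t$" lines of the statement.

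\textbf{Main obstacle.} We expect the main difficulty to be this last bookkeeping step. It involves keeping the identification $k-\rank=\dim\ehull$ and the case labels of the statement consistent. It also requires handling the degenerate situations $t=k-t$ and $t\in\{0,k\}$ carefully, since these can merge blocks. We would double-check this step on a small instance such as $k=2$ or $k=3$ using the examples of Section \ref{sec8}.
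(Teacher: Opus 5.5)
Your approach is the paper's: compute $\boldsymbol{G}\boldsymbol{G}^{T_e}$ with Lemma \ref{Fq*sum}, split on $k=2t$, and read off the rank. The paper writes this out only for Theorem \ref{EHullMGRS1} and calls this case analogous. Your Gram matrix, block decomposition and both determinants are correct.

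\textbf{The gap: the final matching step for $k\neq 2t$ cannot be done.}
\begin{itemize}
\item Your determinant $-(2k+1)\Delta_2^2$ is nonzero exactly when $p\nmid 2k+1$. Then $\rank(\boldsymbol{G}\boldsymbol{G}^{T_e})=k$ and $\dim(\ehull(\mathcal{C}))=0$, so $\mathcal{C}$ is LCD.
\item When $p\mid 2k+1$, the $3\times 3$ block has rank exactly $2$, because the rows $(\eta,\eta^2,\Delta_2)$ and $(0,\Delta_2,0)$ are independent ($\Delta_2\neq 0$). So the hull is one-dimensional.
\end{itemize}
This is the reverse of the two lines of the statement for $k\neq 2t$. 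The statement is false in that branch, and your argument proves the version with those two values exchanged.

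A concrete check: take $q=13$, $\gamma=2$, $k=3$, $s=2$, $t=1$, $\eta=1$, $\boldsymbol{v}=\boldsymbol{1}$. All hypotheses hold and $p\nmid 2k+1=7$. Yet $\boldsymbol{G}\boldsymbol{G}^{T_e}=\left(\begin{smallmatrix}7&1&0\\1&1&8\\0&8&0\end{smallmatrix}\right)$ has determinant $-448=7\neq 0$ in $\mathbb{F}_{13}$, so the code is LCD, not of hull dimension $1$.

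The paper's proof of Theorem \ref{EHullMGRS1} makes the same swap in its Case 2. The paper's own Example \ref{EHullMGRS1example}, parts (2) and (6), agree with the analogous determinant $-(k+1)(\gamma^{\delta k}k)^2$, not with that theorem as stated. Your plan to test on the examples of Section \ref{sec8} would have exposed this.

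\textbf{Smaller points.}
\begin{itemize}
\item In the subcase $k=2t$, $p\mid 2k+1$, your determinant reduces to $2k\eta^2=-\eta^2$. The claimed value $0$ therefore needs $\eta\neq 0$. For $\eta=0$ the block is $\diag(0,\Delta_2)$ and the hull is one-dimensional; the paper tacitly assumes $\eta\in\mathbb{F}_q^{*}$.
\item Taking $\boldsymbol{v}=\boldsymbol{1}$ is a real restriction, not a normalization. We have $\boldsymbol{G}_{\boldsymbol v}\boldsymbol{G}_{\boldsymbol v}^{T_e}=\boldsymbol{G}_{\boldsymbol 1}\diag(v_1^2,\ldots,v_n^2)\boldsymbol{G}_{\boldsymbol 1}^{T_e}$. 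Already with $v_i^2=\lambda$ on the evaluation columns and $v_n^2=\mu$, the entry $2k+1$ becomes $2k\lambda+\mu$, which changes the case conditions. You should state explicitly that you treat $\boldsymbol{v}=\boldsymbol{1}$; the paper's claimed reduction to that case is not valid either.
\item When $k$ is even and $k\neq 2t$, the self-paired index $k/2$ gives a $1\times 1$ block $(\Delta_2)$ rather than a $2\times 2$ one. This is harmless.
\end{itemize}
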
 
	\begin{theorem}\label{EHullMGRS4}
		Let $\gcd(3k,q)=1$ and $$\boldsymbol{\alpha}=\left(\gamma^{\mu}\alpha_{1},\ldots,\gamma^{\mu}\alpha_{k},\gamma^{\mu+1}\alpha_{1},\ldots,\gamma^{\mu+1}\alpha_{k},\gamma^{\mu+2}\alpha_{1},\ldots,\gamma^{\mu+2}\alpha_{k}\right),$$ where $1\leq \mu\leq q-1$, and $\Delta_{\mu}=\left(\gamma^{\mu k}+\gamma^{(\mu+1)k}+\gamma^{(\mu+2)k}\right)k$. If $1\leq s\neq t\leq q-1$, $k\mid q-1$ and $q-1\notin\left\{k,2k,3k\right\}$, then $$
		\dim\left(\ehull\left(\mathcal{C}\right)\right)=\begin{cases}
			1,&\ \text{if}\ k=2t, p\nmid 3k+1, \Delta_{\mu }+\eta^2\in\mathbb{F}_{q}^{*} \ \text{and}\ \Delta_{\mu }+\frac{3k\eta^2}{3k+1}=0;\\
			&\ \ \ \ \text{or}\ k\neq 2t \ \text{and}\ p\nmid 3k+1;\\
			0,&\ \text{if}\ k=2t\ \text{and}\ p\mid 3k+1;\\
			&\ \ \ \  \text{or}\ k=2t\ \text{and}\ \ \Delta_{\mu }+\eta^2=0;\\
			&\ \ \ \ \text{or}\ k=2t,p\nmid 3k+1,\Delta_{\mu }+\eta^2\in\mathbb{F}_{q}^{*} \ \text{and}\ \Delta_{\mu }+\frac{3k\eta^2}{3k+1}\in\mathbb{F}_{q}^{*};\\ 
			&\ \ \ \ \text{or}\ k\neq 2t \ \text{and}\ p\mid 3k+1. 
		\end{cases}
		$$ 
	\end{theorem}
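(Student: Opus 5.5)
We use Lemma \ref{EHhull}. The Euclidean hulls of $\mathcal{C}$ and $\mathcal{C}^{\perp_E}$ have the same dimension, so it suffices to show
$$\dim(\ehull(\mathcal{C}))=k-\rank(\boldsymbol{G}\boldsymbol{G}^{T_e}),$$
where $\boldsymbol{G}=\boldsymbol{G}_{\boldsymbol{v},n-1,t}$ from Remark \ref{MGRSgeneratormatrix}. The length is $n=3k+1$, and we take $\boldsymbol{v}=\boldsymbol{1}$ (the normalisation implicit in the section). The whole proof is an explicit computation of the $k\times k$ matrix $\boldsymbol{G}\boldsymbol{G}^{T_e}$ followed by a rank count.

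\textbf{Step 1: the Gram matrix.} Index rows by $0\le i,j\le k-1$. The $(i,j)$ entry is
$$\sum_{l=1}^{3k}\beta_l^{\,i+j}+e_ie_j,$$
where $\beta_l$ runs through the entries of $\boldsymbol{\alpha}$. The vector $\boldsymbol{e}$ is the last column of $\boldsymbol{G}$: $e_0=1$, $e_t=\eta$, and all other $e_i=0$.
\begin{itemize}
\item Splitting the first sum into the three cosets $\gamma^{\mu+r}\{\alpha_1,\dots,\alpha_k\}$ ($r=0,1,2$) and applying Lemma \ref{Fq*sum}, the sum vanishes unless $k\mid i+j$.
\item Since $0\le i+j\le 2k-2$, only $i+j=0$ and $i+j=k$ survive. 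These give $3k$ and $\Delta_\mu$, respectively.
\item Hence $\boldsymbol{G}\boldsymbol{G}^{T_e}=\boldsymbol{M}+\boldsymbol{e}^{T}\boldsymbol{e}$, where $\boldsymbol{M}$ has entry $3k$ at $(0,0)$, entry $\Delta_\mu$ on the anti-diagonal $i+j=k$ with $1\le i,j\le k-1$, and zeros elsewhere.
\item Lemma \ref{kq-1mutaking}, using $\gcd(3k,q)=1$ and $q-1\notin\{k,2k,3k\}$, guarantees that $\Delta_\mu\in\mathbb{F}_q^*$ and that the $3k$ evaluation points are distinct, so $\mathcal{C}$ is well defined.
\end{itemize}

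\textbf{Step 2: block decomposition.} After a simultaneous permutation of rows and columns, $\boldsymbol{G}\boldsymbol{G}^{T_e}$ becomes block diagonal. The anti-diagonal pairs $\{i,k-i\}$ not involving $t$ give $2\times2$ blocks $\left(\begin{smallmatrix}0&\Delta_\mu\\ \Delta_\mu&0\end{smallmatrix}\right)$, which are nonsingular. All rank loss therefore sits in one exceptional block.
\begin{itemize}
\item \emph{Case $k\neq 2t$.} The exceptional block lives on the indices $\{0,t,k-t\}$:
$$\begin{pmatrix}3k+1&\eta&0\\ \eta&\eta^2&\Delta_\mu\\ 0&\Delta_\mu&0\end{pmatrix},\qquad \det=-(3k+1)\Delta_\mu^2 .$$
Its rank is $2$ or $3$, according as $p\mid 3k+1$ or not. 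So the hull dimension is governed solely by whether $p\mid 3k+1$.
\item \emph{Case $k=2t$.} The index $t$ is self-paired, and the exceptional block is
$$\begin{pmatrix}3k+1&\eta\\ \eta&\Delta_\mu+\eta^2\end{pmatrix},\qquad \det=(3k+1)(\Delta_\mu+\eta^2)-\eta^2=(3k+1)\Delta_\mu+3k\eta^2 .$$
If $p\nmid 3k+1$, this determinant is $(3k+1)\bigl(\Delta_\mu+\tfrac{3k\eta^2}{3k+1}\bigr)$, matching the condition in the statement. If $p\mid 3k+1$, the determinant reduces to $-\eta^2$.
\end{itemize}
In each sub-case the block has rank $1$ or $2$, and never rank $0$, since the entry $\eta$ and the diagonal entries cannot all vanish simultaneously. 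This gives hull dimension $0$ or $1$.

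\textbf{Step 3: matching the statement's case list.} It remains to line up the determinant conditions above with the four-line case list of the theorem, treating $\eta=0$ and $\Delta_\mu+\eta^2=0$ as separate boundary sub-cases. This bookkeeping is where I expect the main difficulty, because the theorem's cases must be reconciled carefully with the determinants.
\begin{itemize}
\item For $k\neq2t$ the computation gives hull dimension $1$ exactly when $p\mid 3k+1$. The statement appears to have this condition the other way round.
\item For $k=2t$ with $p\mid 3k+1$, the determinant is $-\eta^2$, so the hull is $0$ only when $\eta\neq0$. One must check whether $\eta=0$ is tacitly excluded.
\end{itemize}
I would recompute these boundary cases by hand (e.g. $k=2$) and state the hypotheses precisely before finalising the case list.
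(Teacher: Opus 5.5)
Your route is the paper's own. The paper writes out only the proof of Theorem~\ref{EHullMGRS1} (Gram matrix via Lemma~\ref{Fq*sum}, block rank count, Lemma~\ref{EHhull}) and declares Theorem~\ref{EHullMGRS4} analogous.

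Your Gram matrix and both block determinants are correct: $-(3k+1)\Delta_\mu^2$ for $k\neq 2t$, and $(3k+1)\Delta_\mu+3k\eta^2$ for $k=2t$. Since $\Delta_\mu\neq 0$ by Lemma~\ref{kq-1mutaking}, Step~2 already settles everything. The hull has dimension $1$ exactly when either $k\neq 2t$ and $p\mid 3k+1$, or $k=2t$ and $(3k+1)\Delta_\mu+3k\eta^2=0$; otherwise it has dimension $0$. So the discrepancies you found are errors in the statement, not in your computation, and no further reconciliation is needed.

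\textbf{The $k\neq 2t$ lines are interchanged.} The paper's proof of Theorem~\ref{EHullMGRS1}, Case~2, makes the same swap: it claims rank $k-1$ when $p\nmid k+1$, although the block determinant $-(k+1)\gamma^{2\delta k}k^{2}$ is then nonzero. The paper's own Magma checks in Example~\ref{EHullMGRS1example}(2) and (6) agree with you rather than with the theorem. For the present statement:
\begin{itemize}
\item $q=13$, $k=3$, $t=1$ gives an LCD code, where the statement predicts dimension $1$;
\item $q=25$, $k=3$, $t=1$ gives dimension $1$, where the statement predicts $0$.
\end{itemize}

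\textbf{Your $\eta=0$ concern is also real.} For $k=2t$, $p\mid 3k+1$ and $\eta=0$, the exceptional block is $\diag(0,\Delta_\mu)$ and the hull is one-dimensional (e.g.\ $q=49$, $k=2$). So that line of the statement requires $\eta\neq 0$.

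Two smaller points.
\begin{itemize}
\item Make $\boldsymbol{v}=\boldsymbol{1}$ (or, more generally, $v_1^2=\cdots=v_{3k+1}^2$) an explicit hypothesis rather than a normalisation. Since $\boldsymbol{G}_{\boldsymbol{v}}\boldsymbol{G}_{\boldsymbol{v}}^{T_e}=\boldsymbol{G}_{\boldsymbol{1}}\,\diag(v_1^2,\ldots,v_{3k+1}^2)\,\boldsymbol{G}_{\boldsymbol{1}}^{T_e}$, the hull dimension genuinely depends on $\boldsymbol{v}$. The paper's reduction to $\boldsymbol{v}=\boldsymbol{1}$, which pulls the diagonal factor outside $\boldsymbol{G}_{\boldsymbol{1}}\boldsymbol{G}_{\boldsymbol{1}}^{T_e}$, is not valid.
\item When $k$ is even and $t\neq k/2$, the index $k/2$ gives a $1\times 1$ block $(\Delta_\mu)$ rather than a $2\times 2$ one. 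It is nonsingular, so your count is unaffected.
\end{itemize}
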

	\subsection{The Proofs for Theorems \ref{EHullMGRS1}-\ref{EHullMGRS4}}
	
	In this subsection, we only provide the detailed proof for Theorem \ref{EHullMGRS1}.
	Since Theorems \ref{EHullMGRS2}–\ref{EHullMGRS4} can be finished via the similar method,  and so their proofs are omitted for brevity.
	
	{\bf The proof of Theorem \ref{EHullMGRS1}.} 
	
	By Definition \ref{MGRSdefinition}, we can get $$\boldsymbol{G}_{\boldsymbol{v},k,t} =\boldsymbol{G}_{\boldsymbol{1},k,t} \cdot\diag\left\{v_1,\ldots,v_{n-1},1\right\},$$ and then
	$$\rank\left(\boldsymbol{G}_{\boldsymbol{v},k,t}\boldsymbol{G}^{T_e}_{\boldsymbol{v},k,t}\right) =\rank\left(\boldsymbol{G}_{\boldsymbol{1},k,t}\boldsymbol{G}^{T_e}_{\boldsymbol{1},k,t} \cdot\diag\left\{v_{1}^{2},\ldots,v_{n-1}^{2},1\right\}\right)=\rank\left(\boldsymbol{G}_{\boldsymbol{1},k,t}\boldsymbol{G}^{T_e}_{\boldsymbol{1},k,t}\right).$$
	Now, by Lemma \ref{LCDMDSorNMDSequivalent}, we only calculate $\rank\left(\boldsymbol{G}_{\boldsymbol{1},k,t}\boldsymbol{G}^{T_e}_{\boldsymbol{1},k,t}\right)$. Note that
	$$
	\footnotesize\begin{aligned}
		&\boldsymbol{G}_{\boldsymbol{1},k,t}\boldsymbol{G}_{\boldsymbol{1},k,t}^{T_{e}}\\
		=&\begin{pmatrix}
			k+1&\sum\limits_{i=1}^{k}\gamma^{\delta}\alpha_{i}&\cdots&\sum\limits_{i=1}^{k}\left(\gamma^{\delta}\alpha_{i}\right)^{t-1}&\color{red}{\sum\limits_{i=1}^{k}\left(\gamma^{\delta}\alpha_{i}\right)^{t}+\eta}&\sum\limits_{i=1}^{k}\left(\gamma^{\delta}\alpha_{i}\right)^{t+1}&\cdots&\sum\limits_{i=1}^{k}\left(\gamma^{\delta}\alpha_{i}\right)^{k-1}\\
			\sum\limits_{i=1}^{k}\left(\gamma^{\delta}\alpha_{i}\right)&\sum\limits_{i=1}^{k}\left(\gamma^{\delta}\alpha_{i}\right)^{2}&\cdots&\sum\limits_{i=1}^{k}\left(\gamma^{\delta}\alpha_{i}\right)^{t}&\sum\limits_{i=1}^{k}\left(\gamma^{\delta}\alpha_{i}\right)^{t+1}&\sum\limits_{i=1}^{k}\left(\gamma^{\delta}\alpha_{i}\right)^{t+2}&\cdots&\sum\limits_{i=1}^{k}\left(\gamma^{\delta}\alpha_{i}\right)^{k}\\
			\vdots&\vdots& &\vdots&\vdots\\
			\sum\limits_{i=1}^{k}\left(\gamma^{\delta}\alpha_{i}\right)^{t-1}&\sum\limits_{i=1}^{k}\left(\gamma^{\delta}\alpha_{i}\right)^{t}&\cdots&\sum\limits_{i=1}^{k}\left(\gamma^{\delta}\alpha_{i}\right)^{2t-2}&\sum\limits_{i=1}^{k}\left(\gamma^{\delta}\alpha_{i}\right)^{2t-1}&\sum\limits_{i=1}^{k}\left(\gamma^{\delta}\alpha_{i}\right)^{2t}&\cdots&\sum\limits_{i=1}^{k}\left(\gamma^{\delta}\alpha_{i}\right)^{k+t-2}\\
			\color{red}{\sum\limits_{i=1}^{k}\left(\gamma^{\delta}\alpha_{i}\right)^{t}+\eta}&\sum\limits_{i=1}^{k}\left(\gamma^{\delta}\alpha_{i}\right)^{t+1}&\cdots&\sum\limits_{i=1}^{k}\left(\gamma^{\delta}\alpha_{i}\right)^{2t-1}&\color{red}{\sum\limits_{i=1}^{k}\left(\gamma^{\delta}\alpha_{i}\right)^{2t}+\eta^{2}}&\sum\limits_{i=1}^{k}\left(\gamma^{\delta}\alpha_{i}\right)^{2t+1}&\cdots&\sum\limits_{i=1}^{k}\left(\gamma^{\delta}\alpha_{i}\right)^{k+t-1}\\ 
			\sum\limits_{i=1}^{k}\left(\gamma^{\delta}\alpha_{i}\right)^{t+1}&\sum\limits_{i=1}^{k}\left(\gamma^{\delta}\alpha_{i}\right)^{t+2}&\cdots&\sum\limits_{i=1}^{k}\left(\gamma^{\delta}\alpha_{i}\right)^{2t}&\sum\limits_{i=1}^{k}\left(\gamma^{\delta}\alpha_{i}\right)^{2t+1}&\sum\limits_{i=1}^{k}\left(\gamma^{\delta}\alpha_{i}\right)^{2t+2}&\cdots&\sum\limits_{i=1}^{k}\left(\gamma^{\delta}\alpha_{i}\right)^{k+t}\\
			\vdots&\vdots& &\vdots&\vdots\\
			\sum\limits_{i=1}^{k}\left(\gamma^{\delta}\alpha_{i}\right)^{k-1}&\sum\limits_{i=1}^{k}\left(\gamma^{\delta}\alpha_{i}\right)^{k}&\cdots&\sum\limits_{i=1}^{k}\left(\gamma^{\delta}\alpha_{i}\right)^{k+t-2}&\sum\limits_{i=1}^{k}\left(\gamma^{\delta}\alpha_{i}\right)^{k+t-1}&\sum\limits_{i=1}^{k}\left(\gamma^{\delta}\alpha_{i}\right)^{k+t}&\cdots&\sum\limits_{i=1}^{k}\left(\gamma^{\delta}\alpha_{i}\right)^{2k-2}
		\end{pmatrix}_{k\times k}.
	\end{aligned}
	$$
	
	Now, based on $k=2t$ or not, we have the following two cases.
	
	{\bf Case 1.} If $2t=k$, then by Lemma \ref{Fq*sum}, we have 
	$$
	\boldsymbol{G}_{\boldsymbol{1},k,t}\boldsymbol{G}_{\boldsymbol{1},k,t}^{T_{e}}=\begin{pmatrix}
		k+1&0&\cdots&0&\color{red}{\eta}&0&\cdots&0&0\\
		0&0&\cdots&0&0&0&\cdots&0&\gamma^{\delta k}k\\
		\vdots&\vdots& &\vdots&\vdots&\vdots& &\vdots&\vdots\\
		
		0&0&\cdots&0&0&\gamma^{\delta k}k&\cdots&0&0\\
		
		\color{red}{\eta}&0&\cdots&0&\color{red}{\gamma^{\delta k}k+\eta^{2}}&0&\cdots&0&0\\ 
		0&0&\cdots&\gamma^{\delta k}k&0&0&\cdots&0&0\\
		\vdots&\vdots& &\vdots&\vdots&\vdots& &\vdots\\
		0&\gamma^{\delta k}k&\cdots&0&0&0&\cdots&0&0
	\end{pmatrix}_{k\times k},
	$$
	it's easy to get 
	$$\rank\left(\boldsymbol{G}_{\boldsymbol{1},k,t}\boldsymbol{G}_{\boldsymbol{1},k,t}^{T_{e}}\right)=\begin{cases}
		k-1,&\ \text{if}\ k=2t,p\nmid k+1,\gamma^{\delta k}k+\eta^2\in\mathbb{F}_{q}^{*} \ \text{and}\ \gamma^{\delta k}k+\frac{k\eta^2}{k+1}=0;\\ 
		k,&\ \text{if}\ k=2t\ \text{and}\ p\mid k+1;\\
		&\ \ \ \  \text{or}\ k=2t\ \text{and}\ \ \gamma^{\delta k}k+\eta^2=0;\\
		&\ \ \ \ \text{or}\ k=2t,p\nmid k+1,\gamma^{\delta k}k+\eta^2\in\mathbb{F}_{q}^{*} \ \text{and}\ \gamma^{\delta k}k+\frac{k\eta^2}{k+1}\in\mathbb{F}_{q}^{*}. 
	\end{cases}$$ 
	Furthermore, by Lemma \ref{EHhull}, we have
	$$
	\dim\left(\ehull\left(\mathcal{C}\right)\right)=\begin{cases}
		1,&\ \text{if}\ k=2t, p\nmid 3k+1, \Delta_{\mu }+\eta^2\in\mathbb{F}_{q}^{*} \ \text{and}\ \Delta_{\mu }+\frac{3k\eta^2}{3k+1}=0;\\
		&\ \ \ \ \text{or}\ k\neq 2t \ \text{and}\ p\nmid 3k+1;\\
		0,&\ \text{if}\ k=2t\ \text{and}\ p\mid 3k+1;\\
		&\ \ \ \  \text{or}\ k=2t\ \text{and}\ \ \Delta_{\mu }+\eta^2=0;\\
		&\ \ \ \ \text{or}\ k=2t,p\nmid 3k+1,\Delta_{\mu }+\eta^2\in\mathbb{F}_{q}^{*} \ \text{and}\ \Delta_{\mu }+\frac{3k\eta^2}{3k+1}\in\mathbb{F}_{q}^{*};\\ 
		&\ \ \ \ \text{or}\ k\neq 2t \ \text{and}\ p\mid 3k+1, 
	\end{cases}
	$$
	
	{\bf Case 2.} If $2t\neq k$, without loss of generality, we set $2t<k$, then by Lemma \ref{Fq*sum}, we have 
	$$
	\boldsymbol{G}_{\boldsymbol{1},k,t}\boldsymbol{G}_{\boldsymbol{1},k,t}^{T_{e}}=\begin{pmatrix}
		k+1&0&\cdots&0&\color{red}{\eta}&0&\cdots&0&0&0&\cdots&0&0\\
		0&0&\cdots&0&0&0&\cdots&0&0&0&\cdots&0&\gamma^{\delta k}k\\
		\vdots&\vdots& &\vdots&\vdots&\vdots& &\vdots&\vdots&\vdots& &\vdots&\vdots\\
		
		0&0&\cdots&0&0&0&\cdots&0&0&\gamma^{\delta k}k&\cdots&0&0\\
		
		\color{red}{\eta}&0&\cdots&0&\color{red}{\eta^{2}}&0&\cdots&0&\gamma^{\delta k}k&0&\cdots&0&0\\  
		
		0&0&\cdots&0&0&0&\cdots&\gamma^{\delta k}k&0&0&\cdots&0&0\\  
		\vdots&\vdots& &\vdots&\vdots&\vdots& &\vdots&\vdots&\vdots& &\vdots&\vdots\\
		0&0&\cdots&0&0&\gamma^{\delta k}k&\cdots&0&0&0&\cdots&0&0\\ 
		0&0&\cdots&0&\gamma^{\delta k}k&0&\cdots&0&0&0&\cdots&0&0\\ 
		0&0&\cdots&\gamma^{\delta k}k&0&0&\cdots&0&0&0&\cdots&0&0\\   
		\vdots&\vdots& &\vdots&\vdots&\vdots& &\vdots&\vdots&\vdots& &\vdots&\vdots\\
		0&\gamma^{\delta k}k&\cdots&0&0&0&\cdots&0&0&0&\cdots&0&0
	\end{pmatrix}_{k\times k},
	$$
	it's easy to get
	$$\rank\left(\boldsymbol{G}_{\boldsymbol{1},k,t}\boldsymbol{G}_{\boldsymbol{1},k,t}^{T_{e}}\right)=\begin{cases}
		k-1,&\ \text{if}\ k\neq 2t,p\nmid k+1;\\
		k,&\ \text{if}\ k\neq 2t,p\mid k+1.
	\end{cases}$$
	Furthermore, by Lemma \ref{EHhull}, we have
	$$\dim\left(\ehull\left(\mathrm{MGRS}_{k+1,k}\left(\boldsymbol{\alpha},\boldsymbol{v},\eta,t\right)\right)\right)=\begin{cases}
		1,&\ \text{if}\ k\neq 2t,p\nmid k+1;\\
		0,&\ \text{if}\ k\neq 2t,p\mid k+1.
	\end{cases}$$
	
	This completes the proof of Theorem $\ref{EHullMGRS1}$.
	
	\section{A class of MGRS codes with  flexible Hermitian hull dimensions}\label{sec6}
	In this section, by introducing a variable $\delta$ and a positive integer $m$, we construct a class of MGRS codes with length $m(q+1)$. These codes possess a $(\delta-2)$-dimension Hermitian hull when $\delta\geq 2$, and are Hermitian LCD or have a one-dimensional Hermitian hull when $\delta=1$, i.e., we constructively prove that there exists an MGRS code with flexible length and Hermitian hull dimension.
	
	\subsection{Three essential results about the symmetric polynomial}
	
	In this subsection, we present three key propositions about the elementary symmetric polynomial and the complete symmetric polynomial. 
	
	Firstly, we give a formula of the elementary symmetric polynomial with the set $\left\{1,x,x^{2},\ldots,x^{n-1}\right\}$ and $1-x^{j}\neq 0(i=1,\ldots,n)$.
	\begin{proposition}\label{ESPAGS}
		For any indeterminate $x$ with $1-x^{j}\neq 0(i=1,\ldots,n)$, let $I=\left\{1,x,x^{2},\ldots,x^{n-1}\right\}$, then
		$$\sigma_{i}(I)=x^{\frac{i(i-1)}{2}}\prod\limits_{j=1}^{i}\frac{1-x^{n-i+j}}{1-x^{j}}(i=1,\ldots,n).$$
	\end{proposition}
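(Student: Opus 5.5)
This is the $q$-binomial theorem (Gaussian binomial) in disguise: $\sigma_i(1,x,\ldots,x^{n-1}) = x^{i(i-1)/2}\binom{n}{i}_x$. I would prove it with the generating polynomial
$$P_n(z)=\prod_{j=0}^{n-1}\left(1+x^{j}z\right)=\sum_{i=0}^{n}\sigma_{i}(I)z^{i}.$$
Write $\sigma_i^{(n)}$ for the coefficient of $z^i$. The goal is a first-order recursion in $i$ for these coefficients, obtained by comparing two factorizations of $P_n$.

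\textbf{Step 1 (functional equation).} Replacing $z$ by $xz$ shifts the product by one factor:
$$\left(1+z\right)P_{n}(xz)=\prod_{j=0}^{n}\left(1+x^{j}z\right)=P_{n}(z)\left(1+x^{n}z\right).$$

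\textbf{Step 2 (recursion).} Comparing the coefficients of $z^{i}$ on both sides of Step 1 gives
$$x^{i}\sigma_{i}^{(n)}+x^{i-1}\sigma_{i-1}^{(n)}=\sigma_{i}^{(n)}+x^{n}\sigma_{i-1}^{(n)}.$$
Rearranging,
$$\sigma_{i}^{(n)}\left(1-x^{i}\right)=x^{i-1}\left(1-x^{n-i+1}\right)\sigma_{i-1}^{(n)}.$$

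\textbf{Step 3 (solve the recursion).} The hypothesis $1-x^{j}\neq0$ lets us divide by $1-x^{i}$ for $1\le i\le n$. Starting from $\sigma_0^{(n)}=1$ and iterating, the powers of $x$ multiply to $x^{0+1+\cdots+(i-1)}=x^{i(i-1)/2}$, and the remaining factors multiply to
$$\prod_{m=1}^{i}\frac{1-x^{n-m+1}}{1-x^{m}}.$$

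\textbf{Step 4 (reindex).} Reversing the numerator index via $m\mapsto i+1-j$ turns $1-x^{n-m+1}$ into $1-x^{n-i+j}$ with $j$ again running from $1$ to $i$, while the denominator product is unchanged. This matches the stated formula.

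Alternatively, the same statement follows by induction on $n$ from the Pascal-type identity $\sigma_i^{(n+1)}=\sigma_i^{(n)}+x^{n}\sigma_{i-1}^{(n)}$, checking the $q$-Pascal relation algebraically. I prefer the functional-equation route above because it avoids that verification.

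The only delicate point is bookkeeping. The hypothesis presumably intends $1-x^{j}\neq0$ for $j=1,\ldots,n$ (the statement's "$i$" is a typo), and that is exactly what Step 3 needs in order to divide. If $x$ is a field element rather than an indeterminate, the identity should be read over the field, and the same argument works verbatim under that hypothesis. The main obstacle I expect is getting the coefficient comparison in Step 2 exactly right, including the powers of $x$, since an off-by-one error there would propagate into the exponent $\frac{i(i-1)}{2}$.
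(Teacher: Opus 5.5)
Your proof is correct. It starts from the same generating polynomial as the paper: both expand $\prod_{j=0}^{n-1}(1+x^{j}z)$ and read off $\sigma_i(I)$ as the coefficient of $z^i$. The two arguments differ in how that coefficient is identified.

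The paper cites the $q$-binomial theorem from a combinatorics textbook to get $\sigma_i(I)=x^{\binom{i}{2}}\begin{bmatrix}n\\i\end{bmatrix}_{x}$. It then unwinds the Gaussian coefficient through $[j]_x=(1-x^j)/(1-x)$ and the quotient $[n]_x!/([i]_x![n-i]_x!)$, cancelling the factors $1-x^j$ for $j\le n-i$.

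You instead prove the needed case of the $q$-binomial theorem directly. The functional equation $(1+z)P_n(xz)=(1+x^{n}z)P_n(z)$ gives the first-order recursion $\sigma_i(1-x^i)=x^{i-1}(1-x^{n-i+1})\sigma_{i-1}$. This telescopes to the product form, so no $q$-factorials are needed beyond your reindexing in Step 4. Your coefficient comparison and the exponent bookkeeping giving $x^{i(i-1)/2}$ are both right.

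What each route buys:
\begin{itemize}
\item The paper's version is shorter, given the external reference.
\item Yours is self-contained. Its only divisions are by $1-x^{m}$ with $m\le i$, so it applies verbatim when $x$ is a field element, such as the generator $\omega\in\mathbb{F}_q^{*}$ used in Proposition \ref{ESPneq0}. There is no need to specialize an identity stated for an indeterminate.
\end{itemize}
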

	{\bf Proof.} Note that
	\begin{equation}\label{ESPE}
		\prod\limits_{i=0}^{n-1}\left(1+x^{i}z\right)=\sum\limits_{i=0}^{n}\sigma_{i}(I)z^{i}.
	\end{equation}
	And by the $q$-binomial theorem in \cite[Page 117, Exercise 7]{FengRQCombinatorics}, we can get 
	\begin{equation}\label{qtheorem}
		\prod\limits_{i=0}^{n-1}\left(1+x^{i}z\right)=\sum\limits_{i=0}^{n}x^{\binom{i}{2}}\begin{bmatrix}n\\i\end{bmatrix}_{x}z^{i}.
	\end{equation} 
	Now, by comparing the coefficients of $x^{i}$ on both sides of the equations \eqref{ESPE} and \eqref{qtheorem}, we have
	$$\sigma_{i}(I)=x^{\binom{i}{2}}\begin{bmatrix}n\\i\end{bmatrix}_{x}.$$
	Thus, by the $x$-analog of the binomial coefficient given in \cite[Page 87, Definition 4.1.8]{FengRQCombinatorics},
	$$\begin{aligned}
		x^{\binom{i}{2}}\begin{bmatrix}n\\i\end{bmatrix}_{x}=&x^{\frac{i(i-1)}{2}}\frac{[n]_{x}!}{[i]_{x}![n-i]_{x}!}=x^{\frac{i(i-1)}{2}}\frac{\prod\limits_{j=1}^{n}[j]_{x}}{\prod\limits_{j=1}^{i}[j]_{x}\prod\limits_{j=1}^{n-i}[j]_{x}}=x^{\frac{i(i-1)}{2}}\frac{\prod\limits_{j=1}^{n}\frac{1-x^{j}}{1-x}}{\prod\limits_{j=1}^{i}\frac{1-x^{j}}{1-x}\prod\limits_{j=1}^{n-i}\frac{1-x^{j}}{1-x}}\\
		=&x^{\frac{i(i-1)}{2}}\frac{\prod\limits_{j=n-i+1}^{n}\left(1-x^{j}\right)}{\prod\limits_{j=1}^{i}\left(1-x^{j}\right)}=x^{\frac{i(i-1)}{2}}\frac{\prod\limits_{j=1}^{i}\left(1-x^{n-i+j}\right)}{\prod\limits_{j=1}^{i}\left(1-x^{j}\right)}=x^{\frac{i(i-1)}{2}}\prod\limits_{j=1}^{i}\frac{1-x^{n-i+j}}{1-x^{j}},
	\end{aligned}$$
	i.e.,
	$$\sigma_{i}(I)=x^{\frac{i(i-1)}{2}}\prod\limits_{j=1}^{i}\frac{1-x^{n-i+j}}{1-x^{j}}.$$
	
	This completes the proof of Proposition \ref{ESPAGS}.
	
	$\hfill\Box$
	
	The following Proposition \ref{ESPCSP} shows that the non-vanishing of the elementary symmetric polynomial of a subset $I_{m}$ of $\mathbb{F}_{q}^{*}$ is equivalent to the non-vanishing of the complete symmetric polynomial of its complement set.
	\begin{proposition}\label{ESPCSP}
		Let $I_{m}=\left\{c_{1},c_{2},\ldots,c_{m}\right\}$ and $J_{m}=\mathbb{F}_{q}^{*}\backslash I_{m}.$ Then for any $1\leq i\leq m$, 
		$$S_{i}\left(I_{m}\right)\neq 0\Longleftrightarrow \sigma_{i}\left(J_{m}\right)\neq 0.$$
	\end{proposition}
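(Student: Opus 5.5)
The plan is a generating-function argument built on the factorization of $1-z^{q-1}$ over $\mathbb{F}_{q}$. Since $\mathbb{F}_{q}^{*}$ is exactly the set of roots of $y^{q-1}-1$, we have
$$\prod_{c\in\mathbb{F}_{q}^{*}}(1-cz)=1-z^{q-1}$$
in $\mathbb{F}_{q}[z]$. This follows by substituting $y=z^{-1}$ into $\prod_{c}(y-c)=y^{q-1}-1$ and multiplying by $z^{q-1}$. Splitting $\mathbb{F}_{q}^{*}=I_{m}\sqcup J_{m}$ then gives
$$\prod_{c\in I_{m}}(1-cz)\cdot\prod_{c\in J_{m}}(1-cz)=1-z^{q-1}.$$

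Next I pass to formal power series in $\mathbb{F}_{q}[[z]]$ and use the two standard generating functions behind Definition \ref{DofESpolynomial}:
$$\prod_{c\in J_{m}}(1-cz)=\sum_{i\geq 0}(-1)^{i}\sigma_{i}(J_{m})z^{i},\qquad \prod_{c\in I_{m}}(1-cz)^{-1}=\sum_{i\geq 0}S_{i}(I_{m})z^{i}.$$
The second identity comes from expanding each factor as a geometric series; it is also the identity encoded in Lemma \ref{sigmaS}. Since $\prod_{c\in I_{m}}(1-cz)$ has constant term $1$, it is invertible in $\mathbb{F}_{q}[[z]]$. Dividing by it yields
$$\sum_{i\geq 0}(-1)^{i}\sigma_{i}(J_{m})z^{i}=(1-z^{q-1})\sum_{i\geq 0}S_{i}(I_{m})z^{i}.$$

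Comparing coefficients of $z^{i}$ for $0\leq i\leq q-2$ gives $(-1)^{i}\sigma_{i}(J_{m})=S_{i}(I_{m})$, because the factor $-z^{q-1}$ only affects coefficients of degree at least $q-1$. In particular $S_{i}(I_{m})\neq0$ if and only if $\sigma_{i}(J_{m})\neq0$, which is the claimed equivalence. Since $i\leq m\leq q-1$, this covers every case except $i=m=q-1$.

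The only real obstacle is that boundary case, where $I_{m}=\mathbb{F}_{q}^{*}$ and $J_{m}=\emptyset$. There the coefficient comparison reads $\sigma_{q-1}(\emptyset)=S_{q-1}(\mathbb{F}_{q}^{*})-S_{0}(\mathbb{F}_{q}^{*})$. The left side is $0$ by convention, so $S_{q-1}(\mathbb{F}_{q}^{*})=1\neq0$, and the equivalence fails. I would therefore state the proposition with $1\leq i\leq \min\{m,q-2\}$, or equivalently assume $m\leq q-2$, which is the situation relevant for the later constructions. With that caveat the argument is complete: the proof is the identity $(-1)^{i}\sigma_{i}(J_{m})=S_{i}(I_{m})$ for $i<q-1$, which is stronger than the stated equivalence.
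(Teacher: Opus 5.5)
Your proof is correct and is essentially the paper's argument. Both rest on the factorization of $y^{q-1}-1$ over $\mathbb{F}_{q}^{*}=I_{m}\sqcup J_{m}$ together with the $\sigma$--$S$ relation of Lemma \ref{sigmaS}; the paper unwinds that relation by induction on $i$ (with sign slips that do not affect nonvanishing), while your division in $\mathbb{F}_{q}[[z]]$ gives $S_{i}(I_{m})=(-1)^{i}\sigma_{i}(J_{m})$ in one step.

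Your boundary caveat is also correct and applies to the paper's proof too. The paper's coefficient comparison is only valid for $1\leq t\leq q-2$, and $S_{q-1}(\mathbb{F}_{q}^{*})=1\neq 0=\sigma_{q-1}(\emptyset)$, so the statement needs $i\leq q-2$. This restriction does hold in the later application, where $i\leq k-\delta\leq (q-1)/2$.
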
 
	{\bf Proof.} Firstly, for any $x\in\mathbb{F}_{q}^{*}=I_{m}\cup J_{m}$, $x^{q-1}=1$, and $I_{m}\cap J_{m}=\emptyset$, we have \begin{equation}\label{yq-1-1}
		\prod _{x_i \in I_m}(y-x_i)\prod _{x_j \in J_m} (y-x_j) = y ^{q-1} - 1.
	\end{equation}
	Note that
	$$\prod _{x_i \in I_m}(y-x_i)=y^{q-1-m}-\sigma_{1}\left(I_{m}\right)y^{q-2-m}+\sigma_{2}\left(I_{m}\right)y^{q-3-m}-\ldots+(-1)^{q-1-m}\sigma_{q-1-m}\left(I_{m}\right)$$
	and
	$$\prod _{x_j \in J_m}(y-x_j)=y^{q-1-m}-\sigma_{1}\left(J_{m}\right)y^{q-2-m}+\sigma_{2}\left(J_{m}\right)y^{q-3-m}-\ldots+(-1)^{q-1-m}\sigma_{q-1-m}\left(J_{m}\right).$$
	By comparing the coefficients of $y^{t}(1\leq t \leq q-2)$ on both sides for the equation \eqref{yq-1-1}, we have
	$$\sum\limits_{i=0}^{t}\sigma_{i}\left(I_{m}\right)\sigma_{t-i}\left(J_{m}\right)=0.$$
	
	Secondly, for any non-empty set $I_{m}$ and positive integer $N$, by Remark \ref{DofESpolynomial}, we have
	$$\sum\limits_{i=0}^{N}(-1)^{i}\sigma_{i}\left(I_{m}\right)S_{N-i}\left(I_{m}\right)=0.$$
	Then 
	$$S_{N}\left(I_{m}\right)=\sum\limits_{i=1}^{N}(-1)^{i}\sigma_{i}\left(I_{m}\right)S_{N-i}\left(I_{m}\right).$$ 
	
	Next, by using the mathematical induction, we prove that for any $1\leq i\leq m$,
	$$S_{i}\left(I_{m}\right)\neq 0\Longleftrightarrow \sigma_{i}\left(J_{m}\right)\neq 0.$$
	
	For $j=1$, by $\sum\limits_{x\in\mathbb{F}_{q}^{*}}x=0$, we have $\sum\limits_{x\in I_{m}}x+\sum\limits_{x\in J_{m}}x=0$, i.e., $\sigma_{1}\left(I_{m}\right)+\sigma_{1}\left(J_{m}\right)=0$, it means 
	$$\sigma_{1}\left(I_{m}\right)\neq 0\Longleftrightarrow -\sigma_{1}\left(J_{m}\right)\neq 0,$$
	furthermore,
	$$S_{1}\left(I_{m}\right)\neq 0\Longleftrightarrow \sigma_{1}\left(J_{m}\right)\neq 0.$$
	
	Now, we assume that for any $j\leq i-1$ with $1\leq i\leq m$, we have $$S_{j}\left(I_{m}\right)=(-1)^{j}\sigma_{j}\left(J_{m}\right).$$
	Then, for $j=i$, we have
	$$
	\begin{aligned}
		S_{i}\left(I_{m}\right)=&\sum\limits_{\ell=1}^{i}(-1)^{\ell}\sigma_{\ell}\left(I_{m}\right)S_{i-\ell}\left(I_{m}\right)\\
		=&\sum\limits_{\ell=1}^{i}(-1)^{\ell}\sigma_{\ell}\left(I_{m}\right)\cdot(-1)^{i-\ell}\sigma_{i-\ell}\left(J_{m}\right)\\
		=&(-1)^{i}\sum\limits_{\ell=1}^{i}\sigma_{\ell}\left(I_{m}\right)\sigma_{i-\ell}\left(J_{m}\right)\\
		=&(-1)^{i}\left(\sum\limits_{\ell=0}^{i}\sigma_{\ell}\left(I_{m}\right)\sigma_{i-\ell}\left(J_{m}\right)-\sigma_{i}\left(J_{m}\right)\right)\\
		=&(-1)^{i+1}\sigma_{i}\left(J_{m}\right).
	\end{aligned}
	$$
	And so, by the induction hypothesis, for $1\leq i\leq m$, we can get
	$$S_{i}\left(I_{m}\right)\neq 0\Longleftrightarrow \sigma_{i}\left(J_{m}\right)\neq 0.$$
	
	This completes the proof of Proposition \ref{ESPCSP}.
	
	$\hfill\Box$
	
	The following Proposition \ref{ESPneq0} constructively prove that there exists some $m$-subset $J_{m}\subseteq\mathbb{F}_{q}^{*}$ such that $\sigma_{i}\left(J_{m}\right)\neq 0(i=1,2,\ldots,k-\delta)$ over $\mathbb{F}_{q}^{*}$.  
	\begin{proposition}\label{ESPneq0}
		For any given positive integer $\delta$ and $k$ with $1\leq \delta \leq k-1$. Let $q$ be a prime power with $q\geq 2(k-\delta)+1$ and $m$ an positive integer with $k-\delta\leq m\leq q-1-(k-\delta)$. Then there exists some $m$-subset $J_{m}\subseteq\mathbb{F}_{q}^{*}$ such that
		$$\sigma_{i}\left(J_{m}\right)\neq 0(i=1,2,\ldots,k-\delta).$$
	\end{proposition}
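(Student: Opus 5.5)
The plan is to give an explicit construction: take $J_{m}$ to be a geometric progression in a primitive element, and read off the non-vanishing of every $\sigma_{i}(J_{m})$ from the closed formula of Proposition \ref{ESPAGS}. Fix $\mathbb{F}_{q}^{*}=\langle\gamma\rangle$ and set
$$J_{m}=\left\{1,\gamma,\gamma^{2},\ldots,\gamma^{m-1}\right\}.$$
Since $k-\delta\geq 1$, we have $m\leq q-1-(k-\delta)\leq q-2<q-1=\mathrm{ord}(\gamma)$. Hence the $m$ powers $\gamma^{0},\ldots,\gamma^{m-1}$ are pairwise distinct and $J_{m}$ is a genuine $m$-subset of $\mathbb{F}_{q}^{*}$.

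Next I apply Proposition \ref{ESPAGS} with $n=m$ and $x=\gamma$. One technical point: the $q$-binomial identity is really a polynomial identity over $\mathbb{Z}$,
$$\prod_{j=1}^{i}\left(1-x^{j}\right)\cdot\sigma_{i}\left(1,x,\ldots,x^{m-1}\right)=x^{\binom{i}{2}}\prod_{j=1}^{i}\left(1-x^{m-i+j}\right),$$
which can be specialized to $x=\gamma\in\mathbb{F}_{q}$. The division in Proposition \ref{ESPAGS} is then legitimate as long as $1-\gamma^{j}\neq 0$ for $1\leq j\leq i$. For $1\leq i\leq k-\delta$ this holds because $j\leq k-\delta\leq \frac{q-1}{2}<q-1$, using the hypothesis $q\geq 2(k-\delta)+1$. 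So for each such $i$,
$$\sigma_{i}(J_{m})=\gamma^{\frac{i(i-1)}{2}}\prod_{j=1}^{i}\frac{1-\gamma^{m-i+j}}{1-\gamma^{j}}.$$

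It remains to check that every factor in the numerator is nonzero, i.e.\ that $q-1\nmid m-i+j$ for $1\leq j\leq i$. The exponents $m-i+j$ run through $m-i+1,\ldots,m$. The lower bound uses $i\leq k-\delta\leq m$, which gives $m-i+1\geq 1$. The upper bound is $m\leq q-2$, shown above. So every exponent lies in $[1,q-2]$ and $\gamma^{m-i+j}\neq 1$. Together with $\gamma^{\binom{i}{2}}\neq 0$, this gives $\sigma_{i}(J_{m})\neq 0$ for all $i=1,\ldots,k-\delta$.

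I do not expect a real obstacle. The only delicate point is justifying the use of Proposition \ref{ESPAGS} over $\mathbb{F}_{q}$, where $x$ is a field element rather than an indeterminate. I handle this through the cleared-denominator polynomial identity above, together with the nonvanishing of the denominators, which is where $q\geq 2(k-\delta)+1$ enters. The lower bound $m\geq k-\delta$ is used to keep $m-i+1\geq 1$; the full upper bound $m\leq q-1-(k-\delta)$ is only needed here in the weaker form $m\leq q-2$. Presumably the stronger form is what the complement $I_{m}=\mathbb{F}_{q}^{*}\backslash J_{m}$ needs in Proposition \ref{ESPCSP}, which requires $\#I_{m}\geq k-\delta$.
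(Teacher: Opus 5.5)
Your proof is correct, and it takes a different and shorter route than the paper.

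\textbf{The paper's route.} The paper applies Proposition~\ref{ESPAGS} only in the base case $m=k-\delta$, with $J_{k-\delta}=\{\omega,\omega^{2},\ldots,\omega^{k-\delta}\}$. It then grows the set one element at a time. From $\sigma_i(J_t\cup\{x\})=\sigma_i(J_t)+x\sigma_{i-1}(J_t)$, each $i\le k-\delta$ forbids exactly one value, namely $x=-\sigma_i(J_t)/\sigma_{i-1}(J_t)$. The bound $m\le q-1-(k-\delta)$ is exactly what guarantees that some $x\in\mathbb{F}_q^{*}$ avoids both $J_t$ and these at most $k-\delta$ forbidden values.

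\textbf{Your route.} You notice that the geometric progression works at every length, so the induction disappears. This gives an explicit set and in fact proves more than is asked. The cleared-denominator identity $\prod_{j=1}^{i}(1-\gamma^{j})\,\sigma_i(J_m)=\gamma^{\binom{i}{2}}\prod_{j=1}^{i}(1-\gamma^{m-i+j})$ already forces $\sigma_i(J_m)\neq 0$ once the right-hand side is nonzero. Hence $\{1,\gamma,\ldots,\gamma^{m-1}\}$ has $\sigma_i\neq 0$ for all $1\le i\le m$ whenever $m\le q-2$.

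Two consequences follow. First, your denominator check, and with it your appeal to $q\ge 2(k-\delta)+1$, is harmless but not needed. Second, the only hypotheses actually used are $k-\delta\le m\le q-2$.

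Starting the progression at $1$ also matches Proposition~\ref{ESPAGS} exactly. The paper's base set $\{\omega,\ldots,\omega^{k-\delta}\}$ is $\omega$ times such a progression, so its displayed formula silently drops a factor $\omega^{i}$ (irrelevant for nonvanishing).

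\textbf{What the paper's approach buys.} The greedy extension is more robust. It uses no closed form beyond the base case, and it shows that every admissible $t$-set with $t<m$ extends. It would also tolerate additional constraints on $J_m$, as long as enough free elements remain.
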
 
	{\bf Proof.} If $m=k-\delta$, for the $(k-\delta)$-subset $J_{k-\delta}=\left\{\omega,\omega^{2},\ldots,\omega^{k-\delta}\right\}\subseteq\mathbb{F}_{q}^{*}=\langle\omega\rangle$, by Lemma \ref{ESPAGS}, we have
	$$\sigma_{i}\left(J_{k-\delta}\right)=\omega^{\frac{i(i-1)}{2}}\prod\limits_{j=1}^{i}\frac{1-\omega^{k-\delta-i+j}}{1-\omega^{j}}(i=1,2,\ldots,k-\delta).$$
	Note that 
	$$1\leq j\leq i\leq k-\delta\leq q-1-(k-\delta)<q-1,$$
	$$ k-\delta-i+j\leq k-\delta<q-1,$$ 
	and
	$$k-\delta-i+j\geq k-\delta-i+1\geq 1,$$
	then
	$$\sigma_{i}\left(J_{k-\delta}\right)=x^{\frac{i(i-1)}{2}}\prod\limits_{j=1}^{i}\frac{1-\omega^{k-\delta-i+j}}{1-\omega^{j}}\neq 0(i=1,2,\ldots,k-\delta),$$
	i.e., there exists a $(k-\delta)$-subset $J_{k-\delta}\subseteq\mathbb{F}_{q}^{*}$ such that
	$$\sigma_{i}\left(J_{k-\delta}\right)\neq 0(i=1,2,\ldots,k-\delta).$$
	
	Assume that for any given positive integer $t$ with $k-\delta \leq t \leq m-1$, there exists some valid $t$ subset $J_{t}\subseteq\mathbb{F}_{q}^{*}$ such that
	$$\sigma_{i}\left(J_{t}\right)\neq 0(i=1,2,\ldots,k-\delta).$$ 
	Now, based on the above set $J_{t}$, we construct a $(t+1)$-element subset $J_{t+1}$ such that
	$$\sigma_{i}\left(J_{t+1}\right)\neq 0(i=1,2,\ldots,k-\delta).$$ 
	Note that for any $x\in\mathbb{F}_{q}^{*}\backslash J_{t}$,
	$$\sigma_{i}\left(J_{t}\cup\left\{x\right\}\right)=\sigma_{i}\left(J_{t}\right)+x\sigma_{i-1}\left(J_{t}\right)(i=1,2,\ldots,k-\delta).$$ 
	Then for any given $i$, 
	$$\sigma_{i}\left(J_{t}\cup\left\{x\right\}\right)=0\Longleftrightarrow\sigma_{i}\left(J_{t}\right)+x\sigma_{i-1}\left(J_{t}\right)=0\Longleftrightarrow x=-\frac{\sigma_{i}\left(J_{t}\right)}{\sigma_{i-1}\left(J_{t}\right)}.$$
	And by 
	$$
	\begin{aligned}
		|\mathbb{F}_{q}^{*}\backslash J|=& q-1-(t+k-\delta)\\
		\geq& q-1-(m-1+k-\delta)\\
		\geq& q-1-\left(q-1-(k-\delta)-1+k-\delta\right)\\
		=& 1,
	\end{aligned}
	$$
	where $J=J_{t}\cup \bigcup\limits_{i=1}^{k-\delta}\left\{-\frac{\sigma_{i}\left(J_{t}\right)}{\sigma_{i-1}\left(J_{t}\right)}\right\}$, there exists at least one element $j\in\mathbb{F}_{q}^{*}$ such that 
	$$\sigma_{i}\left(J_{t}\cup\left\{j\right\}\right)\neq 0(i=1,2,\ldots,k-\delta).$$
	
	This completes the proof of Proposition \ref{ESPneq0}.
	
	$\hfill\Box$
	
	\begin{remark}
		By taking $k-\delta=2$ in Proposition \ref{ESPneq0}, the corresponding result is just Lemma IV.14 in \cite{ZhouHYGRL2026}.
	\end{remark}
	\iffalse By Lemma 
	\begin{theorem}
		For given positive integer $\delta$ with $1\leq \delta \leq k-1$. Let $q$ be a prime power with $q\geq 2(k-\delta)+1$ and $m$ an positive integer with $k-\delta\leq m\leq q-1-(k-\delta)$. Then there exist a $m$-subset $I_{m}\subseteq\mathbb{F}_{q}^{*}$ such that
		$$S_{i}\left(I_{m}\right)\neq 0(i=1,2,\ldots,k-\delta).$$
	\end{theorem}\fi
	
	\subsection{Main results}
	In this subsection, basing on Propositions \ref{ESPAGS}-\ref{ESPneq0}, we constructively prove that there exist MGRS codes with flexible length and Hermitian hull dimensions as the following
	\begin{theorem}\label{MGRSHhull}
		For any given positive integer $k$, $m$ and $\delta$ with $1\leq\delta\leq \min\left\{k-1,m\right\}$. If $q\geq 2k-2\delta+1$, $k-\delta\leq m\leq q-1-(k-\delta)$, then the following two statements are true.
		
		$(1)$ If $\delta=1$, then there exists an $[m(q+1)+1,k]_{q^2}$ MGRS code with $$
		\dim\left(\hhull\left(\mathrm{MGRS}_{m(q+1)+1,k}\left(\boldsymbol{\alpha},\boldsymbol{v},\eta,t\right)\right)\right)=0\ \text{or}\ 1;
		$$
		
		$(2)$ If $\delta\geq 2$, then there exists an $[m(q+1)+1,k]_{q^2}$ MGRS code with $$
		\dim\left(\hhull\left(\mathrm{MGRS}_{m(q+1)+1,k}\left(\boldsymbol{\alpha},\boldsymbol{v},\eta,t\right)\right)\right)=\delta-2.
		$$  
	\end{theorem}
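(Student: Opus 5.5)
Our plan is to adapt the standard construction of GRS codes over $\mathbb{F}_{q^2}$ with prescribed Hermitian hull, the setting in which Propositions \ref{ESPAGS}--\ref{ESPneq0} are designed to be used. Let $\mathbb{F}_{q^2}^{*}=\langle\theta\rangle$ and $\beta=\theta^{q-1}$, so $\beta$ generates the subgroup $U_{q+1}$ of order $q+1$. By Proposition \ref{ESPneq0} we fix an $m$-subset $J_m\subseteq\mathbb{F}_{q}^{*}$ with $\sigma_i(J_m)\neq 0$ for $1\le i\le k-\delta$, and we let $I_m=\mathbb{F}_q^{*}\backslash J_m$. By Proposition \ref{ESPCSP} this gives control of $S_i(I_m)$; alternatively we apply the construction to the complement, whichever index set appears in the Gram computation. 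The first $m(q+1)$ evaluation points are taken as $\boldsymbol{\alpha}=\{c\,\beta^{j}: c\in J_m \text{ (or a lift of } J_m\text{ into cosets of } U_{q+1}),\ 0\le j\le q\}$, i.e.\ $m$ cosets of $U_{q+1}$ in $\mathbb{F}_{q^2}^{*}$ indexed by elements of $\mathbb{F}_q^{*}$. The last coordinate is the modified one, $v_n(f(0)+\eta f_t)$.

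The key step is to compute the Hermitian Gram matrix $\boldsymbol{G}\boldsymbol{G}^{T_h}$, whose $(i,j)$ entry is $\sum_{\ell} v_\ell^{q+1}\alpha_\ell^{\,i+qj}$ plus the corrections from the last column in row/column $0$ and $t$. We choose $v_\ell^{q+1}\in\mathbb{F}_q^{*}$ as $u$-type weights (Lemma \ref{usapowersum}) times free parameters. Summing over each coset $cU_{q+1}$ kills every monomial unless $(q+1)\mid i+qj$, i.e.\ $i\equiv j \pmod{q+1}$. For $0\le i,j\le k-1$ with $k\le q$, the Gram matrix is therefore diagonal, apart from the red entries at positions $(0,0),(0,t),(t,0),(t,t)$ coming from the last column. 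The $i$-th diagonal entry is $(q+1)\sum_{c} w_c\, c^{(q+1)i}$, which is a power sum in the norms $c^{q+1}\in\mathbb{F}_q^{*}$. With the weights $w_c=u_c$ from Lemma \ref{usapowersum} applied to the $m$ norms, these entries vanish for small $i$ and equal complete symmetric polynomials $S_{i-m+1}$ for larger $i$. We reindex so that exactly $\delta$ (or $\delta$ plus a controlled number) of the diagonal entries vanish and the remaining $k-\delta$ are the nonvanishing $S_i(I_m)$ (equivalently $\sigma_i(J_m)$ by Proposition \ref{ESPCSP}). This also requires arranging that $q+1\equiv 1$ in $\mathbb{F}_q$, which is nonzero.

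Next we track the effect of the modified column. It contributes $v_n^{q+1}$ to entry $(0,0)$, $v_n^{q+1}\eta^q$ and $v_n^{q+1}\eta$ to entries $(0,t)$ and $(t,0)$, and $v_n^{q+1}\eta^{q+1}$ to entry $(t,t)$. We choose $t$ and the zero pattern so that indices $0$ and $t$ both lie among the vanishing diagonal positions. The corrected $2\times2$ block $v_n^{q+1}\begin{pmatrix}1&\eta^q\\ \eta&\eta^{q+1}\end{pmatrix}$ has rank exactly $1$ (or $0$ if $\eta$ is adjusted), which raises the rank by one or two. This yields $\rank(\boldsymbol{G}\boldsymbol{G}^{T_h})=k-\delta+2$ when $\delta\ge2$, hence hull dimension $\delta-2$ by Lemma \ref{EHhull}. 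It yields rank $k$ or $k-1$ when $\delta=1$, since only one vanishing slot is available to absorb the modification.

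The main obstacle is the bookkeeping in the previous step. We must make the vanishing set of diagonal entries consist exactly of $\delta$ consecutive-type indices containing $0$ and $t$, and make all other entries nonzero via Proposition \ref{ESPneq0} under $q\ge 2(k-\delta)+1$. We also need a valid choice of $v_\ell$ with prescribed norms $v_\ell^{q+1}$; this is always possible because the norm map $\mathbb{F}_{q^2}^{*}\to\mathbb{F}_q^{*}$ is surjective. We expect to try the simplest normalization first, namely all weights inside each coset equal. Afterwards we verify the distinctness of the $m(q+1)$ points together with the point $0$ used in the last column.
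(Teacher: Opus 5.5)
\textbf{There is a genuine gap: the rank count is off by one, so your construction does not give hull dimension $\delta-2$.} The overall strategy is right: cosets of $U_{q+1}$, weights built from the $u_j$ of Lemma \ref{usapowersum}, and a diagonal Gram matrix plus a rank-one correction from the last column.

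You ask for exactly $\delta$ vanishing diagonal entries, with both $0$ and $t$ among them. In that configuration the $\{0,t\}$ block $v_n^{q+1}\begin{pmatrix}1&\eta^{q}\\ \eta&\eta^{q+1}\end{pmatrix}$ has rank exactly $1$ for every $\eta$. Its $(0,0)$ entry is nonzero and its determinant is zero, so it can be neither rank $0$ nor rank $2$. Hence $\rank(\boldsymbol{G}\boldsymbol{G}^{T_h})=(k-\delta)+1$ and the hull has dimension $\delta-1$, not $\delta-2$; your claimed rank $k-\delta+2$ does not follow. For $\delta=2$ it is worse: placing both $0$ and $t$ at vanishing positions forces at least two zeros on the diagonal. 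That gives rank at most $k-1$ and hull dimension at least $1$.

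\textbf{The fix is to arrange exactly $\delta-1$ vanishing diagonal entries.}
\begin{itemize}
\item \emph{Choice of norms.} Take an $m$-set $I_m=\{c_1,\dots,c_m\}\subseteq\mathbb{F}_q^*$ with $S_i(I_m)\ne0$ for $1\le i\le k-\delta$. To get it, apply Proposition \ref{ESPneq0} with size $q-1-m$ (same symmetric range for $m$), then Proposition \ref{ESPCSP}. Your $I_m=\mathbb{F}_q^*\setminus J_m$ has $q-1-m$ elements, not $m$.
\item \emph{Choice of cosets.} Use $\beta_jU_{q+1}$ with $\beta_j^{q+1}=c_j$, not $c_jU_{q+1}$. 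The latter has norms $c_j^2$, and $c_j$ and $-c_j$ give the same coset.
\item \emph{Choice of weights.} Put $v^{q+1}=u_jc_j^{m-\delta}$ on the $j$-th coset; the shift $c_j^{m-\delta}$ is what your proposal leaves unspecified. Lemma \ref{usapowersum} then gives diagonal entries $d_s=0$ for $0\le s\le\delta-2$. For $\delta-1\le s\le k-1$ it gives $d_s=S_{s-\delta+1}(I_m)$, i.e.\ $1,S_1,\dots,S_{k-\delta}$.
\end{itemize}

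Now split on $t$:
\begin{itemize}
\item If $1\le t\le\delta-2$ (possible only when $\delta\ge3$), the $\{0,t\}$ block has rank $1$. There are $k-\delta+1$ further nonzero diagonal entries, so the rank is $k-\delta+2$.
\item If $t\ge\delta-1$, the block $\begin{pmatrix}v_n^{q+1}&v_n^{q+1}\eta^q\\ v_n^{q+1}\eta&v_n^{q+1}\eta^{q+1}+d_t\end{pmatrix}$ has determinant $v_n^{q+1}d_t\ne0$, so it has rank $2$. There are $k-\delta$ further nonzero diagonal entries, so the rank is again $k-\delta+2$. This is the only case available when $\delta=2$.
\end{itemize}

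For $\delta=1$, no diagonal entry vanishes. A rank-one perturbation of an invertible diagonal matrix has rank $k-1$ or $k$, which gives hull dimension $0$ or $1$.
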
 
	{\bf Proof.} Firstly, for any $[n+1,k]_{q^2}$ MGRS code with the generator matrix \begin{equation}\label{Gvkt}
		\boldsymbol{G}_{\boldsymbol{v},k,t}= \begin{pmatrix}
			v_{1}&v_{2}&\cdots&v_{n}&v_{n+1}\\
			v_{1}\alpha_{1}&v_{2}\alpha_{2}&\cdots&v_{n}\alpha_{n}&0\\
			
			\vdots&\vdots& &\vdots&\vdots\\
			v_{1}\alpha_{1}^{t-1}&v_{2}\alpha_{2}^{t-1}&\cdots&v_{n}\alpha_{n}^{t-1}&0\\
			
			v_{1}\alpha_{1}^{t}&v_{2}\alpha_{2}^{t}&\cdots&v_{n}\alpha_{n}^{t}&v_{n+1}\eta\\
			
			v_{1}\alpha_{1}^{t+1}&v_{2}\alpha_{2}^{t+1}&\cdots&v_{n}\alpha_{n}^{t+1}&0\\
			\vdots&\vdots& &\vdots&\vdots\\ 
			v_{1}\alpha_{1}^{k-1}&v_{2}\alpha_{2}^{k-1}&\cdots&v_{n}\alpha_{n}^{k-1}&0
		\end{pmatrix}_{k\times (n+1)},
	\end{equation} it's easy to know that
	$$
	\begin{aligned}
		&\boldsymbol{G}_{\boldsymbol{v},k,t}\boldsymbol{G}_{\boldsymbol{v},k,t}^{T_{h}}\\
		=&\begin{pmatrix}
			S_{0}^{(H)}+v_{n+1}^{1+q}&S_{q}^{(H)}&\cdots&S_{(t-1)q}^{(H)}&S_{tq}^{(H)}+v_{n+1}^{1+q}\eta^{q}&S_{(t+1)q}^{(H)}&\cdots&S_{(k-1)q}^{(H)}\\
			S_{1}^{(H)}&S_{1+q}^{(H)}&\cdots&S_{1+(t-1)q}^{(H)}&S_{1+tq}^{(H)}&S_{1+(t+1)q}^{(H)}&\cdots&S_{1+(k-1)q}^{(H)}\\
			\vdots&\vdots& &\vdots&\vdots\\
			S_{t-1}^{(H)}&S_{t-1+q}^{(H)}&\cdots&S_{t-1+(t-1)q}^{(H)}&S_{t-1+tq}^{(H)}&S_{t-1+(t+1)q}^{(H)}&\cdots&S_{t-1+(k-1)q}^{(H)}\\
			S_{t}^{(H)}+v_{n+1}^{1+q}\eta&S_{t+q}^{(H)}&\cdots&S_{t+(t-1)q}^{(H)}&S_{t+tq}^{(H)}+v_{n+1}^{1+q}\eta^{1+q}&S_{t+(t+1)q}^{(H)}&\cdots&S_{t+(k-1)q}^{(H)}\\
			S_{t+1}^{(H)}&S_{t+1+q}^{(H)}&\cdots&S_{t+1+(t-1)q}^{(H)}&S_{t+1+tq}^{(H)}&S_{t+1+(t+1)q}^{(H)}&\cdots&S_{t+1+(k-1)q}^{(H)}\\
			\vdots&\vdots& &\vdots&\vdots\\
			S_{k-1}^{(H)}&S_{k-1+q}^{(H)}&\cdots&S_{k-1+(t-1)q}^{(H)}&S_{k-1+tq}^{(H)}&S_{k-1+(t+1)q}^{(H)}&\cdots&S_{k-1+(k-1)q}^{(H)}\\
		\end{pmatrix}
	\end{aligned}
	$$
	where $S_{t}^{(H)}=\sum\limits_{i=1}^{n}v_{i}^{1+q}\alpha_{i}^{t}$.
	
	By Lemma \ref{ESPneq0}, we know that the $m$-subset $I_{m}$ of $\mathbb{F}_{q}^{*}$ satisfying $S_{i}\left(I_{m}\right)\neq 0(i=1,\ldots,k-\delta)$ always exits. And so, we can set $n=m(q+1)$ and the $m$-subset $I_{m}=\left\{c_{1},c_{2},\ldots,c_{m}\right\}\subseteq\mathbb{F}_{q}^{*}$ such that
	$$S_{i}\left(I_{m}\right)\neq 0(i=1,\ldots,k-\delta).$$ 
	
	We define $\beta_{j}\in\mathbb{F}_{q^{2}}^{*}$ with $c_{j}=\beta_{j}^{q+1}\in\mathbb{F}_{q}^{*}$, and  $$\boldsymbol{\alpha}=\left(\alpha_{1},\ldots,\alpha_{m(q+1)}\right)=\bigcup\limits_{j=1}^{m}\beta_{j}\cdot U_{q+1}=\bigcup\limits_{j=1}^{m}C_{j},$$
	where $U_{q+1}$ is the subgroup of $\mathbb{F}_{q^{2}}^{*}$ with order $q+1$. 
	
	By the permutation equivalence of linear codes, without loss of generality, we can assume 
	$$
	\boldsymbol{\alpha }=\bigl( \,\underset{C1}{\underbrace{\alpha _1,...,\alpha _{q+1}}},\,\underset{C_2}{\underbrace{\alpha _{q+2},...,\alpha _{2\left( q+1 \right)}}},\,...,\,\underset{C_m}{\underbrace{\alpha _{\left( m-1 \right) \left( q+1 \right) +1},...,\alpha _{m\left( q+1 \right)}}}\, \bigr) .
	$$
	And for any given $1\leq i\leq m$ and $1\leq \delta \leq \min\left\{k-1,m\right\}$, we set
	$$v_{(i-1)(q+1)+s}^{1+q}=u_{j}c_{j}^{m-\delta }\in\mathbb{F}_{q}^{*}(s=1,2,\ldots,q+1),$$
	where $u_{j}=\prod\limits_{r=1, j \neq r}^{m}\left(c_{j}-c_{r}\right)^{-1}$.
	
	And so
	$$
	\begin{aligned}
		S_{r+sq}^{(H)}=&\sum\limits_{i=1}^{n}v_{i}^{1+q}\alpha_{i}^{r+sq}=\sum\limits_{i=1}^{m(q+1)}v_{i}^{1+q}\alpha_{i}^{r+sq}\\
		=&\sum\limits_{i=1}^{q+1}v_{i}^{1+q}\alpha_{i}^{r+sq}+\sum\limits_{i=q+2}^{2(q+1)}v_{i}^{1+q}\alpha_{i}^{r+sq}+\cdots+\sum\limits_{i=1}^{(m-1)(q+1)+1}v_{i}^{m(q+1)}\alpha_{i}^{r+sq}\\
		=&\sum\limits_{i=1}^{q+1}u_{1}c_{1}^{m-\delta }\alpha_{i}^{r+sq}+\sum\limits_{i=q+2}^{2(q+1)}u_{2}c_{2}^{m-\delta }\alpha_{i}^{r+sq}+\cdots+\sum\limits_{i=1}^{(m-1)(q+1)+1}u_{m}c_{m}^{m-\delta }\alpha_{i}^{r+sq}\\
		=&\sum\limits_{j=1}^{m}\left(u_{j}c_{j}^{m-\delta }\sum\limits_{\alpha\in C_{j}}\alpha^{r+sq}\right).
	\end{aligned}
	$$
	Note that $C_{j}=\beta_{j}\cdot U_{q+1}$ and  $c_{j}=\beta_{j}^{q+1}$, then $\alpha^{q+1}=c_{j}$ for $\alpha\in C_{j}$, furthermore,
	$$S_{r+sq}^{(H)}=\sum\limits_{j=1}^{m}\left(u_{j}c_{j}^{m-\delta }\sum\limits_{\alpha\in C_{j}}\alpha^{r+sq}\right)=\sum\limits_{j=1}^{m}\left(u_{j}c_{j}^{m-\delta +s}\sum\limits_{\alpha\in C_{j}}\alpha^{r-s}\right).$$
	Now from that $C_j$ is a coset of $U _{q+1}$, thus for any $\mu \in \mathbb{N}$, 
	$$
	\sum _{\alpha \in C_j} \alpha ^{\mu} = \begin{cases}
		0,               & \text{ if } q+1 \nmid \mu; \\
		(q+1)\beta _{j} ^{\mu}, & \text{ if } q+1 \mid \mu.
	\end{cases}
	$$
	Furthermore, for $0\leq r\neq s\leq k-1$, by $\delta\leq m$, we have $k\leq m+k-\delta$, thus
	$$|r-s|\leq k-1\leq m+k-\delta-1\leq q-2-(k-\delta)+k-\delta\leq q-2<q+1,$$ 
	and so $q+1|r-s$ if and only $r =s$. Hence, $S _{r+qs} ^{(H)}=0$ for $0\leq r\neq s\leq k-1$.
	
	Next, for  $0\leq r=s\leq k-1$, we have $S_{s(1+q)}^{(H)}=(q+1)\sum\limits_{j=1}^{m}u_{j}c_{j}^{m-\delta +s}.$ Now, we divide the following two cases.
	
	{\bf Case 1.} If $\delta=1$, then $$S_{s(1+q)}^{(H)}=(q+1)\sum\limits_{j=1}^{m}u_{j}c_{j}^{m-1+s}=(q+1)S_{s}(c_{1},\cdots,c_{m})\triangleq (q+1)S_{s}\left(0<r=s\leq k-1\right).$$
	
	And so
	\begin{equation}\label{GGH1}
		\begin{aligned}
			&\boldsymbol{G}_{\boldsymbol{v},k,t}\boldsymbol{G}_{\boldsymbol{v},k,t}^{T_{h}}\\
			=&\begin{pmatrix}
				v_{n+1}^{1+q}+S_{0}^{(q+1)}&0&\cdots&0&v_{n+1}^{1+q}\eta^{q}&0&\cdots&0\\
				0&S_{1}^{(q+1)}&\cdots&0&0&0&\cdots&0\\
				\vdots&\vdots& &\vdots&\vdots&\vdots& &\vdots\\
				0&0&\cdots&S_{t-1}^{(q+1)}&0&0&\cdots&0\\
				v_{n+1}^{1+q}\eta&0&\cdots&0&v_{n+1}^{1+q}\eta^{1+q}+S_{t}^{(q+1)}&0&\cdots&0\\
				0&0&\cdots&0&0&S_{t+1}^{(q+1)}&\cdots&0\\
				\vdots&\vdots& &\vdots&\vdots&\vdots& &\vdots\\  
				0&0&\cdots&0&0&0&\cdots&S_{k-1}^{(q+1)}\\
			\end{pmatrix},
		\end{aligned}
	\end{equation}
	where $S_{i}^{(q+1)}=(q+1)S_{i}$.
	Thus $$\rank\left(\boldsymbol{G}_{\boldsymbol{v},k,t}\boldsymbol{G}_{\boldsymbol{v},k,t}^{T_{h}}\right)=k-2+\rank\begin{pmatrix}
		v_{n+1}^{1+q}+S_{0}^{(q+1)}&v_{n+1}^{1+q}\eta^{q}\\
		v_{n+1}^{1+q}\eta&v_{n+1}^{1+q}\eta^{q}+S_{t}^{(q+1)}\\
	\end{pmatrix}=k-1\ \text{or}\ k,$$
	i.e.,
	$$
	\dim\left(\ehull\left(\mathrm{MGRS}_{k+1,k}\left(\boldsymbol{\alpha},\boldsymbol{v},\eta,t\right)\right)\right)=0\ \text{or}\ 1.
	$$
	
	{\bf Case 2} If $\delta \geq 2$, then for $S_{s(1+q)}^{(H)}$, by Lemma \ref{usapowersum}, we know that if $m-\delta +s\leq m-2$, i.e., $s\leq \delta -2$, then $$S_{s(1+q)}^{(H)}=\sum\limits_{j=1}^{m}u_{j}c_{j}^{m-\delta +s}=0$$ for $0\leq r=s\leq \delta -2$; if $m-\delta +s>m-2$, i.e., $s=\delta -2+\ell(1\leq\ell\leq k-\delta +1)$, then 
	for $\delta -2<r=s\leq k-1$, we have
	$$
	\begin{aligned}
		S_{s(1+q)}^{(H)}=&(q+1)\sum\limits_{j=1}^{m}u_{j}c_{j}^{m-\delta+s}\\
		=&(q+1)\sum\limits_{j=1}^{m}u_{j}c_{j}^{m-2+\ell}\\
		=&(q+1)S_{\ell-1}(c_{1},\cdots,c_{m})\\
		\triangleq& (q+1)S_{\ell-1}.
	\end{aligned}
	$$
	
	And so
	$$
	\begin{aligned}
		\boldsymbol{G}_{\boldsymbol{v},k,t}\boldsymbol{G}_{\boldsymbol{v},k,t}^{T_{h}}=&\begin{pmatrix}
			S_{0}^{(H)}+v_{n+1}^{1+q}&0&\cdots&0&v_{n+1}^{1+q}\eta^q&0&\cdots&0\\
			0&S_{1+q}^{(H)}&\cdots&0&0&0&\cdots&0\\
			\vdots&\vdots& &\vdots&\vdots&\vdots& &\vdots\\
			0&0&\cdots&S_{(t-1)(1+q)}^{(H)}&0&0&\cdots&0\\
			v_{n+1}^{1+q}\eta&0&\cdots&0&S_{t(1+q)}^{(H)}+v_{n+1}^{1+q}\eta^{1+q}&0&\cdots&0\\
			0&0&\cdots&0&0&S_{(t+1)(1+q)}^{(H)}&\cdots&0\\
			\vdots&\vdots& &\vdots&\vdots&\vdots& &\vdots\\ 
			0&0&\cdots&0&0&0&\cdots&S_{(k-1)(1+q)}^{(H)}
		\end{pmatrix}\\
		=&(q+1)\begin{pmatrix}
			0&0&\cdots&0&0&0&\cdots&0\\
			0&0&\cdots&0&0&0&\cdots&0\\
			\vdots&\vdots& &\vdots&\vdots&\vdots& &\vdots\\
			0&0&\cdots&0&0&0&\cdots&0\\
			0&0&\cdots&0&S_{0}&0&\cdots&0\\
			0&0&\cdots&0&0&S_{1}&\cdots&0\\
			\vdots&\vdots& &\vdots&\vdots&\vdots& &\\ 
			0&0&\cdots&0&0&0&\cdots&S_{k-\delta}
		\end{pmatrix}+\begin{pmatrix}
			v_{n+1}^{1+q}&0&\cdots&0&v_{n+1}^{1+q}\eta^q&0&\cdots&0\\
			0&0&\cdots&0&0&0&\cdots&0\\
			\vdots&\vdots& &\vdots&\vdots&\vdots& &\vdots\\
			0&0&\cdots&0&0&0&\cdots&0\\
			v_{n+1}^{1+q}\eta&0&\cdots&0&v_{n+1}^{1+q}\eta^{1+q}&0&\cdots&0\\
			0&0&\cdots&0&0&0&\cdots&0\\
			\vdots&\vdots& &\vdots&\vdots&\vdots& &\vdots\\
			0&0&\cdots&0&0&0&\cdots&0\\
			0&0&\cdots&0&0&0&\cdots&0\\
			0&0&\cdots&0&0&0&\cdots&0
		\end{pmatrix}
	\end{aligned}
	$$
	
	Next, for the case $\delta \geq 2$, according to the value of $t$, we divide it into the following two cases. 
	
	{\bf Case 2.1.} If $\delta \geq 3$ and $1\leq t\leq \delta-2$, then 
	$$ 
		\boldsymbol{G}_{\boldsymbol{v},k,t}\boldsymbol{G}_{\boldsymbol{v},k,t}^{T_{h}}=\begin{pmatrix}
			v_{n+1}^{1+q}&0&\cdots&0&v_{n+1}^{1+q}\eta^{q}&0&\cdots&0&0&\cdots&0\\
			0&0&\cdots&0&0&0&\cdots&0&0&\cdots&0\\
			\vdots&\vdots& &\vdots&\vdots&\vdots& &\vdots&\vdots& &\vdots\\
			0&0&\cdots&0&0&0&\cdots&0&0&\cdots&0\\
			v_{n+1}^{1+q}\eta&0&\cdots&0&v_{n+1}^{1+q}\eta^{1+q}&0&\cdots&0&0&\cdots&0\\
			0&0&\cdots&0&0&0&\cdots&0&0&\cdots&0\\
			\vdots&\vdots& &\vdots&\vdots&\vdots& &\vdots&\vdots& &\vdots\\
			0&0&\cdots&0&0&0&\cdots&0&0&\cdots&0\\ 
			0&0&\cdots&0&0&0&\cdots&0&S_{0}^{(q+1)}&\cdots&0\\ 
			\vdots&\vdots& &\vdots&\vdots&\vdots& &\vdots&\vdots& &\vdots\\
			0&0&\cdots&0&0&0&\cdots&0&0&\cdots&S_{k-\delta}^{(q+1)}\\
		\end{pmatrix}
$$
Through elementary row operations, we have
$$
\rank\left(\boldsymbol{G}_{\boldsymbol{v},k,t}\boldsymbol{G}_{\boldsymbol{v},k,t}^{T_{h}}\right)=\rank\begin{pmatrix}
			v_{n+1}^{1+q}&0&0&\cdots&0&0&\cdots&0\\
			0&0&0&\cdots&0&0&\cdots&0\\
			0&0&0&\cdots&0&0&\cdots&0\\ 
			\vdots&\vdots&\vdots& &\vdots&\vdots& &\vdots\\
			0&0&0&\cdots&0&0&\cdots&0\\ 
			0&0&0&\cdots&0&S_{0}^{(q+1)}&\cdots&0\\ 
			\vdots&\vdots&\vdots& &\vdots&\vdots& &\vdots\\
			0&0&0&\cdots&0&0&\cdots&S_{k-\delta}^{(q+1)}\\
		\end{pmatrix}, 
	$$
	where $S_{i}^{(q+1)}=(q+1)S_{i}$.
	And so
	$$
	\dim\left(\ehull\left(\mathrm{MGRS}_{k+1,k}\left(\boldsymbol{\alpha},\boldsymbol{v},\eta,t\right)\right)\right)=k-(k-\delta+2)=\delta-2.
	$$
	
	{\bf Case 2.2.} If $\delta \geq 2$ and  $t=\delta-2+i(1\leq i\leq k-\delta+1)$, then 
	$$
\boldsymbol{G}_{\boldsymbol{v},k,t}\boldsymbol{G}_{\boldsymbol{v},k,t}^{T_{h}}=\begin{pmatrix}
			v_{n+1}^{1+q}&0&\cdots&0&0&\cdots&0&v_{n+1}^{1+q}\eta^{q}&0&\cdots&0\\
			0&0&\cdots&0&0&\cdots&0&0&0&\cdots&0\\
			\vdots&\vdots& &\vdots&\vdots& &\vdots&\vdots&\vdots& &\vdots\\
			0&0&\cdots&0&0&\cdots&0&0&0&\cdots&0\\
			0&0&\cdots&0&S_{0}^{(q+1)}&\cdots&0&0&0&\cdots&0\\
			\vdots&\vdots& &\vdots&\vdots& &\vdots&\vdots&\vdots& &\vdots\\
			0&0&\cdots&0&0&\cdots&S_{\delta-2+i-2}^{(q+1)}&0&0&\cdots&0\\ 
			v_{n+1}^{1+q}\eta&0&\cdots&0&0&\cdots&0&v_{n+1}^{1+q}\eta^{1+q}+S_{\delta-2+i-1}^{(q+1)}&0&\cdots&0\\
			0&0&\cdots&0&0&\cdots&0&0&S_{\delta-2+i}^{(q+1)}&\cdots&0\\ 
			\vdots&\vdots& &\vdots&\vdots& &\vdots&\vdots&\vdots& &\vdots\\
			0&0&\cdots&0&0&\cdots&0&0&0&\cdots&S_{k-\delta}^{(q+1)}\\
		\end{pmatrix}$$
Through elementary row operations, we have
$$
\rank\left(\boldsymbol{G}_{\boldsymbol{v},k,t}\boldsymbol{G}_{\boldsymbol{v},k,t}^{T_{h}}\right)=\rank\begin{pmatrix}
			v_{n+1}^{1+q}&0&0&\cdots&0&0&\cdots&0\\
			0&0&0&\cdots&0&0&\cdots&0\\
			0&0&0&\cdots&0&0&\cdots&0\\ 
			\vdots&\vdots&\vdots& &\vdots&\vdots& &\vdots\\
			0&0&0&\cdots&0&0&\cdots&0\\ 
			0&0&0&\cdots&0&S_{0}^{(q+1)}&\cdots&0\\ 
			\vdots&\vdots&\vdots& &\vdots&\vdots& &\vdots\\
			0&0&0&\cdots&0&0&\cdots&S_{k-\delta}^{(q+1)}\\
		\end{pmatrix}, 
	$$
	where $S_{i}^{(q+1)}=(q+1)S_{i}$.
	And so
	$$
	\dim\left(\ehull\left(\mathrm{MGRS}_{k+1,k}\left(\boldsymbol{\alpha},\boldsymbol{v},\eta,t\right)\right)\right)=k-(k-\delta+2)=\delta-2.
	$$ 
	
	This completes the proof of Theorem \ref{MGRSHhull}.
	
	$\hfill\Box$
	
	\section{The non-equivalence between MGRS codes and Elliptic Curve codes}\label{sec7} 
	
	In this section, by the Schur product, we prove that any NMDS MGRS code is not linearly equivalent to any  linear code $\mathcal{C}(P, G, E)$ of the elliptic-curve type from an
	elliptic curve $E$, where $P=\left\{P_{1},\ldots,P_{n}\right\},$ $G = kP_{0}.$
	
	Firstly, we recall the definition of the Schur product and a key lemma.
	\begin{definition}\label{schurproduct}{\rm(\cite{ZhouHYNMDS2025}, Definitions II.1-2)}
		For $\boldsymbol{x} = (x_1, \ldots, x_n)$ and $\boldsymbol{y} = (y_1, \ldots, y_n) \in \mathbb{F}_q^n$, the Schur product between $\boldsymbol{x}$ and $\boldsymbol{y}$ is defined as
		\[
		\boldsymbol{x} \star \boldsymbol{y} := (x_1 y_1, \ldots, x_n y_n).
		\]
		The Schur product of two $q$-ary codes $\mathcal{C}_1$ and $\mathcal{C}_2$ with length $n$ is defined as
		\[
		\mathcal{C}_1 \star \mathcal{C}_2 = \langle \boldsymbol{c}_1 \star \boldsymbol{c}_2 \mid \boldsymbol{c}_1 \in \mathcal{C}_1, \boldsymbol{c}_2 \in \mathcal{C}_2 \rangle.
		\]
		Especially, for any code $\mathcal{C}$, $\mathcal{C}^2\triangleq\mathcal{C} \star \mathcal{C}$ is called the Schur square of $\mathcal{C}$.	
	\end{definition}
	\begin{lemma}\label{noneqAG}{\rm(\cite{ChennonRSAG}, Theorem 41)}
		The dimension of the Schur square of a one-point AG code $\mathcal{C}(P, G, E)$ from an elliptic
		curve $E$ is $2k$ when $6\le 2k\le n.$ 
	\end{lemma}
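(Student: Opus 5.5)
The plan is to identify $\mathcal{C}(P,G,E)^{2}$ with the evaluation code of the Riemann--Roch space $\mathcal{L}(2kP_0)$ and then compute its dimension with the Riemann--Roch theorem on the genus-one curve $E$. Write $\mathcal{C}=\mathcal{C}(P,kP_0,E)=\{(f(P_1),\ldots,f(P_n)) : f\in\mathcal{L}(kP_0)\}$, with $P_0\notin\{P_1,\ldots,P_n\}$. Since $g(E)=1$ and $k\geq 3$, Riemann--Roch gives $\dim\mathcal{L}(kP_0)=k$ and $\dim\mathcal{L}(2kP_0)=2k$. By Definition \ref{schurproduct}, $\mathcal{C}^2$ is spanned by the evaluations of the products $fg$ with $f,g\in\mathcal{L}(kP_0)$. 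Each such product lies in $\mathcal{L}(2kP_0)$, so $\mathcal{C}^2\subseteq \mathrm{ev}_P(\mathcal{L}(2kP_0))$.

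\textbf{Step 1 (multiplication is surjective).} I will show that $\mathcal{L}(kP_0)\cdot\mathcal{L}(kP_0)$ spans $\mathcal{L}(2kP_0)$. Take a Weierstrass model with $x,y$ having poles of order $2$ and $3$ at $P_0$. Then $\mathcal{L}(kP_0)$ has a basis of monomials $x^iy^j$ with $j\in\{0,1\}$ and $2i+3j\leq k$, and their pole orders run over exactly $\{0,2,3,\ldots,k\}$. Adding two elements of $\{0,2,3,\ldots,k\}$ produces every value in $\{0,2,3,\ldots,2k\}$, using $k\geq 3$: for instance $2k-1=k+(k-1)$, and $1$ is not needed. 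This set has $2k$ elements, and functions with distinct pole orders at $P_0$ are linearly independent. Hence the products span a $2k$-dimensional subspace of $\mathcal{L}(2kP_0)$, which therefore equals $\mathcal{L}(2kP_0)$.

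\textbf{Step 2 (injectivity of evaluation).} The kernel of $\mathrm{ev}_P$ on $\mathcal{L}(2kP_0)$ is $\mathcal{L}(2kP_0-D)$ with $D=P_1+\cdots+P_n$. If $2k<n$, this divisor has negative degree, so the kernel is zero and $\dim\mathcal{C}^2=2k$. The same argument applied to $\mathcal{L}(kP_0)$ confirms that $\dim\mathcal{C}=k$.

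\textbf{Main obstacle: the boundary case $2k=n$.} Here $\deg(2kP_0-D)=0$, so the kernel is nonzero exactly when $D\sim 2kP_0$, i.e.\ when $\sum P_i=O$ in the group law with $P_0=O$. The plan is to check which hypothesis of \cite{ChennonRSAG} excludes this configuration, namely that $D\not\sim nP_0$, which is implicit in their setup. If that hypothesis is absent, the conclusion must be weakened to $\dim\mathcal{C}^2\geq 2k-1$ in this one case.
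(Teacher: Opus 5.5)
The paper gives no proof of this lemma; it is quoted from \cite{ChennonRSAG} without argument. Your proof is correct, and it is the standard argument for statements of this kind: $\mathcal{C}^2=\mathrm{ev}_P\bigl(\mathcal{L}(kP_0)\cdot\mathcal{L}(kP_0)\bigr)$, these products span $\mathcal{L}(2kP_0)$ because $\deg(kP_0)\ge 2g+1=3$, and Riemann--Roch together with the kernel $\mathcal{L}(2kP_0-D)$ gives the dimension. The general theory would cite Mumford's normal-generation result for the surjectivity. Your count of pole orders using the Weierstrass gap sequence at $P_0$ replaces it with a self-contained argument that suffices in genus one. The sum that genuinely needs $k\ge 3$ is $k+1=(k-1)+2$, since $1$ is a gap.

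Your concern about $2k=n$ is justified. It is a defect of the statement as quoted, not a gap in your proof, so no argument can close it. Here is a counterexample. The curve $E: y^2=x^3+2$ over $\mathbb{F}_7$ is nonsingular and has $9$ rational points; the eight affine ones have $x\in\{0,3,5,6\}$. Take $P_0=O$, let $P_1,\ldots,P_8$ be the affine points, and set $k=4$, so that $6\le 2k=n=8$. The function $h=x(x-3)(x-5)(x-6)$ is the product of $x(x-3)$ and $(x-5)(x-6)$, both in $\mathcal{L}(4O)$, and its divisor is $(h)=D-8O$. Hence $\dim\mathcal{C}^2=8-\ell(8O-D)=7$.

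More generally, when $2k=n$ and $D\sim nP_0$, the kernel $\mathcal{L}(2kP_0-D)$ is exactly one-dimensional. So the dimension is exactly $2k-1$, not merely at least $2k-1$ as you wrote. The lemma therefore needs either $6\le 2k\le n-1$, or $2k=n$ together with $P_1\oplus\cdots\oplus P_n\neq P_0$ in the group law with origin $P_0$. Under either hypothesis your proof is complete.
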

	
	Next, basing on Definition \ref{schurproduct} and Lemma \ref{noneqAG}, we can obtain the following
	\begin{theorem}\label{MGRSAG}
		When $6\le 2k\le n,$ any MGRS code is not linearly equivalent to
		any elliptic curve code $\mathcal{C}(P, G, E)$.
	\end{theorem}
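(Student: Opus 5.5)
The plan is to use the Schur square dimension as a linear-equivalence invariant. If two codes are linearly (monomially) equivalent, their Schur squares are equivalent too, so they have the same dimension. Indeed, if $\mathcal{C}'=\mathcal{C}\boldsymbol{P}\boldsymbol{D}$ with $\boldsymbol{P}$ a permutation matrix and $\boldsymbol{D}$ an invertible diagonal matrix, then $\mathcal{C}'^2=\mathcal{C}^2\boldsymbol{P}\boldsymbol{D}^2$. By Lemma \ref{noneqAG}, when $6\le 2k\le n$ any elliptic curve code $\mathcal{C}(P,G,E)$ of dimension $k$ has $\dim\left(\mathcal{C}(P,G,E)^2\right)=2k$. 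It therefore suffices to show that for $\mathcal{C}=\mathrm{MGRS}_{n,k}\left(\boldsymbol{\alpha},\boldsymbol{v},\eta,t\right)$ we have $\dim\left(\mathcal{C}^2\right)\neq 2k$. The aim is the bound $\dim\left(\mathcal{C}^2\right)\le 2k-1$.

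The key steps are as follows.

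First, by the monomial equivalence noted above, we may take $\boldsymbol{v}=\boldsymbol{1}$. The code is then spanned by the rows of the matrix in Remark \ref{MGRSgeneratormatrix}. These are
$$\boldsymbol{g}_i=(\alpha_1^i,\ldots,\alpha_{n-1}^i,\delta_{i,0}+\eta\,\delta_{i,t}),\qquad 0\le i\le k-1,$$
with the convention $\alpha_j^0=1$.

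Next, $\mathcal{C}^2$ is spanned by the products $\boldsymbol{g}_i\star\boldsymbol{g}_j$ with $0\le i\le j\le k-1$. The first $n-1$ coordinates of $\boldsymbol{g}_i\star\boldsymbol{g}_j$ are $\alpha_\ell^{i+j}$. The last coordinate is $c_{ij}=(\delta_{i,0}+\eta\delta_{i,t})(\delta_{j,0}+\eta\delta_{j,t})$, which is nonzero only for $(i,j)\in\{(0,0),(0,t),(t,t)\}$.

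Set $\boldsymbol{h}_s=(\alpha_1^s,\ldots,\alpha_{n-1}^s,0)$ for $0\le s\le 2k-2$, and let $\boldsymbol{e}_n$ be the last standard unit vector. Every product lies in $\langle \boldsymbol{h}_0,\ldots,\boldsymbol{h}_{2k-2},\boldsymbol{e}_n\rangle$, whose dimension is at most $2k$. This crude bound alone does not suffice.

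The refinement is to count more carefully. For each exponent $s\neq 0,t,2t$, the product $\boldsymbol{g}_i\star\boldsymbol{g}_j$ with $i+j=s$ equals $\boldsymbol{h}_s$ exactly. For $s=0$ there is only $\boldsymbol{g}_0\star\boldsymbol{g}_0=\boldsymbol{h}_0+\boldsymbol{e}_n$. For $s=t$, we have $\boldsymbol{g}_0\star\boldsymbol{g}_t=\boldsymbol{h}_t+\eta\boldsymbol{e}_n$, and, when $t\ge 2$, also $\boldsymbol{g}_1\star\boldsymbol{g}_{t-1}=\boldsymbol{h}_t$. For $s=2t$, we have $\boldsymbol{g}_t\star\boldsymbol{g}_t=\boldsymbol{h}_{2t}+\eta^2\boldsymbol{e}_n$. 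Since $2k\ge 6$ gives $k\ge 3$, for $2t\le 2k-2$ there is a second pair $(t-1,t+1)$ or $(t+1,t-1)$ with both indices in $[0,k-1]$, which yields $\boldsymbol{h}_{2t}$ itself.

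The only remaining delicate cases are the ones where $\boldsymbol{h}_0$ cannot be separated from $\boldsymbol{e}_n$, and $t=1$. I expect to show the spanning set is contained in $\langle \boldsymbol{h}_0+\boldsymbol{e}_n,\boldsymbol{h}_1,\ldots,\boldsymbol{h}_{2k-2},\boldsymbol{e}_n\rangle$ and then look for a relation. \emph{Main obstacle:} in general this honest count seems to give $2k$, not $2k-1$. The unmodified GRS part spans $2k-1$ vectors, and the extra coordinate can add one, so no generic drop occurs.

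Hence I expect the cleaner route to be a contrapositive. If $\dim\left(\mathcal{C}^2\right)=2k$, then $\mathcal{C}$ has the Schur-square growth of an elliptic code. I would compare further with invariants such as the dimension of $\mathcal{C}^2$ after puncturing the last coordinate. Puncturing $\mathcal{C}$ at position $n$ gives an $[n-1,k]$ GRS code whose square has dimension $2k-1$. Puncturing an elliptic code at a point does not generally give a GRS code: its square still has dimension $2k$, by Lemma \ref{noneqAG} applied with $n-1\ge 2k$. This would need $2k\le n-1$. I would check whether the stated hypothesis suffices, possibly handling $n=2k$ separately via the dual code. This puncturing comparison, together with verifying that the elliptic punctured code is still covered by Lemma \ref{noneqAG}, is where I expect the real work to lie.
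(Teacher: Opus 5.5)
Your direct count is correct. For $\eta\neq 0$ and $n\ge 2k$, the Schur square of $\mathcal{C}$ is exactly $\langle \boldsymbol{h}_0,\ldots,\boldsymbol{h}_{2k-2},\boldsymbol{e}_n\rangle$, which has dimension $2k$. To see that $\boldsymbol{e}_n$ is reached: for $t\ge 2$, $\eta\boldsymbol{e}_n=\boldsymbol{g}_0\star\boldsymbol{g}_t-\boldsymbol{g}_1\star\boldsymbol{g}_{t-1}$; for $t=1$, $\eta^2\boldsymbol{e}_n=\boldsymbol{g}_1\star\boldsymbol{g}_1-\boldsymbol{g}_0\star\boldsymbol{g}_2$. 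The remaining $\boldsymbol{h}_s$ then follow.

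This is the same value as for elliptic codes, so the Schur square of $\mathcal{C}$ itself cannot separate the two families. The paper's proof does exactly this computation and reaches $2k+1$ only by asserting $U\cap W=\{\boldsymbol{0}\}$. But $\boldsymbol{P}\boldsymbol{y}^{T}=\boldsymbol{0}$ forces only $a_1=0$, $b_t=a_2$, $b_{2t}=a_3$, and the last coordinate then gives merely $a_2\eta+a_3\eta^2=0$. Indeed $\boldsymbol{\beta}_{2k+1}-\eta\boldsymbol{\beta}_{2k}=\boldsymbol{h}_{2t}-\eta\boldsymbol{h}_t$ is a nonzero vector of $U\cap W$. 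So you were right that the target $\dim(\mathcal{C}^2)\le 2k-1$ is false. Separately, your pair $(t-1,t+1)$ does not exist when $t=k-1$, though this does not affect the count.

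The genuine gap is that the proposal stops before giving a proof. The puncturing idea works where it applies. An equivalence sends coordinate $n$ to some $P_j$. Puncturing gives, on one side, an $[n-1,k]$ GRS code whose square has dimension $2k-1$. On the other side it gives $\mathcal{C}(P\setminus\{P_j\},kP_0,E)$, still of dimension $k$, whose square has dimension $2k$ by Lemma \ref{noneqAG} once $2k\le n-1$. For $n\ge 2k+2$ this is even unconditional: $L(kP_0)^2=L(2kP_0)$ for $k\ge 3$, and no nonzero function in $L(2kP_0)$ has more than $2k$ zeros. At length exactly $2k$ the square drops to $2k-1$ when $\sum_{i\ne j}P_i\sim 2kP_0$, so the case $n=2k+1$ rests on the lemma as quoted. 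You should write this out, since it is a valid argument where the paper's is not.

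However, the case $n=2k$ is included in the statement, and your method cannot reach it. At length $2k-1$ both punctured squares are all of $\mathbb{F}_q^{2k-1}$. On the GRS side, polynomials of degree $\le 2k-2$ interpolate arbitrary values at $2k-1$ points. On the elliptic side, the kernel $L(2kP_0-\sum_{i\ne j}P_i)$ is one-dimensional, so the image has dimension $2k-1$. For $\eta\neq0$ the unpunctured squares both have dimension $2k=n$. Your remark about handling $n=2k$ via the dual code is not an argument. That case needs a genuinely different invariant, and as things stand it is established neither by your proposal nor by the paper.
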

	{\bf Proof}. By Definition \ref{MGRSdefinition} and the definition of the monomially equivalent, we only focus on $\mathrm{MGRS}_{n,k}\left(\boldsymbol{\alpha},\boldsymbol{1},\eta,t\right)$ which is generated by the following
	vectors
	$$\boldsymbol{\alpha}_{0}=\left(1,\ldots,1,1\right),\boldsymbol{\alpha}_{i}=\left(\alpha_{1}^{i},\ldots,\alpha_{n-1}^{i},0\right),\boldsymbol{\alpha}_{t}=\left(\alpha_{1}^{t},\ldots,\alpha_{n-1}^{t},\eta\right)(i\in\left\{1,\ldots,t-1,t+2,\ldots,k-1\right\}).$$
	Thus, by Definition \ref{schurproduct},  $\mathrm{MGRS}_{n,k}^{2}\left(\boldsymbol{\alpha},\boldsymbol{1},\eta,t\right)$ is generated by the following
	vectors
	$$\boldsymbol{\alpha}_{0}\star\boldsymbol{\alpha}_{0}=\left(1,\ldots,1,1\right),$$
	$$\boldsymbol{\alpha}_{0}\star\boldsymbol{\alpha}_{i}=\left(\alpha_{1}^{i},\ldots,\alpha_{n-1}^{i},0\right)(i\in\left\{1,\ldots,t-1,t+2,\ldots,k-1\right\}),$$
	$$\boldsymbol{\alpha}_{0}\star\boldsymbol{\alpha}_{t}=\left(\alpha_{1}^{t},\ldots,\alpha_{n-1}^{t},\eta\right),$$
	$$\boldsymbol{\alpha}_{i}\star\boldsymbol{\alpha}_{j}=\left(\alpha_{1}^{i+j},\ldots,\alpha_{n-1}^{i+j},0\right)(i,j\in\left\{1,\ldots,t-1,t+2,\ldots,k-1\right\}),$$
	$$\boldsymbol{\alpha}_{i}\star\boldsymbol{\alpha}_{t}=\left(\alpha_{1}^{i+t},\ldots,\alpha_{n-1}^{i+t},0\right)(i\in\left\{1,\ldots,t-1,t+2,\ldots,k-1\right\}),$$
	$$\boldsymbol{\alpha}_{t}\star\boldsymbol{\alpha}_{t}=\left(\alpha_{1}^{2t},\ldots,\alpha_{n-1}^{2t},\eta^2\right),$$
	i.e.,
	$$\mathrm{MGRS}_{n,k}^{2}\left(\boldsymbol{\alpha},\boldsymbol{1},\eta,t\right)=\left\langle\boldsymbol{\beta}_{1},\ldots,\boldsymbol{\beta}_{2k-2},\boldsymbol{\beta}_{2k-1},\boldsymbol{\beta}_{2k},\boldsymbol{\beta}_{2k+1}\right\rangle,$$
	where
	$$\boldsymbol{\beta}_{i}=\begin{cases}
		\left(\alpha_{1}^{i},\ldots,\alpha_{n-1}^{i},0\right),&\ \text{if}\ 1\leq i\leq 2k-2;\\
		\left(1,\ldots,1,1\right),&\ \text{if}\ i=2k-1;\\
		\left(\alpha_{1}^{t},\ldots,\alpha_{n-1}^{t},\eta\right),&\ \text{if}\ i=2k;\\
		\left(\alpha_{1}^{2t},\ldots,\alpha_{n-1}^{2t},\eta^2\right),&\ \text{if}\ i=2k+1,\\
	\end{cases}$$
	namely, 
	$$\mathrm{MGRS}_{n,k}^{2}\left(\boldsymbol{\alpha},\boldsymbol{1},\eta,t\right)=U+W,$$
	where
	$$U=\left\langle\boldsymbol{\beta}_{1},\ldots,\boldsymbol{\beta}_{2k-2}\right\rangle, W=\left\langle\boldsymbol{\beta}_{2k-1},\boldsymbol{\beta}_{2k},\boldsymbol{\beta}_{2k+1}\right\rangle.$$
	
	By $2k\leq n$ and $\alpha_{i}\neq\alpha_{j}(\i\neq j)$, we have $\dim\left(U\right)=2k-2$ and $\dim\left(W\right)=3$. Thus
	$$\begin{aligned}
		\dim\left(\mathrm{MGRS}_{n,k}^{2}\left(\boldsymbol{\alpha},\boldsymbol{1},\eta,t\right)\right)=&\dim\left(U+W\right)\\
		=&\dim\left(U\right)+\dim\left(W\right)-\dim\left(U\cap W\right)\\
		=&2k+1-\dim\left(U\cap W\right).
	\end{aligned}$$
	
	Next, we prove $\dim\left(U\cap W\right)=0,$ i.e., $U\cap W=\left\{\boldsymbol{0}\right\}.$
	
	For any $\boldsymbol{x}=\left(x_{1},\ldots,x_{n-1},x_{n}\right)\in U\cap W$. By $\boldsymbol{x}\in W$, we know that there exist scalars $a_1, a_2, a_3\in\mathbb{F}_{q}$ such that
	$$\boldsymbol{x}=a_{1}\boldsymbol{\beta}_{2k-1}+a_{2}\boldsymbol{\beta}_{2k}+a_{3}\boldsymbol{\beta}_{2k+1}.$$
	Furthermore, by $\boldsymbol{x}\in U$, we know that there exist $b_1, \ldots, b_{2k-2}\in\mathbb{F}_{q}$  such that
	$$\boldsymbol{x}=b_{1}\boldsymbol{\beta}_{1}+\ldots+b_{2k-2}\boldsymbol{\beta}_{2k-2}.$$
	And so, 
	$$a_{1}\boldsymbol{\beta}_{2k-1}+a_{2}\boldsymbol{\beta}_{2k}+a_{3}\boldsymbol{\beta}_{2k+1}=b_{1}\boldsymbol{\beta}_{1}+\ldots+b_{2k-2}\boldsymbol{\beta}_{2k-2},$$
	i.e.,
	$$
	\begin{aligned}
		&a_{1}\left(1,\ldots,1,1\right)+a_{2}\left(\alpha_{1}^{t},\ldots,\alpha_{n-1}^{t},\eta\right)+a_{3}\left(\alpha_{1}^{2t},\ldots,\alpha_{n-1}^{2t},\eta^{2}\right)\\
		=&b_{1}\left(\alpha_{1}^{1},\ldots,\alpha_{n-1}^{1},0\right)+\ldots+b_{2k-2}\left(\alpha_{1}^{2k-2},\ldots,\alpha_{n-1}^{2k-2},0\right).
	\end{aligned}
	$$
	By comparing the first $n-1$ components, we have the homogeneous linear system
	\begin{equation}\label{py}
		\boldsymbol{P}\boldsymbol{y}^{T}
		=\boldsymbol{0},
	\end{equation}
	where $\boldsymbol{y}=\left(	b_{2k-2},\ldots,b_{2t+1},b_{2t}-a_3,b_{2t-1},\ldots,b_{t+1},b_t-a_2,b_{t-1},\ldots,b_1,-a_1\right)$ and 
	$$\boldsymbol{P}=
	\begin{pmatrix}
		\alpha_1^{2k-2} & \dots & \alpha_1^{2t+1} & \alpha_1^{2t} & \alpha_1^{2t-1} & \dots & \alpha_1^{t+1} & \alpha_1^t & \alpha_1^{t-1} & \dots & \alpha_1^1 & 1\\
		\alpha_2^{2k-2} & \dots & \alpha_2^{2t+1} & \alpha_2^{2t} & \alpha_2^{2t-1} & \dots & \alpha_2^{t+1} & \alpha_2^t & \alpha_2^{t-1} & \dots & \alpha_2^1 & 1\\
		\vdots & \ddots & \vdots & \vdots & \vdots & \ddots & \vdots & \vdots & \vdots & \ddots & \vdots & \vdots\\
		\alpha_{n-1}^{2k-2} & \dots & \alpha_{n-1}^{2t+1} & \alpha_{n-1}^{2t} & \alpha_{n-1}^{2t-1} & \dots & \alpha_{n-1}^{t+1} & \alpha_{n-1}^t & \alpha_{n-1}^{t-1} & \dots & \alpha_{n-1}^1 & 1
	\end{pmatrix}.$$
	It is easy to know that the dimension of the solution space for this homogeneous linear system \eqref{py} is equal to $(2k-2)-(2k-2)=0$, i.e., $\boldsymbol{y}=\boldsymbol{0}$,  it means that $U\cap W=\left\{\boldsymbol{0}\right\},$ thus 
	$$\dim\left(\mathrm{MGRS}_{n,k}^{2}\left(\boldsymbol{\alpha},\boldsymbol{1},\eta,t\right)\right)=2k+1\neq 2k,$$  
	And so, by Lemma \ref{noneqAG}, this completes the proof of Theorem \ref{MGRSAG}.
	
	$\hfill\Box$  
	\section{Some examples}\label{sec8}
	
	In this section, we give some examples for Theorems  \ref{NMDSMGRSK-1}-\ref{NMDSMGRS1}, Theorems   \ref{MDSNMDSEMGRS2}-\ref{NMDSEMGRS1}, Theorem \ref{EHullMGRS1} and Theorem \ref{MGRSHhull}, where Example \ref{MGRSMDSNMDS} is for Theorems \ref{NMDSMGRSK-1}-\ref{NMDSMGRS1}, Example \ref{EMGRSMDSNMDS} is for Theorem \ref{NMDSEMGRS1}, 
	Example \ref{MDSNMDSEMGRS1example} is for Theorem \ref{MGRSk-1weight}, Example \ref{MDSNMDSEMGRS2example} is for Theorem \ref{EMGRSk-1weight}, Example \ref{EHullMGRS1example} is for Theorem \ref{EHullMGRS1}, Example \ref{MGRSHhullexample} is for Theorem \ref{MGRSHhull}, respectively.
	
	\subsection{Two examples for Theorems \ref{NMDSMGRSK-1}-\ref{NMDSMGRS1} and Theorem \ref{NMDSEMGRS1}}
	\begin{example}\label{MGRSMDSNMDS}
		Let $\left(q,k,n\right)=(27,5,8), \mathbb{F}_{27}^{*}=\langle\omega\rangle$ and
		$$H=\left\{\alpha_{1},\ldots,\alpha_{n-1}\right\}=\left\{\omega,\omega^2,\omega^3,\omega^4,\omega^5,\omega^6,\omega^7\right\}\subseteq\mathbb{F}_{27}^{*}.$$
		By directly calculating, we have
		$$\widetilde{H}_{1}=\left\{(-1)^{k}\prod\limits_{\alpha_{i}\in \mathcal{H}}\alpha_{i}:  \mathcal{H}\subseteq H, \#\mathcal{H}=k-1\right\}=\left\{1, \omega, \omega^2, \omega^3, \omega^4, \omega^5,\omega^6, \omega^7,  \omega^8, \omega^9,\omega^{23}, \omega^{24}, \omega^{25}\right\}$$
		and
		$$
		\widetilde{H}_{2}=\left\{\sum\limits_{\alpha_{i}\in \mathcal{H}}\alpha_{i}^{-1}:  \mathcal{H}\subseteq H, \#\mathcal{H}=k-1\right\}=\left\{\omega^{i}|0\leq i\leq 25\right\}\cup\left\{0\right\}.$$
		It is easy to know that $\widetilde{H}_{1}\neq \mathbb{F}_{27}^{*}$ and $\widetilde{H}_{2}=\mathbb{F}_{27}.$ Then, we have the following three statements. 
		
		$(1)$ By taking $\eta=\omega^{9}\in\widetilde{H}_{1}$, the corresponding MGRS code has the generator matrix
		$$\begin{pmatrix}
			1&1&1&1&1&1&1&1\\
			\omega&\omega^{2}&\omega^{3}&\omega^{4}&\omega^{5}&\omega^{6}&\omega^{7}&0\\
			\omega^{2}&\left(\omega^{2}\right)^{2}&\left(\omega^{3}\right)^{2}&\left(\omega^{4}\right)^{2}&\left(\omega^{5}\right)^{2}&\left(\omega^{6}\right)^{2}&\left(\omega^{7}\right)^{2}&0\\ 
			\omega^{3}&\left(\omega^{2}\right)^{3}&\left(\omega^{3}\right)^{3}&\left(\omega^{4}\right)^{3}&\left(\omega^{5}\right)^{3}&\left(\omega^{6}\right)^{3}&\left(\omega^{7}\right)^{3}&0\\
			\omega^{4}&\left(\omega^{2}\right)^{4}&\left(\omega^{3}\right)^{4}&\left(\omega^{4}\right)^{4}&\left(\omega^{5}\right)^{4}&\left(\omega^{6}\right)^{4}&\left(\omega^{7}\right)^{4}&\omega^{9}
		\end{pmatrix}.$$
		Based on the Magma program, the MGRS code is an NMDS code with the parameters $[8,5,3]_{3^3}$, which is consistent with Theorem \ref{NMDSMGRSK-1}.
		
		$(2)$ By taking $\eta=\omega^{10}\in \mathbb{F}_{27}^{*}\backslash\widetilde{H}_{1}$, the corresponding MGRS code has the generator matrix
		$$\begin{pmatrix}
			1&1&1&1&1&1&1&1\\
			\omega&\omega^{2}&\omega^{3}&\omega^{4}&\omega^{5}&\omega^{6}&\omega^{7}&0\\
			\omega^{2}&\left(\omega^{2}\right)^{2}&\left(\omega^{3}\right)^{2}&\left(\omega^{4}\right)^{2}&\left(\omega^{5}\right)^{2}&\left(\omega^{6}\right)^{2}&\left(\omega^{7}\right)^{2}&0\\ 
			\omega^{3}&\left(\omega^{2}\right)^{3}&\left(\omega^{3}\right)^{3}&\left(\omega^{4}\right)^{3}&\left(\omega^{5}\right)^{3}&\left(\omega^{6}\right)^{3}&\left(\omega^{7}\right)^{3}&0\\
			\omega^{4}&\left(\omega^{2}\right)^{4}&\left(\omega^{3}\right)^{4}&\left(\omega^{4}\right)^{4}&\left(\omega^{5}\right)^{4}&\left(\omega^{6}\right)^{4}&\left(\omega^{7}\right)^{4}&\omega^{10}
		\end{pmatrix}.$$
		Based on the Magma program, the MGRS code is an MDS code with the parameters $[8,5,4]_{3^3}$, which is consistent with Theorem \ref{NMDSMGRSK-1}.
		
		$(3)$ By taking $\eta^{-1}=\omega^{4}\in\widetilde{H}_{2}$, i.e., $\eta=\omega^{22}$, the corresponding MGRS code has the generator matrix
		$$\begin{pmatrix}
			1&1&1&1&1&1&1&1\\
			\omega&\omega^{2}&\omega^{3}&\omega^{4}&\omega^{5}&\omega^{6}&\omega^{7}&\omega^{22}\\
			\omega^{2}&\left(\omega^{2}\right)^{2}&\left(\omega^{3}\right)^{2}&\left(\omega^{4}\right)^{2}&\left(\omega^{5}\right)^{2}&\left(\omega^{6}\right)^{2}&\left(\omega^{7}\right)^{2}&0\\ 
			\omega^{3}&\left(\omega^{2}\right)^{3}&\left(\omega^{3}\right)^{3}&\left(\omega^{4}\right)^{3}&\left(\omega^{5}\right)^{3}&\left(\omega^{6}\right)^{3}&\left(\omega^{7}\right)^{3}&0\\
			\omega^{4}&\left(\omega^{2}\right)^{4}&\left(\omega^{3}\right)^{4}&\left(\omega^{4}\right)^{4}&\left(\omega^{5}\right)^{4}&\left(\omega^{6}\right)^{4}&\left(\omega^{7}\right)^{4}&0
		\end{pmatrix}.$$
		Based on the Magma program, the MGRS code is an NMDS code with the parameters $[8,5,3]_{3^3}$, which is consistent with Theorem \ref{NMDSMGRS1}. 
	\end{example}
	
	\begin{example}\label{EMGRSMDSNMDS}
		Let $\left(q,k,n\right)=(27,5,8), \mathbb{F}_{27}^{*}=\langle\omega\rangle$ and
		$$H=\left\{\alpha_{1},\ldots,\alpha_{n-1}\right\}=\left\{\omega,\omega^2,\omega^3,\omega^4,\omega^5,\omega^7,\omega^8\right\}\subseteq\mathbb{F}_{7}^{*}.$$
		By directly calculating, we have 
		$$	
		\begin{aligned}
			\widetilde{H}_{2} &= \left\{\sum_{\alpha_{i} \in \mathcal{H}} \alpha_{i}^{-1} : \mathcal{H} \subseteq H,\ |\mathcal{H}|=k-2\right\} \\
			&= \left\{1, \omega, \omega^2, \omega^3, \omega^4, \omega^5, \omega^6, \omega^8, \omega^9, \omega^{10}, \omega^{11}, \omega^{12}, 2, \omega^{15}, \omega^{16}, \omega^{17}, \omega^{19}, \omega^{20}, \omega^{21}, \omega^{23}, \omega^{24}, \omega^{25}, 0 \right\}, \\
		\end{aligned}
		$$
		and
		$$	
		\begin{aligned}
			\widetilde{H}_{3} &= \left\{\sum_{\alpha_{i} \in \mathcal{H}} \alpha_{i}^{-1} : \mathcal{H} \subseteq H,\ |\mathcal{H}|=k-1\right\} \\
			&= \left\{1, \omega, \omega^2, \omega^3, \omega^5, \omega^6, \omega^8, \omega^9, \omega^{10}, \omega^{11}, \omega^{12}, 2, \omega^{15}, \omega^{17}, \omega^{18}, \omega^{19}, \omega^{20}, \omega^{21}, \omega^{22}, \omega^{23}, \omega^{24}, \omega^{25}, 0\right\}.
		\end{aligned}
		$$
		It is easy to know that $\widetilde{H}_{4}=\widetilde{H}_{2}\cup\widetilde{H}_{3}\neq \mathbb{F}_{27}^{*}$ and $\mathbb{F}_{27}\backslash\widetilde{H}_{4}=\left\{\omega^{7},\omega^{14}\right\}.$ Then, we have the following two statements. 
		
		$(1)$ By taking $\eta^{-1}=\omega^{9}\in\widetilde{H}_{4}$, i.e., $\eta=\omega^{17}$, the corresponding MGRS code has the generator matrix
		$$\begin{pmatrix}
			1&1&1&1&1&1&1&1&0\\
			\omega&\omega^{2}&\omega^{3}&\omega^{4}&\omega^{5}&\omega^{6}&\omega^{7}&\omega^{17}&0\\
			\omega^{2}&\left(\omega^{2}\right)^{2}&\left(\omega^{3}\right)^{2}&\left(\omega^{4}\right)^{2}&\left(\omega^{5}\right)^{2}&\left(\omega^{6}\right)^{2}&\left(\omega^{7}\right)^{2}&0&0\\ 
			\omega^{3}&\left(\omega^{2}\right)^{3}&\left(\omega^{3}\right)^{3}&\left(\omega^{4}\right)^{3}&\left(\omega^{5}\right)^{3}&\left(\omega^{6}\right)^{3}&\left(\omega^{7}\right)^{3}&0&0\\
			\omega^{4}&\left(\omega^{2}\right)^{4}&\left(\omega^{3}\right)^{4}&\left(\omega^{4}\right)^{4}&\left(\omega^{5}\right)^{4}&\left(\omega^{6}\right)^{4}&\left(\omega^{7}\right)^{4}&0&1
		\end{pmatrix}.$$
		Based on the Magma program, the MGRS code is an NMDS code with the parameters $[9,5,4]_{3^3}$, which is consistent with Theorem \ref{NMDSEMGRS1}. 
		
		$(2)$ By taking $\eta^{-1}=\omega^{14}\in\mathbb{F}_{27}\backslash\widetilde{H}_{4}$, i.e., $\eta=\omega^{12}$, the corresponding MGRS code has the generator matrix
		$$\begin{pmatrix}
			1&1&1&1&1&1&1&1&0\\
			\omega&\omega^{2}&\omega^{3}&\omega^{4}&\omega^{5}&\omega^{6}&\omega^{7}&\omega^{12}&0\\
			\omega^{2}&\left(\omega^{2}\right)^{2}&\left(\omega^{3}\right)^{2}&\left(\omega^{4}\right)^{2}&\left(\omega^{5}\right)^{2}&\left(\omega^{6}\right)^{2}&\left(\omega^{7}\right)^{2}&0&0\\ 
			\omega^{3}&\left(\omega^{2}\right)^{3}&\left(\omega^{3}\right)^{3}&\left(\omega^{4}\right)^{3}&\left(\omega^{5}\right)^{3}&\left(\omega^{6}\right)^{3}&\left(\omega^{7}\right)^{3}&0&0\\
			\omega^{4}&\left(\omega^{2}\right)^{4}&\left(\omega^{3}\right)^{4}&\left(\omega^{4}\right)^{4}&\left(\omega^{5}\right)^{4}&\left(\omega^{6}\right)^{4}&\left(\omega^{7}\right)^{4}&0&1
		\end{pmatrix}.$$
		Based on the Magma program, the MGRS code is an MDS code with the parameters $[9,5,5]_{3^3}$, which is consistent with Theorem \ref{NMDSEMGRS1}. 
	\end{example}

	\subsection{Two examples for Theorem \ref{MGRSk-1weight} and Theorem \ref{EMGRSk-1weight}}
	\begin{example}\label{MDSNMDSEMGRS1example}
		Let $\left(q,k,n\right)=(7,5,7), \mathbb{F}_{7}^{*}=\langle\omega\rangle, \eta=\omega^{2}$ and $\alpha_{i}=\omega^{i}(i=1,2,\ldots,6)$, then 
		$$\begin{pmatrix}
			1&1&1&1&1&1&1\\
			\omega&\omega^{2}&\omega^{3}&\omega^{4}&\omega^{5}&\omega^{6}&0\\
			\omega^{2}&\left(\omega^{2}\right)^{2}&\left(\omega^{3}\right)^{2}&\left(\omega^{4}\right)^{2}&\left(\omega^{5}\right)^{2}&\left(\omega^{6}\right)^{2}&0\\ 
			\omega^{3}&\left(\omega^{2}\right)^{3}&\left(\omega^{3}\right)^{3}&\left(\omega^{4}\right)^{3}&\left(\omega^{5}\right)^{3}&\left(\omega^{6}\right)^{3}&0\\
			\omega^{4}&\left(\omega^{2}\right)^{4}&\left(\omega^{3}\right)^{4}&\left(\omega^{4}\right)^{4}&\left(\omega^{5}\right)^{4}&\left(\omega^{6}\right)^{4}&\omega^{2}
		\end{pmatrix}$$ 
		is the generator matrix of $\mathrm{MGRS}_{7,5}\left(\boldsymbol{\alpha},\boldsymbol{1},\omega^{2},4\right)$. Based on the Magma program, the  $\mathrm{MGRS}_{7,5}\left(\boldsymbol{\alpha},\boldsymbol{1},\omega^{2},4\right)$ is NMDS with the parameters $[7,5,2]$ and the weight enumerator 
		\begin{equation}\label{NMDSMGRSweight}
			A(x)=1+12x^2+150x^3+960x^4+3282x^{5}+6696x^{6}+5706x^{7}.
		\end{equation}
		By taking $\left(q,k,t\right)=(7,5,2)$ in Theorem \ref{MGRSk-1weight} and directly calculating, we have 
		$$
		\begin{aligned}
			A_{2}=&\sum_{r\mid \gcd(6,4)}(-1)^{4+\frac{4}{r}}\binom{\frac{6}{r}}{\frac{4}{r}}\sum_{d\mid (r,-13)} \mu\left(\frac{r}{d}\right)d \\
			=&(-1)^{4+\frac{4}{1}}\binom{\frac{6}{1}}{\frac{4}{1}}\sum_{d\mid (1,-13)} \mu\left(\frac{1}{d}\right)d +(-1)^{4+\frac{4}{2}}\binom{\frac{6}{2}}{\frac{4}{2}}\sum_{d\mid (2,-13)} \mu\left(\frac{2}{d}\right)d\\
			=&\binom{6}{4}\mu\left(1\right) +\binom{3}{2}\mu\left(2\right)\\
			=&15-3=12,
		\end{aligned}
		$$
		which is consistent with \eqref{NMDSMGRSweight} output by the Magma program.
	\end{example} 
	
	\begin{example}\label{MDSNMDSEMGRS2example}
		Let $\left(q,k,n\right)=(7,5,7), \mathbb{F}_{7}^{*}=\langle\omega\rangle, \eta=\omega^{2}$, and $\alpha_{i}=\omega^{i}(i=1,2,\ldots,6)$, then 
		$$\begin{pmatrix}
			1&1&1&1&1&1&1&0\\
			\omega&\omega^{2}&\omega^{3}&\omega^{4}&\omega^{5}&\omega^{6}&0&0\\
			\omega^{2}&\left(\omega^{2}\right)^{2}&\left(\omega^{3}\right)^{2}&\left(\omega^{4}\right)^{2}&\left(\omega^{5}\right)^{2}&\left(\omega^{6}\right)^{2}&0&0\\ 
			\omega^{3}&\left(\omega^{2}\right)^{3}&\left(\omega^{3}\right)^{3}&\left(\omega^{4}\right)^{3}&\left(\omega^{5}\right)^{3}&\left(\omega^{6}\right)^{3}&0&0\\
			\omega^{4}&\left(\omega^{2}\right)^{4}&\left(\omega^{3}\right)^{4}&\left(\omega^{4}\right)^{4}&\left(\omega^{5}\right)^{4}&\left(\omega^{6}\right)^{4}&\omega^{2}&1
		\end{pmatrix}$$ 
		is the generator matrix of the code $\mathrm{EMGRS}_{8,5}\left(\boldsymbol{\alpha},\boldsymbol{1},\omega^{2},4\right)$. Based on the Magma program, $\mathrm{EMGRS}_{8,5}\left(\boldsymbol{\alpha},\boldsymbol{1},\omega^{2},4\right)$ is NMDS with the parameters $[8,5,3]$ and the weight enumerator 
		\begin{equation}\label{NMDSEMGRSweight}
			A(x)=1+12x^3+360x^4+1128x^5+3912x^{6}+6492x^{7}+4902x^{8}.
		\end{equation}
		By taking $\left(q,k,t\right)=(7,5,2)$ in Theorem \ref{EMGRSk-1weight} and directly calculating, it's easy to know that the corresponding result is consistent with \eqref{NMDSEMGRSweight} output by the Magma program.
	\end{example}
	\subsection{An example for Section Theorem \ref{EHullMGRS1}}
	Owing to the limited space, we only present illustrative the example for Theorem \ref{EHullMGRS1} in this subsection. The examples for Theorems \ref{EHullMGRS2}–\ref{EHullMGRS4} can be similarly constructed and verified via the Magma program.
	\begin{example}\label{EHullMGRS1example}
		Let $p=3$, $q=3^4,\delta=2,\mathbb{F}_{q}^{*}=\langle\gamma\rangle.$
		
		$(1)$ By taking $k=4$, $t=2$ and $\eta=\gamma^{4}$. It's easy to get  $k=2t$, $p\nmid k+1$,
		$$\gamma^{\delta k}k+\eta^2=4\gamma^{8}+\gamma^8=5\gamma^8\in\mathbb{F}_{q}^{*},$$
		and
		$$\gamma^{\delta k}k+\frac{k\eta^2}{k+1}=4\gamma^{8}+\frac{4\gamma^8}{5}=4\gamma^{8}+\frac{4\gamma^8}{2}=6\gamma^{8}=0.$$ And then $\alpha_{i}=\gamma^{\frac{q-1}{k}i}=\gamma^{20i}$ and $$\boldsymbol{\alpha}=\left(\gamma^{20+2},\gamma^{40+2},\gamma^{60+2},\gamma^{80+2}\right).$$ The corresponding GRL code $\mathcal{C}_{1}$ has the following generator matrix
		$$\begin{pmatrix}
			1&1&1&1&1\\
			\gamma^{22}&\gamma^{42}&\gamma^{62}&\gamma^{82}&0\\ 
			\left(\gamma^{22}\right)^2&\left(\gamma^{42}\right)^2&\left(\gamma^{62}\right)^2&\left(\gamma^{82}\right)^2&\gamma^{4}\\
			\left(\gamma^{22}\right)^3&\left(\gamma^{42}\right)^3&\left(\gamma^{62}\right)^3&\left(\gamma^{82}\right)^3&0
		\end{pmatrix}.$$
		Based on the Magma program, $\mathcal{C}_{1}$ is a one-dimensional hull code with the parameters $[5,4,1]_{3^4}$, which is consistent with Theorem \ref{EHullMGRS1}.
		
		$(2)$ By taking $k=5$, $t=2$ and $\eta=\gamma^{2}$. It's easy to get $k\neq 2t$ and $p\mid k+1$. 
		And then $\alpha_{i}=\gamma^{\frac{q-1}{k}i}=\gamma^{16i}$ and $$\boldsymbol{\alpha}=\left(\gamma^{16+2},\gamma^{32+2},\gamma^{48+2},\gamma^{64+2},\gamma^{80+2}\right).$$ The corresponding GRL code $\mathcal{C}_{2}$ has the following generator matrix
		$$\begin{pmatrix}
			1&1&1&1&1&1\\
			\gamma^{18}&\gamma^{34}&\gamma^{50}&\gamma^{66}&\gamma^{82}&0\\ 
			\left(\gamma^{18}\right)^2&\left(\gamma^{34}\right)^2&\left(\gamma^{50}\right)^2&\left(\gamma^{66}\right)^2&\left(\gamma^{82}\right)^2&0\\
			\left(\gamma^{18}\right)^3&\left(\gamma^{34}\right)^3&\left(\gamma^{50}\right)^3&\left(\gamma^{66}\right)^3&\left(\gamma^{82}\right)^3&\gamma^{2}\\
			\left(\gamma^{18}\right)^4&\left(\gamma^{34}\right)^4&\left(\gamma^{50}\right)^4&\left(\gamma^{66}\right)^4&\left(\gamma^{82}\right)^4&0
		\end{pmatrix}.$$
		Based on the Magma program, $\mathcal{C}_{2}$ is a one-dimensional hull MDS code with the parameters $[6,5,2]_{3^4}$, which is consistent with Theorem \ref{EHullMGRS1}.
		
		$(3)$ By taking $k=8$, $t=4$ and $\eta=\gamma$. It's easy to get $k=2t$ and $p\mid k+1$. And then $\alpha_{i}=\gamma^{\frac{q-1}{k}i}=\gamma^{10i}$ and $$\boldsymbol{\alpha}=\left(\gamma^{10+2},\gamma^{20+2},\gamma^{30+2},\gamma^{40+2},\gamma^{50+2},\gamma^{60+2},\gamma^{70+2},\gamma^{80+2}\right).$$ The corresponding GRL code $\mathcal{C}_{3}$ has the following generator matrix
		$$\begin{pmatrix}
			1&1&1&1&1&1&1&1&1\\
			\gamma^{12}&\gamma^{22}&\gamma^{32}&\gamma^{42}&\gamma^{52}&\gamma^{62}&\gamma^{72}&\gamma^{82}&0\\ 
			\left(\gamma^{12}\right)^2&\left(\gamma^{22}\right)^2&\left(\gamma^{32}\right)^2&\left(\gamma^{42}\right)^2&\left(\gamma^{52}\right)^2&\left(\gamma^{62}\right)^2&\left(\gamma^{72}\right)^2&\left(\gamma^{82}\right)^2&0\\
			\left(\gamma^{12}\right)^3&\left(\gamma^{22}\right)^3&\left(\gamma^{32}\right)^3&\left(\gamma^{42}\right)^3&\left(\gamma^{52}\right)^3&\left(\gamma^{62}\right)^3&\left(\gamma^{72}\right)^3&\left(\gamma^{82}\right)^3&0\\
			\left(\gamma^{12}\right)^4&\left(\gamma^{22}\right)^4&\left(\gamma^{32}\right)^4&\left(\gamma^{42}\right)^4&\left(\gamma^{52}\right)^4&\left(\gamma^{62}\right)^4&\left(\gamma^{72}\right)^4&\left(\gamma^{82}\right)^4&\gamma\\
			\left(\gamma^{12}\right)^5&\left(\gamma^{22}\right)^5&\left(\gamma^{32}\right)^5&\left(\gamma^{42}\right)^5&\left(\gamma^{52}\right)^5&\left(\gamma^{62}\right)^5&\left(\gamma^{72}\right)^5&\left(\gamma^{82}\right)^5&0\\
			\left(\gamma^{12}\right)^6&\left(\gamma^{22}\right)^6&\left(\gamma^{32}\right)^6&\left(\gamma^{42}\right)^6&\left(\gamma^{52}\right)^6&\left(\gamma^{62}\right)^6&\left(\gamma^{72}\right)^6&\left(\gamma^{82}\right)^6&0\\
			\left(\gamma^{12}\right)^7&\left(\gamma^{22}\right)^7&\left(\gamma^{32}\right)^7&\left(\gamma^{42}\right)^7&\left(\gamma^{52}\right)^7&\left(\gamma^{62}\right)^7&\left(\gamma^{72}\right)^7&\left(\gamma^{82}\right)^7&0
		\end{pmatrix}.$$
		Based on the Magma program, $\mathcal{C}_{3}$ is a LCD MDS code with the parameters $[9,8,2]_{3^4}$, which is consistent with Theorem \ref{EHullMGRS1}.
		
		$(4)$ By taking $k=4$, $t=2$ and $\eta=\gamma^{24}$. It's easy to get $k=2t$ and $$\gamma^{\delta k}k+\eta^2=4\gamma^{8}+\gamma^{48}=4\gamma^{8}+\gamma^{40}\cdot\gamma^{8}=3\gamma^{8}=0.$$ 
		Thus $\alpha_{i}=\gamma^{\frac{q-1}{k}i}=\gamma^{20i}$ and $$\boldsymbol{\alpha}=\left(\gamma^{20+2},\gamma^{40+2},\gamma^{60+2},\gamma^{80+2}\right).$$ The corresponding GRL code $\mathcal{C}_{4}$ has the following generator matrix
		$$\begin{pmatrix}
			1&1&1&1&1\\
			\gamma^{22}&\gamma^{42}&\gamma^{62}&\gamma^{82}&0\\ 
			\left(\gamma^{22}\right)^2&\left(\gamma^{42}\right)^2&\left(\gamma^{62}\right)^2&\left(\gamma^{82}\right)^2&\gamma^{24}\\
			\left(\gamma^{22}\right)^3&\left(\gamma^{42}\right)^3&\left(\gamma^{62}\right)^3&\left(\gamma^{82}\right)^3&0
		\end{pmatrix}.$$
		Based on the Magma program, $\mathcal{C}_{4}$ is a LCD MDS code with the parameters $[5,4,2]_{3^4}$, which is consistent with Theorem \ref{EHullMGRS1}.

		$(5)$ By taking $k=4$, $t=2$ and $\eta=\gamma$. It's easy to get $k=2t$, $p\nmid k+1$,
		$$\gamma^{\delta k}k+\eta^2=4\gamma^{8}+\gamma^2=\gamma^{16}\in\mathbb{F}_{q}^{*},$$
		and
		$$\gamma^{\delta k}k+\frac{k\eta^2}{k+1}=4\gamma^{8}+\frac{4\gamma^2}{5}=\gamma^{77}\in\mathbb{F}_{q}^{*}.$$ Thus $\alpha_{i}=\gamma^{\frac{q-1}{k}i}=\gamma^{20i}$ and $$\boldsymbol{\alpha}=\left(\gamma^{20+2},\gamma^{40+2},\gamma^{60+2},\gamma^{80+2}\right).$$ The corresponding GRL code $\mathcal{C}_{5}$ has the following generator matrix
		$$\begin{pmatrix}
			1&1&1&1&1\\
			\gamma^{22}&\gamma^{42}&\gamma^{62}&\gamma^{82}&0\\ 
			\left(\gamma^{22}\right)^2&\left(\gamma^{42}\right)^2&\left(\gamma^{62}\right)^2&\left(\gamma^{82}\right)^2&\gamma\\
			\left(\gamma^{22}\right)^3&\left(\gamma^{42}\right)^3&\left(\gamma^{62}\right)^3&\left(\gamma^{82}\right)^3&0
		\end{pmatrix}.$$
		Based on the Magma program, $\mathcal{C}_{5}$ is a LCD NMDS code with the parameters $[5,4,2]_{3^4}$, which is consistent with Theorem \ref{EHullMGRS1}.
		
		$(6)$ By taking $k=4$, $t=3$ and $\eta=\gamma^{4}$. It's easy to get $k\neq 2t$ and $p\nmid k+1$. Thus $\alpha_{i}=\gamma^{\frac{q-1}{k}i}=\gamma^{20i}$ and $$\boldsymbol{\alpha}=\left(\gamma^{20+2},\gamma^{40+2},\gamma^{60+2},\gamma^{80+2}\right).$$ The corresponding GRL code $\mathcal{C}_{6}$ has the following generator matrix
		$$\begin{pmatrix}
			1&1&1&1&1\\
			\gamma^{22}&\gamma^{42}&\gamma^{62}&\gamma^{82}&0\\ 
			\left(\gamma^{22}\right)^2&\left(\gamma^{42}\right)^2&\left(\gamma^{62}\right)^2&\left(\gamma^{82}\right)^2&0\\
			\left(\gamma^{22}\right)^3&\left(\gamma^{42}\right)^3&\left(\gamma^{62}\right)^3&\left(\gamma^{82}\right)^3&\gamma^{4}
		\end{pmatrix}.$$
		Based on the Magma program, $\mathcal{C}_{6}$ is a LCD MDS code with the parameters $[5,4,2]_{3^4}$, which is consistent with Theorem \ref{EHullMGRS1}.
	\end{example}
	
	\subsection{An example for Theorem \ref{MGRSHhull}}
	\begin{example}\label{MGRSHhullexample}
		Let $q=9$, $\mathbb{F}_{q^2}^{*}=\langle\omega\rangle$ with $\omega^4=\omega^3+1$, $k=5$, $m=4$, $t=1$, $\eta=v_{n+1}=\omega$. By directly calculating, we know that the subgroup $U_{q+1}$ of $\mathbb{F}_{q^{2}}^{*}$ with order $q+1$ is
		$$U_{q+1}=\left\{\omega^0, \omega^8, \omega^{16}, \omega^{24}, \omega^{32}, \omega^{40}, \omega^{48}, \omega^{56}, \omega^{64}, \omega^{72}\right\}.$$
		And it is easy to get $\mathbb{F}_{q}^{*}=\langle \omega^{10}\rangle$. For convenience, we set $\gamma=\omega^{10}$, and so
		$$J_{q-1-m}=\left\{\gamma,\gamma^2,\gamma^3,\gamma^4\right\}\subseteq\mathbb{F}_{q}^{*},$$ thus
		$$I_{m}=\left\{c_{1},\ldots,c_{m}\right\}=\mathbb{F}_{q}^{*}\backslash J_{q-1-m}=\left\{\gamma^5,\gamma^6,\gamma^7,\gamma^8\right\}.$$ It's easy to know that 
		$$\sigma_{1}\left(J_{q-1-m}\right)=\gamma+\gamma^2+\gamma^3+\gamma^4=\gamma^2,$$
		$$\sigma_{2}\left(J_{q-1-m}\right)=\gamma\cdot\gamma^2+\gamma\cdot\gamma^3+\gamma\cdot\gamma^4+\gamma^2\cdot\gamma^3+\gamma^2\cdot\gamma^4+\gamma^3\cdot\gamma^4=1,$$
		$$\sigma_{3}\left(J_{q-1-m}\right)=\gamma\cdot\gamma^2\cdot\gamma^3+\gamma\cdot\gamma^2\cdot\gamma^4+\gamma\cdot\gamma^3\cdot\gamma^4+\gamma^2\cdot\gamma^3\cdot\gamma^4=\gamma^{7},$$
		$$\sigma_{4}\left(J_{q-1-m}\right)=\gamma\cdot\gamma^2\cdot\gamma^3\cdot\gamma^4=\gamma^{10},$$
		$$S_{1}\left(I_{m}\right)=\gamma^5+\gamma^6+\gamma^7+\gamma^8=\gamma^6,$$
		$$S_{2}\left(I_{m}\right)=\sum\limits_{t_{1}+t_{2}+t_{3}+t_{4}=2,t_{i}\geq 0}\gamma^{5t_{1}+6t_{2}+7t_{3}+8t_{4}}=1,$$
		$$S_{3}\left(I_{m}\right)=\sum\limits_{t_{1}+t_{2}+t_{3}+t_{4}=3,t_{i}\geq 0}\gamma^{5t_{1}+6t_{2}+7t_{3}+8t_{4}}=\gamma^{3},$$
		$$S_{4}\left(I_{m}\right)=\sum\limits_{t_{1}+t_{2}+t_{3}+t_{4}=4,t_{i}\geq 0}\gamma^{5t_{1}+6t_{2}+7t_{3}+8t_{4}}=\gamma^{2},$$
		Now by taking $I_{m}=\left\{c_{1},c_{2},\ldots,c_{m}\right\}=\left\{\gamma^5,\gamma^6,\gamma^7,\gamma^8 \right\}$ which satisfies
		$$S_{i}\left(I_{m}\right)\neq 0(i=1,\ldots,k-\delta).$$
		Furthermore, by taking
		$\beta_{j}\in\mathbb{F}_{q^2}^{*}(j=1,\ldots,m)$ with $c_{j}=\beta_{j}^{q+1}\in\mathbb{F}_{q}^{*}$, we have
		$$\left(\beta_{1},\beta_{2},\beta_{3},\beta_{4}\right)=\left(\omega^{5},\omega^{6},\omega^{7},\omega^{8}\right).$$
		And so, we can set $\boldsymbol{\alpha}=\left(\alpha_{1},\ldots,\alpha_{m(q+1)}\right)$ with
		\begin{equation}\label{alphai}
			\left\{\alpha_{1},\ldots,\alpha_{m(q+1)}\right\}=\bigcup\limits_{j=1}^{m}\beta_{j}\cdot U_{q+1}=\omega^{5}\cdot U_{q+1}\cup\omega^{6}\cdot U_{q+1}\cup\omega^{7}\cdot U_{q+1}\cup\omega^{8}\cdot U_{q+1}.
		\end{equation} 
		Now, by directly calculating $u_{j}=\prod\limits_{r=1, j \neq r}^{m}\left(c_{j}-c_{r}\right)^{-1}(j=1,2,3,4)$, we can get
		$$\left(u_{1},u_{2},u_{3},u_{4}\right)=\left(\gamma^{0},\gamma^{0},\gamma^{3},\gamma^{1}\right).$$ 
		Furthermore, by directly calculating $v_{(i-1)(q+1)+s}^{1+q}=u_{j}c_{j}^{m-\delta }\in\mathbb{F}_{q}^{*}(s=1,2,\ldots,q+1)$ with $\delta\in\left\{1,2,3\right\}$, respectively, we have the following three statements.
		
		$(1)$ If $\delta=1$, then
		\begin{equation}\label{vi1}
			\begin{aligned}
				&\left\{v_{1},\ldots,v_{q+1},\ldots,v_{(m-1)(q+1)+1},\ldots,v_{m(q+1)}\right\}\\
				=&\left\{\omega^{15},\ldots,\omega^{15},\omega^{18},\ldots,\omega^{18},\omega^{24},\ldots,\omega^{24},\omega^{25},\ldots,\omega^{25}\right\};
			\end{aligned}
		\end{equation}
		
		$(2)$ If $\delta=2$, then
		\begin{equation}\label{vi2}
			\begin{aligned}
				&\left\{v_{1},\ldots,v_{q+1},\ldots,v_{(m-1)(q+1)+1},\ldots,v_{m(q+1)}\right\}\\
				=&\left\{\omega^{10},\ldots,\omega^{10},\omega^{12},\ldots,\omega^{12},\omega^{17},\ldots,\omega^{17},\omega^{17},\ldots,\omega^{17}\right\};
			\end{aligned}
		\end{equation}  
		
		$(3)$ If $\delta=3$, then
		\begin{equation}\label{vi3}
			\begin{aligned}
				&\left\{v_{1},\ldots,v_{q+1},\ldots,v_{(m-1)(q+1)+1},\ldots,v_{m(q+1)}\right\}\\
				=&\left\{\omega^{5},\ldots,\omega^{5},\omega^{6},\ldots,\omega^{6},\omega^{10},\ldots,\omega^{10},\omega^{9},\ldots,\omega^{9}\right\}.
			\end{aligned}
		\end{equation}
		By substituting \eqref{alphai} and \eqref{vi1}, or \eqref{alphai} and \eqref{vi2}, or \eqref{alphai} and \eqref{vi3} into \eqref{Gvkt}, respectively, we can get the corresponding generator matrix(Omitted due to space constraints.). Basing on the Magma program, we have 
		$$	\boldsymbol{G}_{\boldsymbol{v},k,t}\boldsymbol{G}_{\boldsymbol{v},k,t}^{T_{h}}=\begin{pmatrix}
			\omega^{20}&\omega^{19}&0&0&0\\
			\omega^{11}&0&0&0&0\\
			0&0&1&0&0\\
			0&0&0&\omega^{30}&0\\
			0&0&0&0&\omega^{20}
		\end{pmatrix}$$
		or 
		$$	\boldsymbol{G}_{\boldsymbol{v},k,t}\boldsymbol{G}_{\boldsymbol{v},k,t}^{T_{h}}=\begin{pmatrix}
			\omega^{10}&\omega^{19}&0&0&0\\
			\omega^{11}&\omega^{70}&0&0&0\\
			0&0&\omega^{60}&0&0\\
			0&0&0&1&0\\
			0&0&0&0&\omega^{30}
		\end{pmatrix},$$
		or
		$$	\boldsymbol{G}_{\boldsymbol{v},k,t}\boldsymbol{G}_{\boldsymbol{v},k,t}^{T_{h}}=\begin{pmatrix}
			\omega^{10}&\omega^{19}&0&0&0\\
			\omega^{11}&\omega^{20}&0&0&0\\
			0&0&1&0&0\\
			0&0&0&\omega^{60}&0\\
			0&0&0&0&1
		\end{pmatrix},$$
		which are consistent with {\bf Case 1}, or {\bf Case 2.2}, or {\bf Case 2.1} of Theorem \ref{MGRSHhull}, respectively. 
		And the corresponding two classes of MGRS code are LCD NMDS with the parameters $[41,5,36]_{3^4}$, which are consistent with Theorem \ref{MGRSHhull}.
	\end{example}
	
	\section{Conclusions}\label{sec9} 
	In this paper, we focus on the MGRS code and its extended code. The main contributions are summarized as follows.
	\begin{itemize}
		\item Prove two classes of MGRS codes are either MDS or NMDS, present the necessary and sufficient conditions for these codes to be NMDS, and completely determine the weight distributions for a special class of NMDS MGRS codes.
		\item Prove two classes of EMGRS codes are either MDS or NMDS, present the necessary and sufficient conditions for these codes to be NMDS, and completely determine the weight distributions for a special class of NMDS EMGRS codes.
		\item Construct four classes of MGRS codes, which are either LCD or one-dimensional Euclidean hull codes.
		\item Construct a class of MGRS codes with flexible Hermitian hull dimensions and lengthss. 
		\item Prove the linearly
		inequivalence of NMDS MGRS codes and the linear code $\mathcal{C}(P, G, E)$ of elliptic-curve type from an
		elliptic curve $E$, where $P=\left\{P_{1},\ldots,P_{n}\right\},$ $G = kP_{0}.$
	\end{itemize}

\end{document}